%% file: renesis_arxiv_v17.tex
\documentclass[journal]{IEEEtran}

\usepackage{amsmath,amssymb,amsthm}
\usepackage{booktabs}
\usepackage{graphicx}
\usepackage{url}
\usepackage{algorithm}
\usepackage{algpseudocode}
\usepackage[siunitx,americanvoltages]{circuitikz}
\usetikzlibrary{calc,positioning,arrows.meta,shapes.gates.logic.US,fit}
\usepackage{balance}

\newtheorem{lemma}{Lemma}
\newtheorem{corollary}{Corollary}
\newtheorem{proposition}{Proposition}
\theoremstyle{definition}

\theoremstyle{remark}
\newtheorem{remark}{Remark}

\newcommand{\F}{\mathbb{F}_2}
\newcommand{\Hinf}{H_\infty}

\newcommand{\tool}{\textsc{Renesis}}

\begin{document}

\title{\tool: Energy-Aware Synthesis\\ of Adiabatic Logic from Irreversible Netlists}

\author{Mitchell~A.~Thornton,~\IEEEmembership{Senior Member,~IEEE}%
\thanks{M.~A.~Thornton is with the Darwin Deason Institute for Cyber Security
and the Department of Electrical and Computer Engineering, Southern Methodist
University, Dallas, TX 75275 USA (e-mail: mitch@smu.edu).}%
\thanks{August 15, 2026}%
\thanks{\tool\ is released as open source.  The source, the benchmark
netlists, the validation procedure, and the run records behind every number
reported here are available at
\protect\url{https://github.com/mitch-thornton/renesis-low-energy-circuit-synthesis}.}}

\maketitle

\begin{abstract}
We describe \tool, an automated synthesis tool that accepts an ordinary
irreversible netlist and produces a verified, technology-mapped
energy-recovery (adiabatic) circuit, with energy rather than area or delay
as the optimization criterion.  \tool~models the netlist under consideration
with a vector-space formulation that enables simulation and justification sweeps
as forward and reverse netlist traversals that are linear with respect to 
the number of circuit components.  The netlist traversals 
populate ledgers with data `tags' that characterize switching, erasure, and
observability information at their natural R\'enyi orders. \tool~output is in the form
of a logically reversible circuit that is mapped to one of eight different
energy-recovery families with associated parameters included.  Every synthesis
transformation is equivalence-checked and must improve one of two reported
cost tables while worsening neither before it is accepted.  Across a twenty-circuit development
set, optional re-synthesis passes are demonstrated to improve fourteen circuits, and on a
held-out set of twenty circuits, fifteen of
nineteen are improved with a best-arm median of $0.91$ of the default energy resulting.
A certified optimality-gap program computes the distance between the synthesized
circuits and the provable floor of the tool's own search space.  
\end{abstract}

\begin{IEEEkeywords}
Adiabatic logic, energy-recovery logic, low-power synthesis, technology
mapping, switched capacitance, Reed--Muller expansion, switching activity,
Landauer's principle, reversible logic.
\end{IEEEkeywords}

\section{Introduction}
\label{sec:intro}

Ultra-low-energy computing is of widespread interest, particularly for pervasive processors,
wireless sensors, and other devices that run from batteries or harvest their
energy from ambient sources.  Most devices of this category are designed 
and implemented using conventional CMOS technology that expends energy 
in a structurally avoidable way where a logic
transition charges a node capacitance from the supply, and that stored
energy is discarded as heat when the node discharges during a logic-level transition.  
A circuit style known as adiabatic, or energy-recovery, logic changes the supply's shape
rather than the logic's function in a way such that voltages ramp gradually and the 
corresponding drop across any conducting path is kept small throughout the transition. 
In this way, the charge is steered back into the supply at the end of a cycle instead of being
dumped to ground~\cite{athas1994adiabatic,koller1992adiabatic}.  The
motivation for this circuit style is potentially very large energy efficiency.  The cost is that
the circuitry is considerably less convenient than typical CMOS designs because adiabatic
families generally require multi-phase ramped power clocks, careful timing
discipline, and often require additional devices and interconnect resources 
as well as more area. It is also the case that losses within the power-clock 
generator itself must be carefully considered and controlled to determine whether a
system-level saving survives.  Conventional high-performance CMOS has the benefit of
five decades of use in terms of optimizing for nearly the opposite objective, 
high-performance through hard and fast switching where the charge is dissipated
largely through heat.  Thus, a fully adiabatic design
methodology differs from a standard-cell flow in more ways than through
simply specifying a new cell library during the technology mapping process.

Constructing adiabatic circuits so that they are also \emph{logically reversible} 
addresses the information-erasure contribution to energy
loss, and the reason for this form of circuit deserves more attention 
than it usually receives, because the justification for this choice differs
from the usual thermodynamic information-erasure arguments usually cited.
Circuit designers proficient in low-power methods will note that CMOS 
switching energy is approximately six orders of magnitude higher than the 
Landauer bound, and will likely conclude that the cost of information erasure 
does not matter.  When an adiabatic logic gate erases information, however, 
the penalty is not $k_BT\ln 2$.  It is a full non-adiabatic $CV^2$ discharge
event.  Energy recovery requires steering charge back out along a conducting
path, and no such path exists unless the state is un-computed.  Logical
irreversibility therefore costs switching-scale energy rather than
Landauer-scale energy.  Reversibility in this flow is not an appeal to a
distant thermodynamic limit.  It is the structural precondition for the
$CV^2\cdot(RC/T)$ recovery scaling to continue rather than terminate at an
irreversibility floor.  Landauer's bound locates where that floor would lie
if everything else were perfect.  The practical value of reversibility
therefore appears six orders of magnitude sooner.

The tool described here makes the issue of logical reversibility one of
practicality rather than an ideological goal.  Reversible circuits incur
a measurable price in terms of required ancilla lines, garbage outputs, 
and un-compute overhead, and the \tool~flow prices that overhead
against the energy recovery gain per netlist, under an energy model, with the
result reported rather than assumed.  \tool~was designed to ledger, update,
and account for energy pricing at each optimization stage rather than
simply choosing a logically reversible target based upon general principles.

We selected reversible adiabatic CMOS as the design style for low-energy
applications and reviewed the literature for tool support
(Section~\ref{sec:gap}).  That review located few automated synthesis tools
for the style.  None of the tools we located used an optimization criterion
based on energy itself.  
Furthermore, it is often the case that an existing irreversible design 
must be converted to an ultra-low-energy target, since designs are 
rarely generated in reversible form from the start.  For these reasons we 
are motivated to build \tool, a logic synthesis tool that receives an 
irreversible netlist as input and produces a validated, technology-mapped 
reversible adiabatic circuit in any of eight different energy-recovery 
families, where the optimization criterion at every stage is energy 
rather than the customary area or performance.

The application of such circuits is unlikely to be a general-purpose
processor.  Adiabatic operation trades speed for energy roughly one for one,
so the deployments that suit it are highly repetitive computational kernels
where the hardware stays busy and the recovery is amortized over large
operation counts.  Examples include 
checksum, dispatch, and table-lookup logic of persistent sensing workloads, 
and the data-plane portions of accelerator architectures in which conventional
logic handles control while adiabatic logic carries the repetitive
arithmetic~\cite{vaire2025whitepaper}.  The set of development benchmarks
described in Section~\ref{sec:results} includes two dense substitution-table circuits
for exactly this reason.

\tool~makes several contributions with regard to ultra-low-power design methods
and electronic design automation (EDA). First, an energy-accounting
model is described that accounts for and keeps switching and erasure 
estimates separate while reporting both, and directs optimization 
goals using the term that dominates by orders of
magnitude which is the CMOS switching energy at present 
technology nodes. We justify the account of why these two
mechanisms are different via the use of R\'enyi entropy orders in
Section~\ref{sec:background} and use these metrics throughout our
description of the tool.  Second, we describe a synthesis
flow where every accepted transformation is equivalence-checked against
the original netlist and must improve one of two reported cost tables
while worsening neither, so that the flow cannot return a circuit worse
in terms of energy metrics than its input with regard to either reported figure.
The synthesis tool is built upon a foundation of the vector-space transfer-function 
formulation as described in~\cite{thornton_vsim_book} that simultaneously enables efficient
ledgering, information-theoretic, and switching function characterization
while also supporting an automated flow.  Third,
a development and validation methodology, as described in
Section~\ref{sec:development}, resulting in a tool that exists as two complete
implementations held to byte-level agreement, and under which failures,
refusals, and negative results are retained as reportable artifacts.
The development methodology applies to other software projects of
comparable size.  Fourth, we provide 
measured results using a twenty-circuit development set and a twenty-circuit held-out
set whose circuits no development decision touched. We also include a nine-family
cross-technology study, and a certified optimality-gap program that brackets
the distance between shipped circuits and the floor of the tool's own
search space (Section~\ref{sec:results}).

\subsection{Contributions and Related Work}
\label{sec:gap}

Our review of the literature did not reveal an automated synthesis
flow that transforms an irreversible circuit netlist into a reversible or
adiabatic form with energy as the principal optimization criterion.
We did find three bodies of work that are adjacent to this one that
optimizes criteria other than energy while targeting an 
energy-recovery technology.  However, reversible-logic synthesis is 
a fairly mature field with respect to different objectives.

There is a substantial and actively maintained body of work regarding
open tooling synthesis of reversible circuits that includes 
transformation-, decomposition-, BDD-, KFDD- and ESOP-based methods.
Approaches include exact synthesis, reversible pebbling and
compute--uncompute compilation resulting in tools such as Caterpillar \cite{soeken2022epfl}\cite{meuli2019pebbling}, 
RevKit \cite{soeken2012revkit},
Tweedledum \cite{schmitt2022tweedledum}, 
MQT SyReC \cite{adarsh2022syrec}\cite{wille2010syrec}, 
ReVerC \cite{amy2017reverc} and others.  The objectives 
of these approaches and applications include one or more of gate count,
various forms of quantum cost, $T$-count (total number of $T$-gates in a decomposed quantum circuit), 
reversible circuit lines, ancilla count, and other metrics
based on quantum-computing targets.  Among the tools we examined we
did not find one focused solely on switched capacitance as the technology target where
charge is recovered rather than dissipated, nor an objective in which
energy appears directly.  

A circuit that is efficient in quantum cost does not generally
result in one that is inexpensive in terms of overall transported charge when mapped
to a conventional electronic circuit. Indeed, one of the results reported here is 
that, of the circuits we measured, low quantum cost can result in an energy-expensive
circuit with respect to switched capacitance. Likewise, an electronic circuit with
relatively low switched capacitance can result in high quantum cost
when mapped from a classical electronic technology to a quantum technology.
Moreover, our analyses indicates that 
the various metrics based on quantum technology are poor proxies for
grading the energy efficiency of a conventional electronic circuit.
We therefore treat these tools as producers of a common
technology-independent model, a logically reversible abstraction, rather
than as competitors or substitutes for our objective.  

Adiabatic circuit design appears to be largely accomplished via manual layout
at the time of this writing.  The adiabatic and energy-recovery literature we 
surveyed is predominantly found in the circuit-design literature where families 
are proposed and simulated at the transistor level using hand-built blocks 
such as inverter chains, full adders, small multipliers, and cipher rounds.
Typically, these designs are compared to to static CMOS implementation
in terms of energy and dissipated power over some frequency range.  
A 2019 survey characterizes this position directly, observing that the ``majority of the
work in adiabatic circuits is implemented as a full custom design'' and
that ``robust EDA methodologies are needed, which remains as an open
problem''~\cite{dhananjay2019iscas}.  The published design flows we
examined generally proceed by schematic capture, manual layout, parasitic
extraction, and circuit simulation, with no automated synthesis,
technology-mapping, or place-and-route processing.

The automated adiabatic flows we did find tend to optimize structural proxies to energy.
Zulehner, Frank and Wille~\cite{zulehner2019adiabatic} give the one of the most
complete automated flows that we could locate. They describe a two-stage method that realizes a
function against a gate library and then maps the result to an adiabatic
circuit comprising transmission gates together with a power-clock schedule for
a 2LAL target that is then evaluated using ISCAS and IWLS benchmarks. The reported metrics
include transmission-gate count, number of power clocks, and runtime.
Their approach builds upon earlier work exploiting reversible design for adiabatic
implementation in~\cite{rauchenecker2017mixdes}.  Morrison and
Ranganathan~\cite{morrison2014tcad} synthesize dual-rail adiabatic logic,
the same target class as ours, for resistance to differential-power-analysis 
rather than for energy, so their objective focuses on balancing the
charge drawn rather than minimizing its amount.  Clark, Raffel and
Thapliyal~\cite{clark2024isvlsi} report the generation of
dual-rail adiabatic gates from a BDD specification, again with a
security objective.  Earlier, Ye~\cite{ye1997thesis} proposed an
XOR-based decomposition diagram aimed at adiabatic implementation, and
H\"anninen \emph{et al.}~\cite{hanninen2014adiabatic} outline
how adiabatic circuit generation might be inserted into a conventional
standard-cell design flow.  We were not able to locate a public source release
for any of these, which we admit could be more of a statement about our searching
method and results rather than one about the actual
availability of the corresponding tools or prototypes.

A significant body of work exists for switching-aware synthesis, 
but this is a different cost model than that we consider here.
Power-driven technology mapping is well established in the conventional
regime, founded by Tsui \emph{et al.}~\cite{tsui1993dac} and Tiwari
\emph{et al.}~\cite{tiwari1993dac} and supported by a mature body of literature 
based on activity estimation~\cite{najm1994survey}. The popular Berkeley ABC 
tool implements a switching-activity-aware cut selection in its
mappers~\cite{mishchenko2011power}, and we used it as a
conventional reference point during the implementation and design of~\tool.  
These approaches optimize $\alpha CV^2f$, where a node costs 
energy only when its value changes.  In comparison, energy-recovery
logic charges every node from the power clock on every cycle whether or
not its value changes or ``switches,'' so activity $\alpha$ enters the cost function differently
in these approaches and in our measurements, it has materially less leverage
as shown in Section~\ref{sec:results}.  Transplanting a conventional low-power
objective into an adiabatic target therefore appears not to be a suitable substitute 
and we treat it as something to be measured rather than assumed.

For completeness, we also note that the adiabatic technology with the
most developed open tooling that we could find is not based on CMOS.  The
``adiabatic
quantum-flux-parametron'' (AQFP) logic style has an actively maintained open-source
synthesis and buffer-insertion toolchain together with a shared benchmark
repository~\cite{bairamkulov2024date}.  AQFP is a superconducting technology
that is usually based on
majority-gate logic rather than a dual-rail CMOS-based transistor technology, and 
it is priced in terms of the number of Josephson junctions required 
rather than switched capacitance, so we do not treat it as a baseline. We
mention AQFP because its practice, an explicit per-cell cost model and a
public repository with stored results, is the standard we have tried to
meet.

Against this background we state our contribution in as specific terms as possible.  
The quantity minimized in \tool~is the switched capacitance that sets the energy 
requirements of a circuit rather than using a
structural proxy for it. This objective is carried consistently from
the initial covering stage through the technology mapping phase and the
resulting flow is a single released tool rather than a collection of
per-experiment results.  

\subsection{Organization}

Section~\ref{sec:background} reviews the concepts the rest of the paper
assumes.  These are energy dissipation in conventional and adiabatic CMOS
and its relation to the R\'enyi entropy family, the switching-theoretic
representations the tool manipulates, the synthesis and EDA machinery it is
built from, and the vector-space netlist model underlying all of it.
Section~\ref{sec:renesis} describes the architecture of \tool~and its underlying
algorithms, with the formal results placed next to the mechanisms they
support and a worked example closing each stage.
Section~\ref{sec:development} describes how the tool was developed and
validated with the aid of an AI agent.  Section~\ref{sec:results} reports
the measured performance of \tool.  Section~\ref{sec:discussion} interprets
these results.
\section{Background and Basis}
\label{sec:background}

This section reviews the results and concepts used in the remainder of the
paper, so that the report is self-contained.  Most of the content is
established material cited to its source, or a connection between
established materials stated once here so that later sections can depend on
it.  Readers who work in adiabatic circuit design may skim
Section~\ref{sec:adiabatic-bg}.  Readers who work in logic synthesis will
find it worth a closer reading since the cost model behaves differently
in comparison to those used in conventional low-power synthesis since the early
nineties.

\subsection{Adiabatic and energy-recovery logic}
\label{sec:adiabatic-bg}

Conventional CMOS wastes energy for a structural reason rather than an
unavoidable one.  When a gate output rises, charge flows from a supply
held at $V$ (Vdd) into a load connected to $0$ (GND).  The full supply
voltage appears across this conducting channel at the start of the
transition, and the energy dissipated in that channel over the whole
transient is
\begin{equation}
E_{\text{conv}} = \tfrac{1}{2} C V^{2}
\label{eq:cv2}
\end{equation}
regardless of how the channel is sized or how slowly the input arrives.
On the falling edge, the stored charge is dumped to ground with a
portion of it being lost to heat, so the supply recovers none of it.  Over a circuit
switching at frequency $f$ with per-node activity factors $\alpha_i$, the
average dynamic power is $P_{\text{conv}} = f V^{2}\sum_i \alpha_i C_i$.
This is why \emph{switched capacitance}, $C_{\text{sw}}=\sum_i\alpha_i
C_i$, became the standard target of low-power synthesis.  At fixed $V$ and
$f$ it is the only factor a synthesis tool can influence.  Two properties of
\eqref{eq:cv2} deserve emphasis because neither survives into the adiabatic
case.  The energy does not depend on how fast the transition happens, and it
does not depend on the resistance of the path that carries the charge.

Adiabatic switching removes the large instantaneous voltage
drop~\cite{athas1994adiabatic,koller1992adiabatic}.  The stiff supply is
replaced by a \emph{power clock} that ramps slowly, so the load voltage
tracks the source and the drop across the conducting path stays small at
every instant.  Charging $C$ through a resistance $R$ with a linear ramp
of duration $T$ dissipates
\begin{equation}
E_{\text{ad}}(T) \;=\; \frac{RC}{T}\, C V^{2}
\qquad (T \gg RC),
\label{eq:adiabatic}
\end{equation}
and on the reverse ramp the charge flows back into the clock rather than
to ground, so the system recovers rather than dissipates.  The dissipated
energy per transition can be made small by slowing the ramp, and there is no
$\tfrac12 CV^2$ floor.  The name is aspirational rather than exact.
Practical families are quasi-adiabatic.  Threshold drops, leakage, and
finite clock quality leave a residue, and combining the corrections gives
the per-cycle behavior of a real adiabatic node as
\begin{equation}
E(T) \;\approx\; \underbrace{\frac{RC}{T}CV^{2}}_{\text{adiabatic}}
      \;+\; \underbrace{E_{\text{nonadiab}}}_{\text{residue, flat}}
      \;+\; \underbrace{P_{\text{leak}}\,T}_{\text{grows with }T}.
\label{eq:total}
\end{equation}
Equation~\eqref{eq:total} has a methodological consequence that is easy to
state and easy to violate.  An energy figure for an adiabatic circuit is not
meaningful unless the throughput at which it was obtained is also given,
since any adiabatic circuit can be made to look better by slowing it
down.  Comparisons in this paper are made at matched
convention, and where a curve matters we report the curve
(Section~\ref{sec:spice}).

Three practical costs accompany this technique and all three shape 
the objectives of a synthesis tool designed to address it. Energy is 
traded for time, essentially one for one in the adiabatic regime.  
The constant-voltage DC supply is replaced by a multi-phase 
clock with slowly rising and falling, or ``ramped'' pulse waveform, 
typically with two, four, or eight phases, which must itself be generated 
efficiently or the energy recovery of an efficiently-designed processing circuit
is lost upstream. Multi-phase clock generation and distribution is 
outside this paper's scope, and we only count the charge that the 
processing logic uses from the ``power-clock.''  And, as previously 
mentioned, most families are \emph{dual-rail} in design, meaning that every
signal travels as a complementary pair, so that each output has a defined
charge and recovery path.  Dual-rail circuitry roughly
doubles the device count in comparison to a single-ended static 
implementation, and the pull networks are pass-transistor structures 
rather than restoring CMOS stacks.  This is the first place where 
intuition transfers badly from conventional synthesis because 
overall device count does not dominate the cost. Consequently,
a transformation that adds devices while reducing the transported
charge lowers the cost, which runs against the intuition carried over from
conventional static CMOS design.

Within the transmission-gate families this work targets, a logic node is
realized as two complementary pass networks between a power-clock phase
and the node's two rails.  Each literal of the node's function contributes
one pass device, whose gate terminal presents a capacitance to whatever
signal drives it. Thus, in each cycle, a node charges the gate capacitance of every
pass device it feeds.  From this point of view, the energy-relevant quantity 
can be considered as roughly equivalent to the
number of \emph{literal occurrences} in the mapped network and how many times
each signal is used as a control, rather than the number of gates or the circuit
depth.  Two mappings of the same function with identical gate counts can
differ substantially in energy, and a mapping with more gates can be
less expensive using the currency of energy. Section~\ref{sec:ledger} 
describes how we use this observation as a formal property of the
cost functional and Section~\ref{sec:results} measures it repeatedly.

The inherent series composition of the target circuit style results in
a corresponding realizability constraint.  A chain of $d$
pass devices has resistance of roughly $d\,R_{\text{on}}$ and degrades
the signal it conducts, so the pass-transistor design style bounds the chain
depth, typically at two to four devices.  We treat that bound as a design
rule rather than as a cost term, for a reason developed in
Section~\ref{sec:ledger}.

Finally, we reiterate that adiabatic operation and logical reversibility are separate ideas
although they are often introduced together.  Landauer's bound concerns the
erasure of information and is operable at a higher-level abstraction 
providing a statement about information and the logic used to 
process it, or to convert it from one form to another.  Adiabatic charging
concerns the transport of electrical charge and is a statement at a lower
level of abstraction about the circuits themselves.  The two abstractions
connect at the margin described in the introduction.  A family that must
return its nodes to a known state in order to recover charge is essentially performing
an un-compute step from an information-theoretic point of view, and that
step motivates the logic it realizes to be reversible.  The fully static,
fully adiabatic families make this motivation become a structural one.  S2LAL requires one clock
cycle per logic stage and an eight-cycle minimum period in order to schedule
the compute and un-compute phases~\cite{frank2020s2lal}.

\subsubsection{The Supported Target Families}
\label{sec:families}

The technology mapper in \tool~targets eight published energy-recovery families through one
common engine.  These families differ in the shape of the pull network, the number
of power-clock phases, the device overhead per gate, and, most
consequentially for energy, whether or not they leave a non-adiabatic residue.
A ninth target, a structurally mapped NOR network is also supported, 
and is only carried as a comparison baseline and is not an energy-recovery family.  
The per-family parameters are exactly those in the released tool's technology files, so
a reader can reproduce any figure in this paper from the same numbers the
mapper used or they may change the parameters to any other set they desire.

\IfFileExists{figures/fam_cells_owner.tex}{\input{figures/fam_cells_owner}}{}

The reference target of \tool\ is the transmission-gate family (TG), dual-rail
series-parallel networks of full CMOS transmission gates carrying charge
between a four-phase power clock and the two output rails.  Because a
transmission gate passes both polarities without a threshold drop, the
family leaves no residue, and it carries no per-gate device overhead,
which makes it the cleanest family in which to measure the cost of a 
particular synthesis decision.  Its disadvantage is that each literal requires 
a NMOS/PMOS pair.  It descends directly from the adiabatic-switching formulation of
Athas \emph{et al.}~\cite{athas1994adiabatic}.  ECRL~\cite{moon1996ecrl}
replaces the transmission gates with NMOS-only function networks under a
cross-coupled PMOS latch where a literal costs one device rather than two, at
the price of a hard threshold residue of roughly $(V_{tp}/V)^2$ per cycle
that no ramp time removes.  PFAL~\cite{vetuli1996pfal} places the
function networks in parallel with the latch devices, removing ECRL's
residue at the largest per-gate overhead of the four-phase families.
PAL~\cite{oklobdzija1997pal} runs dual-rail NMOS pass networks
from a two-phase clock, halving the clock-distribution burden.
CAL~\cite{maksimovic1995cal} adds an auxiliary clock that lets a
pipeline stall without losing recovered charge, at the largest overhead
in the set and an ECRL-class residue.  2LAL~\cite{aghoram2004twolal}
is transmission-gate pass logic in which signals are pipelined and where a value
is valid for a bounded window, so consumers more than one level
downstream require explicit relay stages, and logic depth translates
directly into hardware.  S2LAL~\cite{frank2020s2lal} is its fully static,
fully adiabatic complement, requiring an eight-phase clock and roughly
double the devices per stage.  SPGAL is a transmission-gate family with
symmetric, energy-balanced charging designed for power-analysis
resistance, from the same line of work as
EE-SPFAL~\cite{kumar2019eespfal}.\footnote{The survey underlying this
section could not locate a citable origin paper for SPGAL specifically.}

\subsubsection{Mapped Technology Families}
\label{sec:family-records}

Figure~\ref{fig:family-cells} illustrates one representative cell per family, and
readers comparing the eight panels will notice that several of them
coincide.  This point is highlighted because at the individual cell level,
some of these families are the same circuit.  Cells (g) and (h) have identical
topology, a cross-coupled PMOS pair loading a dual-rail pass network, and
separated only by what is inside the network and by how many devices
the latch carries.  Cells (e) and (g) differ only in where the network
terminates, at ground in one case and at the dual-rail inputs in the other.

Where the families separate is the parameter record, and since those
records ship as data rather than as code, we print them in
Table~\ref{tab:famrecords} exactly as the mapper reads them.

\begin{table}[t]
\caption{The eight family records as they ship, read from
\texttt{config/technology/} of the released tool.  $\phi$ is the phase
count, \emph{ovhd} the per-gate overhead devices, \emph{self} the self-load
devices on the cell's own output rails, \emph{clk} the clock-driven
devices, and \emph{dev} the device type of the function network, N for nMOS
pass devices and T for full transmission gates, with its input capacitance
beside it.  The starred row is the shipped default.}
\label{tab:famrecords}
\centering
\small
\input{tab_famrecords}
\end{table}

Two readings of the table are important in terms of the results reported 
in a following Section of this paper.  The
first reading is that the table separates families the schematic could not.
\textsc{SPGAL} carries twice \textsc{PAL}'s latch overhead and a function
network of transmission gates rather than NMOS pass devices, which is a
factor of two in capacitance per device, so the two price very differently
even though they draw identically.

The second reading is less obvious and we state it rather than have it
inferred from a results table.  \textsc{ECRL} and \textsc{PAL} have
\emph{identical} records in every field the switched-capacitance model
reads: the same overhead, the same self-load, the same NMOS device at the
same capacitance.  They differ in phase count and in the non-adiabatic
residue, and the metric of Sec.~\ref{sec:ledger} sees neither, because a
count of transported charge is blind to how many power-clock rails deliver
it and blind to a floor that does not scale with the ramp.  The
consequence appears directly in Sec.~\ref{sec:families-results}, where the
two families return the same figure to three decimal places on every
circuit, and it is the reason the win column in that table should be read
as a tie rather than as a preference.  A designer choosing between
\textsc{ECRL} and \textsc{PAL} on this evidence is choosing on the
strength of the residue and the clock budget, which our tables do not
price, rather than on the numbers where they do matter.  Section~\ref{sec:spice} is the
instrument that begins to close that gap, and the family-by-family
validation it enables provides this information.

Two further observations cut across the set of families and drive the experiments of
Section~\ref{sec:results}.  The families do not agree on what is
expensive.  ECRL and CAL pay a fixed residue and provide inexpensive literals, while TG,
PFAL, 2LAL, and S2LAL pay expensive literals and no residue, so a
transformation that trades literal count against gate count need not have
the same sign in every family.  Whether the re-synthesis passes are
family-independent is therefore an empirical question, and one this paper
does not answer.  The family studies of Section~\ref{sec:families-results}
run the default configuration across all nine targets, and the pass studies
run every arm against the release family, so no pass is observed under
another family anywhere in our measurements.  We say what was measured and
mark the cross-cases as not run.
The residue also changes the meaning of a ramp-time sweep.  A residue-free
family's $E(T)$ curve falls as $1/T$ indefinitely, while ECRL's and CAL's
flatten onto the residue, a difference Section~\ref{sec:spice} uses as a
check on the simulation itself.

\subsection{R\'enyi Entropy and Energy}
\label{sec:renyi-bg}

For a random variable $Y$ with distribution $\{p_y\}$, the R\'enyi
entropy of order $\alpha\ge0$, $\alpha\ne1$, is\footnote{Here we
use $\alpha$ as a parameter for R\'enyi entropy order in keeping with convention and we caution
the reader that $\alpha$ does not refer to its previous use where it 
denotes switching activity.}
\begin{equation}
H_\alpha(Y)\;=\;\frac{1}{1-\alpha}\,\log_2\!\sum_y p_y^{\,\alpha},
\label{eq:renyidef}
\end{equation}
with $H_1$ defined by the limit $\alpha\to1$.  It is a one-parameter
family of entropies rather than a single quantity, and the parameter
selects how heavily the most likely outcomes are weighted.  Four members
are used in this paper.  The max-entropy $H_0$ is the logarithm of the
number of outcomes with $p_y>0$ and ignores the probabilities entirely.
Shannon information entropy $H_1$ is the average surprise and is the 
same quantity as that used in
Landauer's principle.  The collision entropy $H_2 = -\log_2\Pr[\text{two
independent draws agree}]$ is algebraically the most tractable member.
The min-entropy $\Hinf = -\log_2\max_y p_y$ is governed by the single
most likely outcome.  The family is non-increasing in $\alpha$, so
$H_0\ge H_1\ge H_2\ge\Hinf$, and that ordering is not a technicality
here, because it is what makes the costs below comparable.

Landauer's principle, in its ensemble-average form, states that erasing a
logical bit whose prior carries Shannon entropy $H_1$ dissipates at least
$k_BT\ln2\cdot H_1$ into the
environment~\cite{landauer1961,reeb2014landauer}.  At $T=300$\,K the
per-bit constant evaluates to approximately $2.87\times10^{-21}$\,J.  Two
qualifications matter for a tool that intends to report the quantity.
It is a bound on an average over the prior distribution, not a guarantee
about any single erasure event.  Statements about a single shot require
the smooth min- and max-entropies rather than $H_1$, and carry correction
terms that vanish only in the many-copy limit~\cite{reeb2014landauer}.
It is our view that a tool that reports a per-run erasure figure computed from $H_1$ is
reporting an ensemble quantity, and \tool\ follows that viewpoint.  The
bound concerns information that is destroyed, which, in turn, depends on what the
circuit is required to produce rather than on how it is written. For this reason, the
erasure ledger is a property of the specification together with its
boundary and must be computed rather than assumed.
Bennett~\cite{bennett1973} showed that a computation performing no
erasure has no such bound, which is the observation the reversible
circuit design approach descends from.

It is constructive to consider how large the information erasure term is, in Joules, at present technology nodes.  
We compute this value.  One transmission gate
switching in the characterized library this work uses (Nangate45 \cite{nangate45} typical,
$C=1.70$\,fF at the inverter input pin, $V_{dd}=1.1$\,V) costs
$CV^2 = 2.06\times10^{-15}$\,J, which is about $7.2\times10^{5}$ times
$k_BT\ln2$.  Charge recovery narrows the gap without closing it.  At a
ramp factor $r=RC/T$ the dissipation is $rCV^2$, so even an aggressive
$r=10^{-3}$ leaves a factor of about $7\times10^{2}$.  Scaling the node
does not change the conclusion.  At a contemporary 28\,nm operating point
the same calculation gives $2.3\times10^{5}\,k_BT\ln2$, under one order
of magnitude below the 45\,nm figure against a gap of five to six orders.
Two consequences follow, and they point in opposite directions.  For now,
switching loss is the quantity worth optimizing, and a flow that spends
switching activity to avoid erasure is spending in the wrong currency, and
that is the premise of this tool.  The ordering is not permanent.
$CV^2$ falls with each new device technology generation while $k_BT\ln2$ does not move,
so circuits synthesized against the switching term remain valid when the
floor eventually binds, while the reverse is not true.

The connection this paper leans on is that the two energy mechanisms are
different orders of one object, and it can be stated as a one-line lemma.

\begin{lemma}[Switching activity is a R\'enyi quantity of order two]
\label{lem:renyi2}
Let $h(X)$ be an internal signal of a combinational network, and let
successive input vectors $X_{t-1},X_t$ be independent and identically
distributed.  Then the switching activity
$\alpha_h=\Pr[h(X_t)\neq h(X_{t-1})]$ satisfies
$\alpha_h = 1 - 2^{-H_2(h(X))}$.
\end{lemma}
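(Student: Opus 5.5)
The plan is to compute $\Pr[h(X_t)=h(X_{t-1})]$ directly and identify it with the collision probability that defines $H_2$ in \eqref{eq:renyidef}.

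Let $Y=h(X)$ take values $y$ with probabilities $p_y=\Pr[h(X)=y]$. For a single internal signal the range is $\{0,1\}$, but nothing in the argument needs that, so the statement holds for any finite-valued $h$. Write $Y_t=h(X_t)$ and $Y_{t-1}=h(X_{t-1})$.

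\emph{Step 1.} Since $X_{t-1}$ and $X_t$ are independent and identically distributed, the images $Y_{t-1}$ and $Y_t$ under the same deterministic map $h$ are also independent and identically distributed, each with law $\{p_y\}$. The network is combinational, so $h$ has no memory and the value at time $t$ depends on $X_t$ alone. This is the only place the combinational hypothesis is used, and I would state it explicitly.

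\emph{Step 2.} Partition the event $\{Y_t=Y_{t-1}\}$ by the common value and apply independence:
\[
\Pr[Y_t=Y_{t-1}]=\sum_y \Pr[Y_t=y]\,\Pr[Y_{t-1}=y]=\sum_y p_y^{2}.
\]

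\emph{Step 3.} Setting $\alpha=2$ in \eqref{eq:renyidef} gives
\[
H_2(Y)=-\log_2\sum_y p_y^2,
\qquad\text{so}\qquad
2^{-H_2(Y)}=\sum_y p_y^2.
\]
This matches the paper's reading of $H_2$ as $-\log_2$ of the probability that two independent draws agree.

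\emph{Step 4.} Take complements. By definition $\alpha_h=\Pr[Y_t\neq Y_{t-1}]$, so
\[
\alpha_h=1-\Pr[Y_t=Y_{t-1}]=1-2^{-H_2(h(X))}.
\]

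No step is a real obstacle, since the whole argument is two lines of algebra. The only care needed is in Step 1: the independence of consecutive vectors is what turns the agreement probability into the self-collision probability $\sum_y p_y^2$. Without it, for example under temporally correlated inputs, $\Pr[Y_t=Y_{t-1}]$ involves the joint law of $(Y_{t-1},Y_t)$ and the identity fails.

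I would close with a remark on the binary case. There $\sum_y p_y^2=p^2+(1-p)^2$, so $\alpha_h=2p(1-p)$, which recovers the classical switching-activity formula as a check.
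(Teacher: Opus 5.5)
Your proof is correct and is essentially the paper's argument: both use the independence of the two draws to factor the joint law, and both identify the non-switching probability with the collision probability $\sum_y p_y^2 = 2^{-H_2}$. The only difference is cosmetic. The paper computes the disagreement probability $2q(1-q)$ for a binary signal and rewrites it as $1-(q^2+(1-q)^2)$, whereas you compute the agreement probability directly and take the complement, which also covers any finite range of $h$.
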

\begin{proof}
Write $q=\Pr[h(X)=1]$.  By independence of the two draws,
$\alpha_h = \Pr[h(X_t)=1]\Pr[h(X_{t-1})=0] +
\Pr[h(X_t)=0]\Pr[h(X_{t-1})=1] = 2q(1-q) = 1-\big(q^2+(1-q)^2\big)$.
The subtracted quantity $q^2+(1-q)^2=\sum_z p_z^2$ is the collision
probability of the signal's distribution, which is $2^{-H_2}$ by
\eqref{eq:renyidef} at $\alpha=2$.
\end{proof}

The reason this is worth stating is what it says when read next to
Landauer's bound.  Erasure is governed by the order-one R\'enyi  entropy of the
erased distribution.  Switching activity, and therefore dynamic power, is
governed by the order-two R\'enyi entropy of the same distribution.  The two
mechanisms need not move together.  A distribution can be reshaped so that
$H_2$ falls while $H_1$ is essentially unchanged, which is, in our
viewpoint, the correct information-theoretic description of what a
low-switching-activity realization of a fixed function accomplishes.
A third R\'enyi order also appears in this paper, governing the minimum number of garbage
lines a reversible embedding of $f$ must add is
$v=\lceil n-\Hinf(Y)\rceil$, Maslov's bound~\cite{maslov2003} written in
entropy form, and since $\Hinf\le H_1$ the ancilla a reversible embedding
must add is at least the information the irreversible circuit would have
destroyed.  Reading the mechanisms as different R\'enyi entropy orders of one object
explains why optimizing one does not automatically buy the others, and it
tells the tool which ledger a given saving belongs in.  Whether the
single-shot thermodynamic correspondence, in which min-entropy governs
work extraction for a device evaluated once, maps exactly onto circuit
quantities is not established here.

\subsection{Relevant Switching Theory Concepts}
\label{sec:switching-bg}

A switching function $f:\F^n\rightarrow\F$ has a unique representation as
a polynomial over $\F$, its \emph{algebraic normal form} (ANF), also
known as the positive-polarity Reed--Muller form,
\[
f(x_1,\ldots,x_n)\;=\;\bigoplus_{S\subseteq\{1,\ldots,n\}} a_S
\prod_{i\in S} x_i, \qquad a_S\in\F,
\]
obtained from the truth table by the M\"obius transform over
$\F$~\cite{steinbach2016chapter,thornton_rm_chapter1}.  The \emph{degree}
of $f$ is the largest $|S|$ with $a_S=1$.  Degree-one functions are
exactly the affine functions.  Fixing a polarity for each variable and
expanding accordingly gives a \emph{fixed-polarity Reed--Muller} (FPRM)
form, one per polarity vector, each likewise canonical.  Relaxing the
requirement that all occurrences of a variable share a polarity gives an
\emph{exclusive-or sum of products} (ESOP), which is typically more
compact but no longer canonical, and is therefore a target for
minimization rather than a normal form to read structure
from~\cite{thornton_rm_chapter1,houngninou_rm_chapter2}.  The conversion
between truth-table and ANF representations is well known in the
theoretical switching-theory and cryptographic communities but appears
less often in the low-power literature, so we state the procedure the
tool uses here, although a number of different means
for finding the FPRM transform could have been used.  In \tool~all 
ANF coefficients are computed by the fast M\"obius
transform, the XOR butterfly of Algorithm~\ref{alg:mobius}, in $O(n2^n)$
coefficient operations.  Over $\F$ the transform is an involution, so the
same procedure converts in both directions.\footnote{$\mathbb{B}=\{0,1\}$ is the more
customary notation for $\mathbb{F}_2$ in the switching theory literature.}

\begin{algorithm}[t]
\caption{In-place M\"obius (positive-polarity Reed--Muller) transform
of a truth vector over $\F$}
\label{alg:mobius}
\begin{algorithmic}[1]
\Require $a[0\,..\,2^n{-}1]$ with $a[x]=f(x)$, the index $x$ read as
the assignment $(x_{n-1}\cdots x_1x_0)_2$
\Ensure $a[s]$ is the ANF coefficient of the monomial
$\prod_{i\in S}x_i$ with $S=\{\,i : \text{bit } i \text{ of } s = 1\,\}$
\For{$i \gets 0$ \textbf{to} $n-1$}
  \For{$x \gets 0$ \textbf{to} $2^n-1$}
    \If{bit $i$ of $x$ equals $1$}
      \State $a[x] \gets a[x] \oplus a[x \oplus 2^i]$
    \EndIf
  \EndFor
\EndFor
\end{algorithmic}
\end{algorithm}

One distinction that we emphasize is that the ANF is defined 
over $\F$ and is what the tool manipulates whereas the spectrum 
in which switching is bounded is the \emph{bipolar} (Walsh) spectrum 
of the $\pm1$-valued form, a different transform of the same function 
over $\mathbb{R}$, and the two are not interchangeable.  
What links these two transformations is degree, and the link can be made
quantitative.  Under temporally correlated inputs in which each input
can independently invert with probability $\epsilon$, the activity of a signal
$h$ equals its Boolean noise sensitivity $\mathrm{NS}_\epsilon(h)$, that can
be expressed as a
weighted sum of squared Walsh coefficients, and the following bound
holds.

\begin{lemma}[Low-degree spectral concentration bounds switching]
\label{lem:degree}
Suppose the bipolar form of $h$ places at least a $(1-\delta)$ fraction
of its nonconstant spectral energy on coefficient sets of size at most
$d$.  Then
\[
\mathrm{NS}_\epsilon(h)\;\leq\;
\tfrac{1}{2}\big[1-(1-2\epsilon)^d\big](1-\delta)\;+\;\tfrac{\delta}{2}.
\]
\end{lemma}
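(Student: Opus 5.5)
The plan is to write noise sensitivity in its Fourier--Walsh form, split the spectrum at level $d$, and bound the two parts separately. Take the bipolar form $\tilde h=(-1)^{h}$, with Walsh expansion $\tilde h=\sum_{S}\hat h(S)\chi_S$ under the uniform measure, so that Parseval gives $\sum_S \hat h(S)^2=1$. Write $\rho=1-2\epsilon$. First I will recall the standard identity
\[
\mathrm{NS}_\epsilon(h)=\tfrac12-\tfrac12\,\mathbb{E}[\tilde h(X)\tilde h(Y)]
=\tfrac12\sum_{S\neq\emptyset}\big(1-\rho^{|S|}\big)\hat h(S)^2 ,
\]
where $Y$ is obtained from $X$ by flipping each coordinate independently with probability $\epsilon$. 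This follows from $\mathbb{E}[\chi_S(X)\chi_T(Y)]=\rho^{|S|}[S=T]$, and the term $S=\emptyset$ contributes zero. This is the ``weighted sum of squared Walsh coefficients'' the text refers to. Under the temporally correlated input model just described, it also equals the activity of $h$.

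Next I will name the nonconstant spectral energy $W=\sum_{S\neq\emptyset}\hat h(S)^2=1-\hat h(\emptyset)^2\le 1$, and split it as $W=W_{\le d}+W_{>d}$, where $W_{\le d}$ collects the sets with $1\le|S|\le d$. The hypothesis then reads $W_{>d}\le\delta W$. I will restrict to the physically meaningful range $0\le\epsilon\le\tfrac12$, so that $\rho\in[0,1]$ and the weight $1-\rho^{k}$ is nondecreasing in $k$ and lies in $[0,1]$. This gives two bounds: every $S$ with $|S|\le d$ has weight at most $a:=1-\rho^d$, and every $S$ with $|S|>d$ has weight at most $1$. Hence
\[
\mathrm{NS}_\epsilon(h)\le\tfrac12\big(a\,W_{\le d}+W_{>d}\big).
\]

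The final step, and the only one requiring care, is to turn the lower bound $W_{\le d}\ge(1-\delta)W$ into an upper bound of the stated form. Substitute $W_{\le d}=W-W_{>d}$ to get $aW+(1-a)W_{>d}$. Since $1-a=\rho^d\ge0$, this expression is increasing in $W_{>d}$, so it is maximized at $W_{>d}=\delta W$, where it equals $W\big[a(1-\delta)+\delta\big]$. Using $W\le1$ then yields $\tfrac12[1-(1-2\epsilon)^d](1-\delta)+\tfrac{\delta}{2}$, as claimed. I expect the main obstacle to be this monotonicity and sign bookkeeping, which is exactly where the requirement $\epsilon\le\tfrac12$ enters. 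For $\epsilon>\tfrac12$ and odd $|S|$ the weights can exceed $1$, so I will state that restriction explicitly as part of the model.
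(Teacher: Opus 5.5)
Your proof is correct and follows the same route as the paper: the Fourier expression $\mathrm{NS}_\epsilon(h)=\tfrac12\sum_{S\neq\varnothing}[1-(1-2\epsilon)^{|S|}]\widehat h(S)^2$, a split at $|S|=d$, and a separate bound on each part using Parseval. Your closing step is actually more careful than the paper's proof, which asserts that the low-degree energy is \emph{at most} $1-\delta$ of the total (the reverse of the hypothesis) and uses $\epsilon\le\tfrac12$ only implicitly; your substitution $W_{\le d}=W-W_{>d}$, whose coefficient $\rho^d$ is nonnegative, is what makes the bound follow from the stated ``at least'' hypothesis.
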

\begin{proof}
The standard Fourier expression for noise sensitivity is
$\mathrm{NS}_\epsilon(h)=\tfrac12\sum_{S\neq\varnothing}
\big[1-(1-2\epsilon)^{|S|}\big]\widehat h(S)^2$, where $\widehat h(S)$
are the coefficients of the bipolar form and the nonconstant spectral
energy $\sum_{S\ne\varnothing}\widehat h(S)^2$ is at most $1$ by
Parseval's Theorem.  Consider splitting the sum at $|S|=d$.  For $|S|\le d$ the bracket is at
most $1-(1-2\epsilon)^d$ and the corresponding spectral energy is at most
$1-\delta$ of the nonconstant total. Likewise, for $|S|>d$ the bracket is at most
$1$ and hence each tail term is at most $\tfrac12\widehat h(S)^2$, and
the tail energy is at most $\delta$.  Adding the two parts back together yields the
stated bound.
\end{proof}

These two lemmas together are what justify using ANF degree as a
\emph{selection} criterion when choosing subcircuits for re-encoding. A
region whose local function has degree one is affine and can be absorbed
exactly into a linear stage, and a region of low degree concentrates its
spectral energy at low order and is therefore, by Lemma~\ref{lem:degree},
a region whose switching behavior a linear change of coordinates has
genuine room to improve.  We do not wish to overstate this guidance.
The ANF degree is a triage heuristic, and in this flow every candidate so
nominated is ultimately accepted or rejected based upon the measured energy of
its complete realization and not based upon its spectral credentials alone.

The degree also controls which \emph{exact} method is available for a
function's output distribution, and therefore for the entropies of
Section~\ref{sec:renyi-bg}, which is what allows a netlist-native tool to
avoid enumerating $2^n$ inputs.  When the ANF has degree one, the output
is uniform on a coset of the image, every attained value has the same
multiplicity, and $\Hinf=H_2=H_1=\rho$ exactly, with $\rho$ the rank of
the linear part. A rank computation settles the whole entropy family.
At degree two the orders separate, and the collision probability is fixed
by the symplectic rank of the associated bilinear form together with the
Arf invariant~\cite{arf1941untersuchungen}, giving $H_2$ exactly by the Dickson--Arf classification
rather than by enumeration.  This is the connection between the classical
algebra of quadratic forms over $\F$ and an energy quantity that we have
found most useful in practice.  At degree three an Arf-signed directional
recursion over the quadratic case remains exact but no longer scales in
$n$, and at any degree with bounded control rank a factorization is
exact.  Collision entropy is the tractable object because $\sum_y p_y^2$ is
a quadratic functional of the distribution, precisely what Parseval over
the output space and the Arf classification can evaluate in closed form.
The min-entropy the width bound uses is then recovered through the
bracket $n-H_2\le\log_2 N_{\text{dup}}\le n-H_2/2$.  The R\'enyi order
fixes which entropy answers the question, and the ANF degree fixes which
algebra can compute it without enumeration.

\subsection{Synthesis and EDA Concepts}
\label{sec:eda-bg}

The tool is built from the standard machinery of logic synthesis, and we
review the pieces on which later sections depend.  A \emph{cut} of a node
in a gate netlist is a set of nets that separates the node from the
primary inputs.  A cut is $K$-feasible when it has at most $K$ nets, and
the function of the \emph{cone} between a node and its cut is the local
function the mapper must realize.  Cut enumeration proceeds bottom-up,
the cuts of a node being unions of its fan-ins' cuts subject to the size
bound, and a \emph{cover} selects one cone per needed node so that every
primary output is computed.  Covers are priced, and the priced cover
problem with depth terms descends from FlowMap and its
successors~\cite{cong1994flowmap}.  \emph{Reconvergence}, the
same signal reaching a gate along two paths, is the structural fact that
defeats per-net independence assumptions in probability propagation and
creates the shared structure that window-based re-synthesis exploits.
\emph{roots} and fanout-free cones are the units re-synthesis rewrites.
Binary decision diagrams~\cite{bryant1986bdd} provide canonical
function representations under a variable order and, in this tool, an
alternative realization style in which a shared diagram vertex becomes a
real gate with fanout.  Boolean satisfiability enters twice, in
equivalence checking and in the optimality-gap program of
Section~\ref{sec:certified}, where energy-optimal covering is reduced to
weighted MaxSAT.

Because the \tool~flow also computes justification-style quantities, one
contrast is worth fixing here.  A SAT formulation asks whether an
assignment exists and returns one.  It is complete, and it answers one
question per solver call.  Lattice-based justification, described next,
alternatively propagates the consequences of a partial specification through
every element at once, in one pass, returning the strongest partial
assignment the structure forces.  It is weaker, since it will not
manufacture a witness that requires case splitting, and it is uniform in
the netlist, which is the property a sweep needs.  \tool~uses SAT
where completeness is required and lattice value propagation through 
a traversal of the VSIM structure where coverage of
every line in one pass is required.

Finally, a methodological remark that shapes the whole design.  Synthesis
tools are usually built on a single conceptual framework and pursue it to
the boundary of its usefulness.  A SAT-based flow reduces its questions to
satisfiability, a BDD-based flow to diagram manipulation, an
information-theoretic treatment to entropy inequalities.  There is real
virtue in that, and accordingly a real cost, which is that every problem is
eventually attacked with the tool at hand rather than the tool that
fits.  This work deliberately avoids this convention.  Information-theoretic
reasoning is used where it is most appropriate, namely to identify which
physical mechanism a cost belongs to, to bound what re-encoding can
achieve, and to place a floor under erasure.  Conventional EDA machinery,
cut enumeration, technology mapping, structural rewriting, and
equivalence checking, is used everywhere else, because it is mature,
fast, and correct, and because an entropy argument is not a substitute
for a cover.  The seam between the two is the acceptance rule of
Section~\ref{sec:ledger}.  No candidate is admitted solely on a spectral or
entropic credential.  Each is priced by realizing it and computing
what the energy model predicts.  We consider that seam to be as much of a
methodological contribution as any individual pass or algorithm in \tool.

\subsection{The Vector-Space Information Model (VSIM)}
\label{sec:vsim-bg}

The netlist model used throughout this paper, together with the
transfer-function theory that makes both the forward probability sweep
and the backward observability machinery well defined, is the VSIM
formulation developed and described in~\cite{thornton_vsim_book}~\cite{steinbach2016chapter}~\cite{thornton2015tc}.  
A switching network is
modeled as a linear transformation on a vector space of signal states.
Each structural element is modeled by a transfer matrix. Simulation is
matrix--vector multiplication in the forward direction and justification
is the corresponding action in the reverse direction.  We use and implement VSIM
theory as the main data structure in \tool~for these capabilities.

The VSIM value domain is a four-valued lattice rather than a bit or bit word.  
A datum is a row vector over a
two-dimensional space, $\langle 0| = [0\ 1]$ and $\langle 1| = [1\ 0]$.
Two further vectors complete the domain.  The \emph{total} vector
$\langle t| = [1\ 1]$, which covers both values, and the \emph{null}
vector $\langle\emptyset| = [0\ 0]$, which covers neither.  Adjoining
them yields a complete four-valued lattice~\cite[Ch.~2]{thornton_vsim_book},
and this is not a modeling convenience.  An unknown is a genuine element
of the domain rather than a flag, a contradiction is a genuine element
rather than an error return, and every per-gate transfer relation is
monotone on the lattice, which is what allows one traversal framework to
carry information in either direction without separate don't-care
machinery.  The VSIM formulation is advantageous compared to 
two-valued switching algebra for the purpose of implementing~\tool.  
Two-valued algebra is total in the sense that every net has a definite value
under every assignment, so a question of the form ``does any output
depend on this line'' must be answered by exhaustive or symbolic case
analysis.  The lattice carries partial information as one of the
values, and because the transfer maps are monotone, a single backward pass
propagates a constraint through the network without enumerating the
assignments that witness it.  In essence, the VSIM model enables both forward and reverse
netlist traversals as efficient methods that are linear in the number of nets $N$
and it enables an efficient means for populating and updating tag values.  A forward- or 
simulation-sweep is accomplished by transforming logic values as vector-matrix
products at each structural element using a vector comprised of the outer product
of the element's lattice values with its transfer matrix $T$.

The backward operator is the transfer function transpose.  Let $T$ be the transfer matrix
of a switching network.  Recovering the inputs consistent with a required
output is in general a pseudo-inverse computation.  The formulation
removes that cost in two steps.  The Gramian
$\mathrm{gram}(T)=T^{\mathsf{T}}T$ is diagonal, so its inverse is
immediate, and the pseudo-inverse can then be dispensed with altogether
in favor of the transpose, giving the justification matrix
\begin{equation}
  T^{J} \;=\; T^{\mathsf{T}}
  \label{eq:justification}
\end{equation}
\cite[Def.~4.11]{thornton_vsim_book}.  The cost of use the transpose rather than
the pseudo-inverse of the transfer function is that scaling
coefficients are lost, so one recovers the \emph{set} of justifying
inputs and not their distribution, which is exactly the information a
synthesis flow needs and more than a simulator provides.  We draw
attention to \eqref{eq:justification} because its consequence for a tool
is the opposite of what it first appears.  A reader may expect a flow
built on this formulation to contain an elaborate backward pass.  What
the result says is that it need not.  The backward operator is the forward
one read in the other direction, so a quantity defined by a backward
question can often be obtained by a forward computation.
Section~\ref{sec:backward} reports where this flow does exactly that, and
is precise about which backward machinery the shipped flow executes and
which it does not.  
\section{The \tool\ Architecture}
\label{sec:renesis}

This section describes the tool from the point of view of an EDA
algorithm designer, stage by stage.  The formal results are placed next
to the mechanisms they support rather than collected in an appendix, so
that the mathematical justification for each pipeline stage appears
where the stage is described.  Each subsection also closes with a worked
example on a small circuit.  The examples are drawn from the tool's own
output and every number quoted in them comes from a run record produced
by a single command that a reader can repeat.

The example circuits are small on purpose and they are not all the same
circuit.  A benchmark large enough to exercise every stage is too large
to draw.  A benchmark small enough to draw exercises almost nothing.
Each stage is therefore illustrated on the smallest fixture that makes
its mechanism visible.  Every fixture except one ships with the tool,
and the text identifies the exception where it appears.

\IfFileExists{figures/fig_pipeline_v5.tex}{\input{figures/fig_pipeline_v5}}{}

A default invocation executes the following sequence.  The netlist is
parsed, then optionally prepared structurally, then passed through the
optional re-synthesis passes, then through the tag sweep, then through
the optional linear pre-filter.  It is then covered and technology
mapped, verified, buffered, and verified a second time.  The energy
report runs twice and the run record is written last.  Both verification
steps are equivalence checks on the mapped network and both run whether
or not anything was optimized.  The energy report runs twice because two
figures are reported, one before buffer insertion and one after
(Section~\ref{sec:ledger}).

Two orderings in that sequence are deliberate and neither is obvious.
The re-synthesis passes run before the tag sweep because they change the
netlist the tags would describe.  The linear pre-filter runs after the
tag sweep and replaces the mapping stage rather than feeding it, for the
reason given in Section~\ref{sec:bdec}.

\subsection{Front End and Structural Preparation}
\label{sec:frontend}

The front end accepts structural Verilog (an ISCAS-85 subset),
\texttt{.isc}, BENCH, PLA, BLIF, and AIGER in both binary and ASCII
forms, and normalizes them to a common gate netlist.  The internal
representation is deliberately plain.  A gate carries an output name, a
function drawn from the usual primitive set, and an input list.  A net
with no driver is a primary input by definition rather than by
declaration, which makes partially specified inputs behave predictably.
Cone and window functions are carried as truth tables packed into
machine words, and GF(2) matrices as row bitmasks.  Both representations
are exact and both are exponential in support, so every stage that uses
them carries an explicit width cap.

An optional preparation pass performs structural hashing, associativity
rebalancing, and cut rewriting, iterated until the gate count stops
decreasing.  It usually makes the whole run faster rather than slower,
because it hands the mapper less work.  It is nonetheless off by
default.  It changes the input netlist rather than the optimization, and
leaving it on would make every reported ratio a statement about two
changes at once.  The reported baseline is therefore the netlist as
supplied.

Determinism is a structural property of this stage rather than a
convention.  No set or dictionary iteration order reaches the output.
Every ordering is either topological or content-sorted.  That property
is also what allows the independent C implementation of
Section~\ref{sec:development} to be byte-compatible with the reference
implementation.

\IfFileExists{figures/ex_frontend.tex}{\input{figures/ex_frontend}}{}

Figure~\ref{fig:ex-frontend} shows this stage on a six-gate fixture
written to contain the two redundancies structural hashing removes, a
duplicated conjunction and a double negation.  Preparation merges the
duplicate by hash-consing on the function together with the
content-sorted input tuple, collapses the negation pair to a wire, and
returns three gates.

The instructive part is what happens next.  Mapped and priced, the
prepared and unprepared netlists are identical.  Both give twelve
devices and $0.008228$\,pJ on both tables, because the cover of
Section~\ref{sec:cover} absorbs the redundancy on its own.  This is the
concrete reason preparation is off by default.  Its usual benefit is
that the mapper is passed less work, not that the resulting circuit is
less costly.

\subsection{The forward sweep: signal probability and activity}
\label{sec:forward}

Each net carries a pair of tags.  The first, $p_1$, is the probability
the net holds one.  The second, $\alpha$, is the probability it differs
from its own previous value.  Signal probability as a synthesis quantity
is long established~\cite[Sec.~2.3]{thornton2001spectral}.  What this
subsection reports is how much it matters here.

The switching model prices a gate with $k$ controls at $k+1$ capacitance
units, contributed when it fires.  Assuming independent uniform controls
gives a firing probability of $2^{-k}$.  That assumption is wrong by
$26$ to $28\%$ against simulation of the emitted circuits, in both
directions, so it cannot be absorbed as a calibration constant.  Tagging
each gate with its measured firing probability reduces the error to
$1\%$.

Three propagation modes exist.  The analytic mode propagates per-net
marginals under an independence assumption at every gate.  It is
inexpensive and it is wrong in a specific, locatable place, namely at
reconvergent stems, where the same signal reaches a gate by two paths
and the assumption double-counts.  On \texttt{c17}, a six-net circuit
with a single reconvergent stem, it is off by $7.7\%$ on one net.

The joint mode carries the exact joint distribution over the live
frontier.  It is the forward pass of the VSIM machinery applied to
distributions rather than to values, so a reconvergent stem appears once
and its correlation survives.  Its cost is exponential in frontier
width.  When the frontier would overflow, the live net whose last use is
furthest away is factored out and treated as independent, and the
total-variation distance discarded is charged to an explicit budget.
Every reported probability therefore carries a bound rather than a
silent assumption.

The sampling mode simulates a fixed number of random vectors at a fixed
seed, and is the mode whose output feeds the cover.  Accuracy was
checked against an independent simulator rather than against itself.
The netlist is emitted as structural Verilog with a generated testbench,
run under an external simulator, and the value-change dump is parsed for
per-net statistics.  Any disagreement is treated as a defect in the
internal simulator rather than in the tags.

One scoping fact governs how much a workload can change.  The
technology-priced cover, which is the shipped default, takes no tags
argument at all.  The activity figure a default run reports comes from
the energy model's own sweep, and the trial-count and seed options are
inert on a default run.  They act only under the activity-priced cover.
Claims about workload-aware synthesis in this paper are scoped
accordingly (Section~\ref{sec:drive}).

\IfFileExists{figures/ex_c17.tex}{\input{figures/ex_c17}}{}

The running example for this stage and for the four that follow it is
\texttt{c17}, the smallest ISCAS-85 member, with five inputs, two
outputs, and six NAND gates.  It is small enough that ``measured'' can
mean exact, since enumerating all $32$ input vectors is less expensive
than sampling them.

Figure~\ref{fig:ex-tags} carries the result.  Each net is labelled with
its exact signal probability and with the value the per-gate
independence rule predicts.  On four of the six nets the two agree to
every printed digit, which is what the fan-in structure of those gates
predicts.  On the other two they disagree, and those two are the gates
that sit downstream of a reconvergent stem, net~$3$ reaching net~$22$
along two paths and net~$11$ reaching net~$23$ along two paths.  The
errors are $+5.56\%$ and $-8.33\%$.

Two things follow, and both are visible in one figure on a six-gate
circuit.  The independence assumption fails at reconvergence and not
elsewhere, so its error is structural rather than diffuse.  The two
errors also carry opposite signs, so no calibration constant absorbs
them.  That is the case for measuring the tags rather than propagating
them analytically, and it is why the sampling mode is the one whose
output reaches the cover.

\subsection{VSIM Backward Sweeps in~\tool}
\label{sec:backward}

Equation~\eqref{eq:justification} states that the backward operator is
the transpose of the forward one.  We report what that buys, and we
separate three things that are easily conflated, because the distinction
is what makes the claims checkable.

The first is a justification quantity obtained forward.  The reversible
embedding width is $v=\lceil\log_2 N_{\mathrm{dup}}\rceil$ with
$N_{\mathrm{dup}}=\max_y|f^{-1}(y)|=2^n\max_y p(y)$, that is,
$n-\Hinf(Y)$.  This is Maslov's minimum-garbage
bound~\cite{maslov2003} written in entropy form.  It is defined by a
preimage, meaning the set of inputs that produce a given output, and is
therefore a justification question of exactly the kind
\eqref{eq:justification} addresses.  Because the Gramian is diagonal,
its diagonal is a vector of per-output marginals, and those are
obtainable by bit-parallel \emph{forward} simulation.  The flow computes
the bound that way and does not form $T^{\mathsf{T}}$ at all.  The
theory here earns its keep in the most direct way available.  It shows
that the transpose is redundant, and the flow cashes that in by not
building it.

The second is a backward traversal that is load-bearing but is not
justification.  The cover is extracted by walking back from the primary
outputs over the chosen cuts, and the fanout counts recovered by that
walk feed the duplication discount in the second pricing pass.  Without
that walk there is no cover.  Dead-block elimination, run to fixpoint
afterwards, is the same kind of step.  Both are reverse reachability
over a graph, of the sort any cut-based mapper performs, and neither
propagates a value or inverts a relation.  We name them because a reader
who has just been shown \eqref{eq:justification} could reasonably read a
backward walk as a netlist justification computation, and it is not.

The third is observability, which is measured and not priced.  A firing
is unobservable when suppressing it leaves every primary output
unchanged, so the energy is spent on a value no output depends on.  The
estimator is a sampling procedure that flips a net, re-evaluates the
downstream cone, and compares the outputs.  It is not the justification
pass, and we say so because the name invites the other reading.

The measurement is substantial.  On \texttt{c432} and \texttt{c880}
respectively, $49.4\%$ and $39.1\%$ of switched capacitance is spent on
values not observable at any primary output.  Of that, $90.0\%$ and
$67.9\%$ is locally detectable, meaning every later gate reading the
line fails to fire, so the value does not propagate and the condition is
testable from the immediate consumers rather than by global don't-care
analysis.  In total energy the unobservable share is $37.5\%$ and
$31.2\%$.

In~\tool~this is an upper bound rather than a harvested
saving, and it is not harvested at all.  Exploiting it requires gating a
block on its observability condition.  The gating construct we built
produces a both-rails-low condition whose propagation through a clocked
dual-rail cascade needs a family-specific discipline we did not
establish.  We report the fraction as headroom.  On a reversible-width
target this switching would not be waste, since an unobservable
intermediate must still be un-computed there.  The same computation is
necessary under one target and removable under the other, which is a
category difference rather than a trade-off.

\subsection{Covering and its Usage in~\tool}
\label{sec:cover}

Cut enumeration is the standard bottom-up construction over the prepared
netlist, at twelve-input cuts with thirty-two retained per node.  Three
details are load-bearing and each was installed after a measurement.

Cuts are ordered by content, size then sorted tuple, rather than by
set-iteration order.  Hash-seed-dependent ordering here made covers, and
every number downstream of them, vary between runs.  Retention is
smallest-first, which corrected three failures at once.  Largest-first
retention dropped small inexpensive cuts as the limit filled, starved
wide-fanin nodes down to the trivial cut, and broke dominance pruning,
since a superset kept early is not removed when its subset arrives
later.  Under smallest-first ordering, dominance pruning is exact,
because every subset is seen before any of its supersets.

Two priced covers exist.  The default prices a cut technologically,
using a constant per block, a weight on device count, a weight on
arrival depth, a weight on the block's charged internal load, and a
duplication discount that divides a leaf's accumulated cost by its
fanout.  The shipped weights put zero weight on raw device count and the
heaviest weight on charged internal load, which is the switching term at
the heart of the objective.  The alternative cover prices switching
activity directly and is the only consumer of the forward tags.  Which
of the two is the right default is an empirical question, and
Section~\ref{sec:results} answers it.

The quantity both covers ultimately serve is the following, and its
formal properties govern how every result in this paper should be read.
The per-cycle switched capacitance of a mapped dual-rail network is
\begin{equation}
\label{eq:cost}
E \;=\; c_{\mathrm{dev}}\sum_{g\in\mathcal{G}}\mathrm{load}(g),
\end{equation}
where $\mathcal{G}$ is the set of mapped gates and $\mathrm{load}(g)$
counts the literal occurrences of $g$'s output at its readers, together
with pad load and the family's self-load.

\begin{proposition}[$E$ is a literal-occurrence functional]
\label{prop:cost}
$E$ in \eqref{eq:cost} is determined by the multiset of literal
occurrences of the mapped network, together with the fixed per-family
constants.  It is not a function of $|\mathcal{G}|$, of the device
count, or of any other pure size measure.
\end{proposition}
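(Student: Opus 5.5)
The plan is to prove the two halves of Proposition~\ref{prop:cost} separately: a positive statement (what $E$ depends on) and a negative statement (what it does not depend on). For the positive half, I will unfold \eqref{eq:cost} term by term. For each mapped gate $g$, $\mathrm{load}(g)$ is by definition a sum of three contributions. The first is the number of literal occurrences of $g$'s output among the pass networks of its readers. The second is a pad load, present exactly when $g$ drives a primary output. The third is the family's self-load constant. Exchanging the order of summation turns $\sum_g \mathrm{load}(g)$ into a sum over literal occurrences of the mapped network, since each occurrence of signal $s$ in the pull network of some reader is charged once, to the driver of $s$. That sum is added to the pad and self-load terms, and both of those are fixed per-family constants applied per output rail and per primary output. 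Multiplying by $c_{\mathrm{dev}}$, which is itself a family constant read from Table~\ref{tab:famrecords}, gives $E$ as a function of the multiset of literal occurrences (indexed by the driving signal) together with the family record. This establishes the first sentence.

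One bookkeeping point needs care. The self-load term is charged once per mapped gate, so it appears to reintroduce $|\mathcal{G}|$. I would handle this by noting that each mapped gate owns an output rail pair. Its self-load is therefore an occurrence of that gate's own output on its own rails, and it can be entered into the multiset as a distinguished occurrence type priced by the family constant. This is the reading the statement's phrase ``together with the fixed per-family constants'' is meant to license. I will state this convention explicitly rather than bury it.

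For the negative half, I will exhibit witnesses rather than argue abstractly. It suffices to produce two mapped networks with equal $|\mathcal{G}|$ and equal device count but different $E$, and two with different $|\mathcal{G}|$ but equal $E$. The first pair can be built by taking a signal read by one reader and redirecting a second literal to it in place of a literal of a different signal. Gate count and device count are unchanged, but the occurrence multiset shifts. If that shift changes which drivers carry pad or fanout load, $E$ changes. For the second pair, I would split a gate into two gates while preserving the total number of literal occurrences. An example is factoring a literal-disjoint sum into a separate block whose output is read exactly once. The gate count rises by one while the multiset changes only by the added single-occurrence entry. By choosing the pad and self-load accounting suitably, or by compensating with a removed occurrence elsewhere, $E$ is held fixed. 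Once these witnesses exist, $E$ cannot factor through $|\mathcal{G}|$ or device count. The same construction also defeats any other pure size measure, such as depth or edge count, provided the witness pair agrees on that measure and disagrees on $E$.

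I expect the main obstacle to be the second witness under the self-load convention. Adding a gate adds a self-load term, so keeping $E$ equal requires a compensating reduction of one literal occurrence weighted to match the self-load constant. Such compensation is possible only when the constants are commensurable in the family record. The fallback, which I would try if exact equality is awkward, is to weaken the witness to a monotonicity violation: a network with more gates and strictly lower $E$. This also refutes functional dependence on size, because it shows that $E$ is not even monotone in $|\mathcal{G}|$. It also matches the paper's informal claim that a mapping with more gates can be less expensive.
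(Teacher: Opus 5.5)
Your positive half is the paper's argument: the paper reads the claim directly off \eqref{eq:cost}. Your explicit handling of the per-gate self-load term is a real improvement, since the paper folds it into ``fixed by the family record'' even though it scales with $|\mathcal{G}|$.

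The negative half is where the plan fails. The paper constructs nothing there. It cites the measured pairs of Corollary~\ref{cor:nonmonotone} on \texttt{c880} and \texttt{c1355}, which compute the same function. Your redirection witness does not work by the mechanism you name. Your own exchange of summation shows that $\sum_g\mathrm{load}(g)$ depends on the occurrences only through the \emph{total} number of occurrences of mapped-gate outputs, plus the pad and self-load terms. Moving one occurrence from one internal driver to another therefore relocates a unit of load and leaves $E$ unchanged. Redirecting an internal literal also does not touch pad attribution. The missing idea is the uncharged primary-input convention of Section~\ref{sec:ledger}. A primary input has no driver in $\mathcal{G}$, so its occurrences count as devices but are not billed. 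A pair with equal $|\mathcal{G}|$ and equal device count but different $E$ must move occurrences across that boundary, e.g.\ by replacing a primary-input literal with an internal-net literal. Even then, redirection generally changes the computed function. Unlike the paper's witnesses, your pair would then say nothing about competing realizations of one function, and that comparison is how the proposition is used.

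Your second witness and its fallback aim at the wrong target. A pair with different $|\mathcal{G}|$ and equal $E$ shows that $E$ does not determine $|\mathcal{G}|$, which is irrelevant to whether $E$ is a function of $|\mathcal{G}|$. A pair with more gates and strictly lower $E$ refutes monotonicity, not functional dependence: a decreasing or otherwise non-monotone function of $|\mathcal{G}|$ is still a function of it. Only a pair that agrees on the size measure and differs in $E$ refutes dependence on that measure. So drop the second witness, and the commensurability obstacle you anticipate goes with it. The paper's own proof has the same looseness: its ``or the reverse'' clause, and Corollary~\ref{cor:nonmonotone} pairs that agree on no size measure. What it rigorously shows is that no size measure is a monotone proxy for $E$, which is the engineering point. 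Your corrected equal-size pair is what the literal statement needs.
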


\begin{proof}
The first claim is read off the definition.  Every term of the sum is a
count of literal occurrences at readers, plus pad and self-load terms
that are fixed by the family record and the output boundary, so two
networks with the same occurrence multiset and boundary receive the same
$E$.  For the second claim it suffices to exhibit two networks computing
the same function whose size measures agree and whose $E$ differs, or
the reverse.  The measured witnesses of
Corollary~\ref{cor:nonmonotone} provide both directions, so no function
of size alone can determine $E$.
\end{proof}

\begin{corollary}[Device count is not monotone in energy]
\label{cor:nonmonotone}
There exist pairs of networks computing the same function for which
device count and $E$ move in opposite directions.
\end{corollary}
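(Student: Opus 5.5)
The argument is existential, so I will prove it by exhibiting one explicit pair of networks computing the same function, priced directly by \eqref{eq:cost}. Proposition~\ref{prop:cost} tells me what to aim for. $E$ depends only on the multiset of literal occurrences at readers, plus pad and self-load terms. Device count instead includes the per-gate overhead and latch devices of the family record. So I need a pair in which one network has more gates but fewer literal occurrences, or the reverse. I will work in the reference TG family, which has zero per-gate overhead, and then check that the sign survives under any family whose record has positive overhead or self-load.

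The construction is shared-factor extraction. Take the function $f=(x_1\wedge x_2\wedge x_3\wedge x_4)\oplus y_1$ together with a second output $g=(x_1\wedge x_2\wedge x_3\wedge x_4)\oplus y_2$.
\begin{itemize}
\item Network $A$ realizes each output as its own single block. Each block reads $x_1,\dots,x_4$ and $y_i$ as literals, so $A$ has two gates.
\item Network $B$ first computes $m=x_1x_2x_3x_4$ as one gate and then forms $f=m\oplus y_1$ and $g=m\oplus y_2$. So $B$ has three gates.
\end{itemize}
I will count, per primary input and per internal net, the literal occurrences at readers. Dual-rail pass networks count one device per literal in the relevant rail network, and the counts must follow the tool's actual cell expansion, including the rails needed by the XOR structure. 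I then sum $\mathrm{load}(g)$ over gates.

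\begin{itemize}
\item In $A$, the $x_i$ are read twice each, and the AND-within-XOR cells replicate those literals in both rail networks.
\item In $B$, the $x_i$ are read once, and $m$ is read twice.
\end{itemize}
So for large enough fan-in of the shared cube, $E(B)<E(A)$ while $B$ carries more gates. If the family's per-gate overhead is positive, $B$ also carries more devices. That gives the direction ``more devices, lower $E$.'' Reversing the roles, starting from $B$ and collapsing it to $A$, gives the opposite direction.

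The main obstacle is to make the device and load counts agree exactly with what the mapper's cell model charges. The paper ties device count to the family records of Table~\ref{tab:famrecords}, and literal counts of XOR blocks depend on the dual-rail expansion. Two fallbacks are available if the hand construction turns out not to separate the two measures cleanly:
\begin{itemize}
\item Increase the cube width until the literal saving dominates any overhead difference. The saving grows linearly in the width while the extra gate costs a constant.
\item Cite measured witness pairs from the tool's run records, as the proof of Proposition~\ref{prop:cost} already anticipates. Pairs where a re-synthesis arm raised device count and lowered $E$ on the same verified function would serve.
\end{itemize}
Equivalence of the two networks is immediate by construction, or certified by the tool's equivalence check.
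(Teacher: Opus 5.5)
There is a genuine gap: your accounting of $E$ is reversed, and both inequalities you assert for the pair are false under the paper's cost model.

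The sum in \eqref{eq:cost} runs over mapped gates. A literal occurrence of a primary input is therefore a device but carries no charge, because primary-input drive is uncharged by default (Section~\ref{sec:ledger}). This is exactly the constant set aside in Lemma~\ref{lem:separable}. Table~\ref{tab:default} shows it directly: \texttt{c17} carries $22$ devices but prices at $0.008228$\,pJ, which works out to four device capacitances, i.e.\ pad load and nothing else.

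Recount your pair in TG at cube width $k=4$, using the parity construction of Section~\ref{sec:map}. Network $A$ has no internal net, so $E(A)$ is pad load alone, while all of its $2\times 20=40$ literal occurrences are devices. In $B$ the $x_i$ and $y_i$ literals are still uncharged. The new net $m$, however, is read by two parity consumers at four occurrences each. So $E(B)=E(A)+8c_{\mathrm{dev}}$, while $B$ has only $8+16=24$ devices against $A$'s $40$. Thus $B$ has \emph{higher} $E$ and \emph{fewer} devices. The paper makes the same point at \texttt{c17} in Section~\ref{sec:cover}: the cover duplicates net $16$ rather than sharing it, because a shared block adds a charged output net while duplicated logic charges only its leaves.

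The mechanism you rely on to separate the two measures is also the wrong one. Every literal occurrence is a device, so deduplicating the $x_i$ lowers $B$'s device count relative to $A$ by $6k-8$ in TG. Your widening fallback only enlarges that gap, and the constant overhead of one extra gate cannot reverse it. In TG itself, where overhead is zero, your argument yields only more \emph{gates} for $B$, not more devices. Moreover, if primary-input drive were charged, TG would have $E=c_{\mathrm{dev}}(\mathrm{devices}+\mathrm{pad})$ with pad fixed by the boundary, so no TG pair could separate the measures at all. What actually separates them is that primary-input literals cost devices but no charge.

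The construction is salvageable once recounted. $A$ has more devices and lower $E$ than $B$, which is an opposite-direction pair. It also survives the overhead families: PFAL gives $48$ against $36$ devices, with $E(B)$ still the larger. That would be a self-contained analytic witness. Your second fallback is what the paper actually does. It exhibits equivalence-checked measured pairs, chiefly the accepted re-encoding on \texttt{c880}, which raises the device count from $1202$ to $1348$ while lowering $E$ from $0.839256$ to $0.818686$\,pJ. A \texttt{c1355} affine-pass arm serves as the gate-count counterpart.
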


\begin{proof}
By exhibition, with measured rather than constructed witnesses.  On
\texttt{c880} an accepted re-encoding raises the device count from
$1202$ to $1348$, an increase of $12.1\%$, while lowering $E$ from
$0.839256$ to $0.818686$\,pJ.  In the other direction, on
\texttt{c1355} an arm of the affine pass reaches $201$ gates from $518$,
and maps to $2868$ devices against the original's $2576$.  That is fewer
gates, more devices, and higher energy than a different arm with more
gates.  Both pairs compute their original functions, as verified by the
flow's equivalence check.
\end{proof}

These are consequences of Proposition~\ref{prop:cost} rather than
anomalies.  A rewrite that removes gates while raising fanout raises the
literal-occurrence count that \eqref{eq:cost} bills.  This is the reason
the flow does not select a candidate on area or gate count, and it is
why we regard the common practice of using area as a proxy for dynamic
power in pass-transistor styles as unsafe rather than merely
approximate.

After covering, a fanout-one absorption pass folds each single-consumer
block into its consumer, removing a mapped gate and a charged internal
net.  It hill-climbs on the real objective rather than on a proxy,
because the two were measured to disagree in sign on one circuit.  It
takes the first improvement rather than the best, because
best-improvement re-pricing was measured at $3.4\times$ and $14.9\times$
baseline cost on two circuits.  The pass typically improves the uncapped
table while regressing the capped one, busting the series bound, so the
acceptance gate rejects it.  The run's gate receipts
(Section~\ref{sec:ledger}) record that outcome explicitly, which
distinguishes it from an option that had no effect.

\IfFileExists{figures/ex_cover.tex}{\input{figures/ex_cover}}{}

Figure~\ref{fig:ex-cover} shows what the cover does with the six gates
of \texttt{c17}.  Cut enumeration offers each node its $K$-feasible
cuts.  The priced cover selects two cones, one per primary output, with
cuts $\{1,2,3,6\}$ and $\{2,3,6,7\}$.  No internal net survives as a
block of its own, and all four internal NAND gates are absorbed.

The decision worth pausing on is net~$16$, which lies inside both cones.
The cover could have kept it as a separate block, computed once and read
twice.  Instead it duplicated the logic into both blocks.  That is the
duplication discount at work, dividing a shared leaf's accumulated cost
by its fanout, and the choice was priced rather than assumed.  On this
circuit duplication wins because a mapped block carries a charged output
net that two pass devices would have to charge, while the duplicated
logic charges only the leaves it reads.

The lower half of the figure is the tool's own output for the two
blocks, as series-parallel conduction expressions over signed literals.
Reading the first, the positive rail of net~$22$ conducts when
$(1\wedge3)$ or when $(2\wedge(\bar3\vee\bar6))$, which is
$\overline{10\wedge16}$ pushed through DeMorgan's relation, and the negative rail
is its dual.  Counting literal occurrences in the two expressions gives
ten pass devices in block~$22$ and twelve in block~$23$.  Those are the
twenty-two the energy model bills, arrived at without counting a gate.

\subsection{Technology Mapping and Realization Choice}
\label{sec:map}

A mapped block is a pair of series-parallel conduction trees over signed
literals, one per rail.  The construction is due to DeMorgan's relation.  Conjunction
becomes series on the positive rail and parallel on the negative,
disjunction the reverse, parity becomes the familiar pair of two-branch
parallel networks, and negation is a rail exchange costing nothing.
Wide parity is folded pairwise, each fold reading each input twice per
rail, which is why a parity consumer contributes four literal
occurrences rather than two.  The DeMorgan dual does not preserve
depth, since series sums and parallel maxes, so the two rails of one
gate can have very different series depth.  That single fact drives the
bounding pass of Section~\ref{sec:cap}.

Under the default route the mapper builds more than one candidate
realization per block and keeps the less expensive option.  The
candidates are the structural form, a small-support exact form where the
support permits it, and a shared multi-output decision-diagram form in
the manner of Lindgren \emph{et al.}~\cite{lindgren_lowpower_bdd}.  In
that third form the whole technology cover is built as a single shared
multi-output BDD forest and realized directly as a dual-rail
transmission-gate multiplexer network, so a diagram vertex shared
between several outputs becomes one real gate with fanout rather than
being duplicated per cone.

In keeping with the paper's premise the shared-forest candidate is
driven by switching energy rather than by area.  Two variable orders are
built, a node-count sift and a switching-probability sift minimizing the
vertex sum $\sum_v 2p_v(1-p_v)$ over the forest, with each vertex
probability computed exactly bottom-up from the forward tags.  The
lower-energy realization is kept.  The candidate is admitted only when
it strictly improves the capped table and leaves the uncapped table no
worse, so enabling it does not regress a published number.  Realization
arbitration prices candidates \emph{after} series bounding rather than
before.  That ordering was installed after a defect was found in which
the ranking inverts across the bound on one circuit, so the arbitration
had been choosing on a metric that is not reported.

This matters for interpreting every later result.  
\tool~already shares identical diagram vertices, so a subsequent pass that
finds sharing is competing against sharing that has already happened,
and any new transformation must be priced with this candidate enabled
rather than against a naive structural mapping.

\IfFileExists{figures/ex_map.tex}{\input{figures/ex_map}}{}

Figure~\ref{fig:ex-map} is block~$22$ of \texttt{c17} as the mapper
realizes it, drawn by the tool from the mapped netlist rather than by
hand.  The structure is the conduction expression of the previous
subsection.  The positive rail is a parallel pair of series branches,
the negative rail is the De~Morgan dual, series of parallel, and each of
the ten devices is a transmission gate driven by a named rail.

This figure is where the cost model of Section~\ref{sec:ledger} becomes
tangible.  The quantity billed is the number of gate terminals that must
be charged, which is the number of literal occurrences in the drawing.
It is visibly not the number of gates.  One mapped gate here carries ten
devices, and a different cover of the same function would carry a
different number with the same gate count.  On a circuit this small the
structural form wins, and the shared-diagram candidate is built, priced,
and declined.

\subsection{Realizability: The Series Bound}
\label{sec:cap}

A series chain of $d$ pass devices has resistance of order
$d\,R_{\mathrm{on}}$ and degrades the signal it passes, so a real
process bounds $d$.  A post-mapping pass bounds every chain to a
user-specified depth by inserting buffered stages.  Let a rail network
be a series-parallel tree $t$ over signed literals, with series depth  
$\delta(\mathrm{lit})=1$,
$\delta(\mathrm{ser}\,t_1\ldots t_k)=\sum_i\delta(t_i)$, and
$\delta(\mathrm{par}\,t_1\ldots t_k)=\max_i\delta(t_i)$.

\begin{proposition}[Series bounding]
\label{prop:cap}
For every bound $c\ge1$ and every rail network $t$, the bounding
construction below returns a network $t'$ with $\delta(t')\le c$
computing the same conduction function, together with a set of
extracted stages, and the iterated pass terminates.
\end{proposition}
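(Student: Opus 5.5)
The proof will be structural induction on the series-parallel tree $t$, measuring progress by the depth function $\delta$. The construction is not yet given in the text, so I fix the natural one and argue for it. For a rail network $t$, bound $c$, and a supply of fresh net names, define $\mathrm{bound}(t)$ recursively. A literal is returned unchanged, since $\delta=1\le c$. For $\mathrm{par}\,t_1\ldots t_k$, bound each child and return the parallel composition; $\delta$ of a parallel node is the maximum over its children, so the bound is inherited. For $\mathrm{ser}\,t_1\ldots t_k$, bound each child to get $t_i'$ with $\delta(t_i')\le c$. Then scan the $t_i'$ left to right, accumulating a running sum of depths. Whenever adding the next factor would push the sum above $c$, close the current prefix as a \emph{stage}: it becomes a separately buffered dual-rail block driving a fresh net $s$, and it is replaced in $t$ by the single literal $s$, of depth $1$.

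\textbf{Functional correctness.} I would argue this block by block. The conduction function of a series node is the conjunction of its children's conduction functions, and that of a parallel node is the disjunction. A buffered stage with output $s$ conducts exactly when its extracted sub-network conducts, since the mapper realizes the stage's positive rail as $s$ and its DeMorgan dual as $\bar s$. Substituting $s$ for the extracted subtree therefore leaves the Boolean conduction function unchanged, once $s$ is read as that subtree's function. Because every extraction is a substitution of equals, induction on the tree gives that $t'$ together with its stages computes the original conduction function. The same argument applied to the dual rail keeps the two rails complementary. They may extract different subtrees, since their depths differ, as noted in Section~\ref{sec:map}, but each is handled independently.

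\textbf{Depth bound.} In a series node, each closed prefix is replaced by a depth-$1$ literal. The residual sum consists of these literals plus one final group whose depth is at most $c$. If the number of stage literals would itself exceed $c$, apply the same grouping again to the new series node. This is where I expect the real obstacle, and the case $c=1$ is the sharpest instance of it. With $c=1$, a series node of $k\ge2$ literals can only be reduced to a single literal by extracting a stage that is itself a series of two literals, which again has depth $2$. So the recursion must apply to the extracted stages as well, and termination does not follow from the top-level tree alone.

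\textbf{Termination.} I would handle this with a well-founded measure over the \emph{iterated pass}: the multiset of depths of all networks (main plus stages) exceeding $c$, ordered by the multiset extension of $<$ on $\mathbb{N}$. Each extraction replaces one network of depth $D>c$ by networks of strictly smaller depth. The parent's depth drops because a subtree of depth at least $2$ is replaced by a literal; the stage's depth is less than $D$ because it is a proper sub-series. Hence the measure decreases strictly, and by Dershowitz--Manna the pass terminates.

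For $c=1$ I must also ensure that a stage of depth $2$, namely $\mathrm{ser}(a,b)$, counts as a \emph{gate} realization rather than a rail network subject to the bound. Otherwise no finite construction exists. I would resolve this by reading $c=1$ as allowing a stage to be realized as a buffered gate whose own pull network is itself bounded, and I would check that the paper's construction extracts only subtrees of depth at most $c$ when $c\ge2$. If the construction instead extracts whole children, the multiset argument above still applies unchanged.
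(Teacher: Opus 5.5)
\textbf{There is a genuine gap in your series case, and it is not confined to $c=1$.}

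Your rule closes the current prefix whenever the next factor would overflow. That prefix can have depth one, either a single literal or a stage literal created earlier, and replacing it by a fresh depth-one literal makes no progress.

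Take $c=2$ and $t=\mathrm{ser}(a,P)$ with $P=\mathrm{par}(\mathrm{ser}(x,y),z)$, so $\delta(P)=2$ and $\delta(t)=3$. Your scan closes $[a]$ as a buffer stage $s$ and leaves $\mathrm{ser}(s,P)$, still of depth $3$. Your regrouping trigger (more than $c$ stage literals) does not fire. If you regroup anyway, the scan closes $[s]$ again, indefinitely.

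So two steps fail:
\begin{itemize}
\item The depth step fails, because the residual has depth $m+\delta(L)$, which can exceed $c$ even when $m\le c$.
\item The termination step rests on the premise that ``a subtree of depth at least $2$ is replaced by a literal.'' Your rule does not guarantee this, so your multiset measure need not decrease.
\end{itemize}

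\textbf{The missing idea is the paper's treatment of a child that does not fit.} Instead of closing whatever precedes it, that child is extracted whole onto its own stage, and the parent sees a depth-one literal. The paper combines this with chaining consecutive segments through an accumulator literal on the previous segment's output. Each later segment then has budget $c-1$, and stage literals never accumulate at the parent. Together these force progress whenever $c\ge2$. On the example above, the construction returns $\mathrm{ser}(a,s_P)$ of depth $2$.

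Your diagnosis of $c=1$ is correct, and it also affects the paper's construction, whose accumulator segments then have budget $0$. Your proposed reading is circular, though. A gate whose two rails both have depth one computes a single literal. So at $c=1$ no construction that also bounds the stages exists, and the case must be excluded rather than reinterpreted.

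Finally, your decrease argument looks only at the rail being bounded. Each new stage also carries the DeMorgan complement of what it extracted, and a shallow parallel subnetwork has a deep series complement: the complement of $\mathrm{par}(x_1,\dots,x_k)$ has depth $k$. Those complementary rails must enter the measure. This is why the paper iterates rounds that bound both rails of every gate.
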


\begin{proof}
The construction proceeds by structural recursion with an accumulator.
A parallel node recurses on its branches independently, which preserves
the conduction function branch-wise and cannot increase depth beyond the
branch maxima.  A series node is partitioned into consecutive segments,
each composed through an accumulator literal on the previous segment's
output.  The composed segment conducts exactly when the previous stage
conducted and the segment conducts, so the conjunction of segment
functions equals the original series function, and every segment
carrying the accumulator has budget $c-1$.  A child too deep to fit the
remaining budget is not recursed into with a reduced bound.  It is
extracted whole and replaced by a literal on its own new stage, which is
legal because a child is itself a conduction subnetwork and buffering
its output yields a literal of depth one computing the same condition.
Each extracted segment becomes a real gate on both rails, its opposite
rail the DeMorgan complement of the extracted subnetwork, so the
rail-consistency invariant, that the two rails of every gate compute
complementary conduction functions, is preserved.  The post-condition
$\delta\le c$ holds at every return by induction on the tree.  Literals
have depth one, parallel nodes return the maximum of bounded branches,
and series nodes return segments each of which was constructed within
its budget.  For termination of the iterated pass, note that because the
DeMorgan dual exchanges series and parallel, a network satisfying the
bound can have a dual that does not, so each round bounds both rails of
every gate and repeats.  Each round either changes nothing, in which
case the pass stops, or strictly reduces the multiset of rail depths
exceeding the bound in the multiset order, since every rewritten rail
ends at depth at most $c$ and inserted stages have depth at most $c$ by
construction.  A strictly decreasing sequence in a well-founded order is
finite, so the iteration terminates.
\end{proof}

We treat the bound as a design rule rather than as a cost term.  The
energy model has no delay term, so it receives an inserted stage's cost,
a new charged internal net, and none of its benefit.  Were the bound
weighted into the cover and then relaxed, modeled energy would improve,
so an optimizer given the choice would remove every buffer.  The bound
is therefore enforced after mapping, and both the unbounded and bounded
figures are reported in every table.  Reporting one number alone would
force a choice between a circuit that may not be realizable and a
circuit that is not comparable with prior work imposing no bound.

A pass-transistor chain is an RC ladder, so delay grows roughly
quadratically in chain length.  With current technology this permits two
or three transmission gates in series before a restoring buffer.  The
value of six used in the validation runs reported here is therefore not
a realistic design rule.  A more practical bound of three raises the
charged term by over an order of magnitude on two of the development
benchmarks, and natural depth is seven to ten in most of the set.  Six
is an engineering compromise rather than a technology-justified limit.
Part of the modeled advantage is purchased with series depth that a real
design rule might not permit, and we state that here so the tables are
read accordingly.  Both columns of energy estimates in those tables are
bounded by the same pass with the same parameter.

\IfFileExists{figures/ex_cap.tex}{\input{figures/ex_cap}}{}

Figure~\ref{fig:ex-cap} illustrates the bounding construction of
Proposition~\ref{prop:cap} on a chain of eight series devices with the
bound set to three.  The chain is partitioned into consecutive segments,
and each segment after the first is composed through an accumulator
literal, the buffered output of the previous segment, so that a segment
conducts exactly when the previous stage conducted and this segment
conducts.  Every segment that carries an accumulator therefore has
budget $c-1$ rather than $c$, which is why eight devices split three,
two, three rather than three, three, two.  Each extracted segment
becomes a real dual-rail gate whose opposite rail is the De~Morgan
complement, so the rail-consistency invariant the verifier checks
survives insertion.

The cost of the bound is visible on the fixture of
Section~\ref{sec:davio}, where imposing it takes the billed device count
from $42$ to $56$.  That is the bound doing its job.  It makes a circuit
realizable rather than less costly, which is why it is enforced after
mapping and not weighted into the cover, and why both figures are
reported everywhere in this paper.

\subsection{The Ledger: Tables, Receipts, and the Never-Regress Rule}
\label{sec:ledger}

Every candidate circuit reported in the synthesis flow is reduced to two
numbers.  $T_1$ is the per-cycle switched-capacitance energy of the
mapped network before buffer insertion.  $T_2$ is the same figure after
it.  Both are taken only after the mapped network has been validated, so
a candidate that cannot be validated raises an error rather than being
priced.

Two tables are reported rather than one.  The series cap makes a circuit
realizable rather than less costly, so there is no defensible exchange
rate between the two figures, and a scalarization would admit moves that
trade one against the other.

Acceptance is dominance on the pair.  Call a move $M$ \emph{admissible}
at state $s$ when $M(s)$ is equivalent to the original netlist and
\[
T_1(M(s))\le T_1(s),\qquad T_2(M(s))\le T_2(s),
\]
with at least one inequality strict.  Equivalence is checked against the
\emph{original} netlist rather than against the current incumbent.  The
check is exhaustive at ten support variables or fewer and by sampling
above that limit.  Checking against incumbents would let a slow drift
accumulate.  Checking against the original netlist means a malformed
candidate cannot be accepted at any depth of pass composition.

\begin{proposition}[Dominance]
\label{prop:dominance}
Any finite sequence of admissible moves carries a netlist to a state
that dominates the input in both tables, with at least one strict
improvement if the sequence is nonempty.  In particular, the flow cannot
return a circuit worse than its input in either reported figure.
\end{proposition}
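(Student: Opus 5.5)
The plan is an induction on the length $k$ of the move sequence, using only the definition of admissibility and the transitivity of $\le$ on the reals. Write the sequence as $M_1,\ldots,M_k$ and the visited states as $s_0$ (the input netlist, with $T_1(s_0)$ and $T_2(s_0)$ the figures of the netlist as supplied), $s_i=M_i(s_{i-1})$. The invariant I will carry is the following. For every $i$, the state $s_i$ is equivalent to the original netlist, and $T_j(s_i)\le T_j(s_0)$ for $j=1,2$. In addition, for $i\ge1$, at least one of the two inequalities is strict.

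The base case $i=0$ is trivial, since no strictness is claimed. For the step, admissibility of $M_{i}$ at $s_{i-1}$ gives equivalence of $s_i$ to the \emph{original} netlist directly. Equivalence is checked against the original rather than the incumbent, so no transitivity of equivalence through intermediate states is needed. That is the point of that design choice, and I would remark on it. Admissibility also gives $T_j(s_i)\le T_j(s_{i-1})$ with one of them strict, and chaining with the inductive hypothesis yields $T_j(s_i)\le T_j(s_0)$. For strictness, let $j^*$ be the table on which $M_1$ is strict. Then
\[
T_{j^*}(s_k)\le T_{j^*}(s_1)<T_{j^*}(s_0)
\]
by the non-strict chain from $s_1$ onward. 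So the strict improvement is inherited from the first move, and no bookkeeping about which table later moves improve is required.

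The "in particular" clause then follows once I observe that the flow's output is either $s_0$ (no move accepted) or some $s_k$ reached only through accepted, hence admissible, moves. In both cases $T_j(\text{output})\le T_j(s_0)$ for both tables. The one step needing care, and the only real obstacle, is this last identification. I must argue that every stage which changes the reported netlist passes through the acceptance gate, so that rejected candidates (e.g. the fanout-one absorption pass that busts the series bound) leave the state unchanged. I must also argue that $T_1,T_2$ are well-defined functions of the state, which holds because pricing only occurs after validation and the pipeline is deterministic. I would state this as a hypothesis about the flow rather than prove it, since it is an architectural property, not a mathematical one.
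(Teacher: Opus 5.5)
Your proposal is correct and follows the paper's own proof: induction on the sequence length, coordinatewise transitivity of $\le$, the strict inequality from the first move carried through by the later non-strict ones, and equivalence to the \emph{original} netlist taken directly as a conjunct of admissibility at every step. The paper leaves implicit the architectural hypothesis you isolate for the ``in particular'' clause, namely that $T_1,T_2$ are well-defined functions of the state and that every output-changing stage is gated. That caveat is apt: the paper's own \texttt{gray2bin16} result, where the pre-filter's incumbent pricing differs from the ordinary flow and $T_1$ ends at $1.0526$, is exactly a case where that hypothesis fails.
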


\begin{proof}
By induction on the sequence length.  The empty sequence dominates
trivially.  If $s_k$ is reached from $s_0$ by $k$ admissible moves and
satisfies $T_i(s_k)\le T_i(s_0)$ for $i\in\{1,2\}$, then an admissible
move at $s_k$ yields $s_{k+1}$ with $T_i(s_{k+1})\le T_i(s_k)\le
T_i(s_0)$ by transitivity of $\le$ in each coordinate, and the strict
inequality granted by admissibility of the first move is preserved by
the subsequent non-strict ones.  Equivalence to the original is a
conjunct of admissibility at every step, so the final netlist computes
the original function.
\end{proof}

\begin{proposition}[Non-confluence]
\label{prop:nonconfluent}
The admissible-move relation is not confluent.  There exist states from
which different orders of admissible moves terminate at incomparable
local optima.
\end{proposition}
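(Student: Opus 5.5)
The proof is by exhibition, in the same style as Corollary~\ref{cor:nonmonotone}: non-confluence is an existential claim, so one explicit witness settles it. The witness I will aim for is a state $s_0$ with two admissible moves $M_a$ and $M_b$ (Section~\ref{sec:ledger}) such that $s_a=M_a(s_0)$ and $s_b=M_b(s_0)$ are each terminal, meaning no further admissible move exists from either, and the pairs $(T_1,T_2)$ at $s_a$ and $s_b$ are incomparable under the dominance order. The last condition means $T_1(s_a)<T_1(s_b)$ while $T_2(s_a)>T_2(s_b)$, or the reverse. If both are terminal and distinct in this strong sense, then neither can be joined to a common successor. Joining would require moving to some $s$ with $(T_1,T_2)$ at most that of $s_a$ and also at most that of $s_b$. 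The dominance transitivity argument in the proof of Proposition~\ref{prop:dominance} shows that any continuation can only decrease both coordinates, but here no continuation exists at all.

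The mechanism I expect to produce such a witness is the tension, already noted in the paper, between the uncapped and capped tables. Take a small fixture whose rail networks have a series node of depth just above the bound $c$. On this fixture:
\begin{itemize}
\item One rewrite $M_a$ lowers the literal-occurrence count, and hence $T_1$ by Proposition~\ref{prop:cost}, while leaving the post-bounding stage structure, and hence $T_2$, unchanged.
\item A different rewrite $M_b$ reorganizes the series chain so that the bounding construction of Proposition~\ref{prop:cap} extracts fewer stages. This lowers $T_2$ at the cost of leaving $T_1$ fixed.
\item Each move destroys the precondition for the other. For example, $M_a$ merges literals that $M_b$ needs in order to regroup the chain. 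Any later attempt to recover the other improvement then raises one table, and the admissibility rule forbids that.
\end{itemize}
Equivalence to the original netlist is checked exhaustively, since the fixture stays within ten support variables. The resulting numbers are recorded from run records, as in Corollary~\ref{cor:nonmonotone}.

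The main obstacle is terminality: showing that no admissible move exists from $s_a$ or from $s_b$. The move set is the finite repertoire of the shipped passes, applied to a finite netlist, so I would settle this by exhaustive enumeration of every pass application at both states. I would report that every candidate either fails equivalence or worsens one table; the gate receipts record such rejections explicitly. If a fully terminal pair is hard to find, I would fall back to a weaker witness. In that version, every admissible continuation from $s_a$ keeps $T_2$ strictly above $T_2(s_b)$, and every continuation from $s_b$ keeps $T_1$ strictly above $T_1(s_a)$. Because the coordinates are monotone non-increasing along admissible sequences (Proposition~\ref{prop:dominance}), this already prevents the two branches from ever meeting, which suffices for non-confluence.
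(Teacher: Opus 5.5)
Your skeleton matches the paper's: two admissible moves from a common state whose results are distinct, terminal, and incomparable in the product order, and so have no common descendant. The gap is that you never produce the witness, and a proof by exhibition has no content beyond its exhibit. The fixture, the two moves, and the terminality of both results are all deferred to a computation you have not run, with a fallback in case it fails. The verification you plan would also not establish terminality as stated. Pass receipts record only the candidates a pass chose to price under its pool, candidate cap and budget. This is why the paper calls budgeted results floors rather than fixpoints, and why the pre-filter prices only $25$ of $992$ legal moves on \texttt{gray2bin32}. So ``every pass rejected every candidate it priced'' does not show that no admissible move exists. Your fallback's monotonicity argument is sound; either of its two conditions alone already blocks a common descendant. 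But it quantifies over all admissible continuations, so it inherits the same verification problem.

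The idea that closes this in the paper is the one you mention (``each move destroys the precondition for the other'') and then set aside in favour of search: make terminality follow from the definition of the moves. Take the moves to be region-replacement rewrites, with $M_1$ rewriting $\{g_1,g_2\}$ and $M_2$ rewriting $\{g_2,g_3\}$. Either move consumes $g_2$, so afterwards neither source region is present. Neither move, including the one just applied, is then applicable, and both results are terminal for $\{M_1,M_2\}$ by construction. The cost pairs $(T_1(s)-a,\,T_2(s))$ and $(T_1(s),\,T_2(s)-b)$ are incomparable, with no enumeration needed. This gives terminality relative to the two-move system rather than the full repertoire you aim for, but it is actually established.

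Separately, your mechanism for $M_a$ conflicts with the accounting in the proof of Proposition~\ref{prop:t2floor}. There, the bounding pass removes no term of $T_1$, and what it adds is fixed by the extractions it performs. So a move that lowers literal occurrences while leaving the stage structure unchanged lowers $T_2$ by the same amount. Such an $M_a$ improves both tables, and incomparability with $M_b(s_0)$ then requires $M_b$'s saving on $T_2$ to exceed $M_a$'s, a condition you never impose. Incomparability needs only $T_1(s_a)<T_1(s_b)$ and $T_2(s_a)>T_2(s_b)$, not that either move leave a table fixed.
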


\begin{proof}
We exhibit a state with two admissible moves whose results are both
terminal and incomparable in the product order.  Since the two results
are distinct and neither admits any further move, they have no common
descendant, which is the negation of confluence.

Let the move set be the tool's, namely equivalence-preserving rewrites,
each defined by a source region that must be present in the current
netlist and a replacement, admitted under the two-table gate.  Consider
a netlist $s$ containing three consecutive subnetworks $g_1, g_2, g_3$,
and two rewrites.  $M_1$ replaces the region $\{g_1,g_2\}$ by a network
$h_{12}$, and $M_2$ replaces the region $\{g_2,g_3\}$ by a network
$h_{23}$, both preserving the overall function.  Choose the replacements
so that $M_1$ strictly improves $T_1$ and leaves $T_2$ unchanged, while
$M_2$ strictly improves $T_2$ and leaves $T_1$ unchanged.  Both are then
admissible at $s$.  Such shapes exist in the tool's own move set and are
the generic overlap case.  A rewrite that shortens conduction paths
without changing charged load moves $T_2$ alone.  One that removes
charged internal load without touching any chain at the cap moves $T_1$
alone.  Two window rewrites whose regions overlap in $g_2$ consume one
another's source material.  After $M_1$, the region $\{g_2,g_3\}$ is no
longer present, since $g_2$ has been consumed into $h_{12}$, so $M_2$ is
not applicable, nor is $M_1$ applicable again, its own source region
having been consumed.  The state $M_1(s)$ is therefore terminal for
$\{M_1,M_2\}$, and symmetrically so is $M_2(s)$.  Their cost pairs are
$\big(T_1(s)-a,\;T_2(s)\big)$ and $\big(T_1(s),\;T_2(s)-b\big)$ for some
$a,b>0$, which are incomparable in the product order and in particular
distinct.  Two distinct terminal states reachable from one state cannot
be joined, so the relation is not confluent.
\end{proof}

Proposition~\ref{prop:dominance} licenses the claim that the flow is in
a never-regress form.  Proposition~\ref{prop:nonconfluent} is why pass
order is a user-visible parameter of the tool rather than an
implementation detail.  It is a statement about the relation, and we do
not claim on its basis that orderings differ measurably on any
particular benchmark.  Where they do differ, we report it.  The composed
window arm and the individual-pass arms of Section~\ref{sec:results}
differ on several circuits, and the held-out sweep records circuits on
which the all-passes arm misses a single-pass result, which is pass
interaction under one budget observed in the field.

Three further pieces of accounting make negative results legible and useful.
Budgets are pure cuts.  With no budget set, every enumeration returns
what it returned before, in the same order, so no measured number can
shift merely because the budget mechanism exists.  When a budget does
truncate, the truncation is recorded and reported rather than left
silent, and the result is described as a floor rather than as a
fixpoint.

Every rejected candidate leaves a receipt.  Each gate decision is on the
record with both tables before and after and a reason.  A losing
candidate discarded in silence would make the statement that toggling an
option changed nothing ambiguous between the option being inert and the
candidate being built and lost.

The energy model's conventions are stated with every figure.  The
per-cycle convention, in which every gate swings one rail per
evaluation, is reported beside an activity-weighted figure.  Pad load is
attributed to the gate that physically drives the pad, following free
rail exchanges, so a realization whose outputs are wrappers cannot evade
its pad bill.  Primary-input drive is uncharged by default.  That is a
comparison convention rather than a model of a real part, and the
excluded quantity is always computed and reported, a median $70\%$ of
the energy, so a reader can see what the convention leaves unpriced
without re-running under the other one.

\subsection{The Re-Synthesis Passes in~\tool}
\label{sec:passes}

Six optional passes are available and all are off by default.  A default
run must be fast, and the run tells the user when better results are
available at a runtime cost.  All are gated identically by the
admissible-move rule of Section~\ref{sec:ledger}, so a pass that runs
for a long time and returns the netlist unchanged has produced a result
rather than a failure.

The governing cost identity for all of them determines how a search
should be planned.  Pass runtime is essentially the number of priced
candidates multiplied by the cost of one technology mapping, since
pricing a candidate means mapping it.  The candidate cap is therefore
the direct control and the wall-clock budget is the safety net.
Enabling every pass at once costs more than the sum of the parts,
because each pass re-prices energy against what the previous one
produced.

\subsubsection{Affine-cut extraction}
\label{sec:davio}

The first pass recognizes cuts computing affine functions over $\F$ and
re-emits them as exclusive-or trees.  Detection is by Boolean
difference.  For $f:\F^n\to\F$ write $\partial f/\partial x_i =
f|_{x_i=0}\oplus f|_{x_i=1}$.

\begin{lemma}[Affine detection]
\label{lem:affine}
$f$ is affine, that is $f=c\oplus\bigoplus_{i\in L}x_i$ for some
$c\in\F$ and $L\subseteq\{1,\dots,n\}$, if and only if
$\partial f/\partial x_i$ is a constant function for every $i$.  In
that case $L=\{i:\partial f/\partial x_i\equiv1\}$ and
$c=f(0,\dots,0)$.
\end{lemma}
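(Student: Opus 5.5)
I will prove Lemma~\ref{lem:affine} through the algebraic normal form of Section~\ref{sec:switching-bg}, since the ANF is unique and the Boolean difference acts on it monomial by monomial. Write $f=\bigoplus_S a_S\prod_{i\in S}x_i$. For a fixed $i$, split the monomials by whether they contain $x_i$:
\[
f=g_i\oplus x_i\,h_i,
\]
where neither $g_i$ nor $h_i$ depends on $x_i$ and
\[
h_i=\bigoplus_{S\ni i}a_S\prod_{j\in S\setminus\{i\}}x_j .
\]
Then $f|_{x_i=0}=g_i$ and $f|_{x_i=1}=g_i\oplus h_i$, so $\partial f/\partial x_i=h_i$. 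This identity drives both directions of the proof.

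For the forward direction, suppose $f=c\oplus\bigoplus_{i\in L}x_i$. The computation above gives $h_i=1$ when $i\in L$ and $h_i=0$ otherwise. So every Boolean difference is constant, and the set where it equals one is exactly $L$.

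For the converse, suppose each $\partial f/\partial x_i=h_i$ is constant. The displayed expansion of $h_i$ is itself an ANF in the variables other than $x_i$, with distinct monomials for distinct $S\ni i$. By uniqueness of the ANF, the polynomial $h_i$ is constant only if $a_S=0$ for every $S\ni i$ with $|S|\ge2$. Applying this for every $i$ kills every monomial of degree at least two, so $f$ has degree at most one. Then $f=a_\varnothing\oplus\bigoplus_i a_{\{i\}}x_i$, which is affine, with $L=\{i:a_{\{i\}}=1\}=\{i:h_i\equiv1\}$. Finally, evaluating at the all-zero input gives $c=a_\varnothing=f(0,\dots,0)$.

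The only step needing care is the converse's use of uniqueness. A function being constant must be converted into all non-constant coefficients vanishing, and this is exactly canonicity of the ANF (invertibility of the M\"obius transform of Algorithm~\ref{alg:mobius}) applied to $h_i$ over the remaining $n-1$ variables. I expect no other obstacle; everything else is bookkeeping.
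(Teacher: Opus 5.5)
Your proof is correct, and it closes the converse by a different mechanism than the paper's proof. Your splitting $f=g_i\oplus x_ih_i$ is exactly the positive Davio expansion the paper uses, with $g_i=f|_{x_i=0}$ and $h_i=\partial f/\partial x_i$, and your forward direction is the same cofactor computation.

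The difference is in the converse. The paper never invokes canonicity of the ANF. It works on functions and peels off one variable at a time: a difference constant at $1$ gives $f=f|_{x_i=0}\oplus x_i$, and one constant at $0$ gives $f=f|_{x_i=0}$. It repeats this until a residual independent of every variable remains, whose value is $f(0,\dots,0)$. You instead read off all coefficients at once, using uniqueness of the ANF to conclude that a constant $h_i$ forces $a_S=0$ for every $S\ni i$ with $|S|\ge2$, hence $\deg f\le1$.

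Your route has three advantages. It is a one-shot argument with no iteration. It sidesteps a point the paper's peeling leaves tacit, namely that the residual $f|_{x_i=0}$ still has constant differences in the remaining variables; this holds because $\partial(f|_{x_i=0})/\partial x_j=(\partial f/\partial x_j)|_{x_i=0}$. And it recasts the lemma as ``all differences constant iff ANF degree at most one,'' which ties it directly to the degree discussion of Section~\ref{sec:switching-bg}.

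The paper's route buys economy of prerequisites. It needs only the Davio identity, not invertibility of the M\"obius transform. It also mirrors how the detector actually runs, on cofactor halves of the truth table without ever computing an ANF (Remark~\ref{rem:affine-cost}).
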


\begin{proof}
($\Rightarrow$) For affine $f$, the two cofactors with respect to $x_i$
differ by the constant $1$ when $i\in L$ and are equal otherwise, so the
difference is the constant $1$ or $0$ respectively.
($\Leftarrow$) The positive Davio
expansion~\cite{thornton_rm_chapter1} is
$f=f|_{x_i=0}\oplus x_i\cdot(\partial f/\partial x_i)$.  If
$\partial f/\partial x_i$ is the constant $1$, the expansion degenerates
to $f=f|_{x_i=0}\oplus x_i$, peeling $x_i$ off as an exclusive-or term
whose residual $f|_{x_i=0}$ does not depend on $x_i$.  If it is the
constant $0$, then $f=f|_{x_i=0}$ and $f$ does not depend on $x_i$ at
all.  Applying this to each variable in turn peels off exactly the
variables in $L$ and leaves a residual independent of every variable,
hence a constant, whose value is $f(0,\dots,0)$.  The accumulated
expression is the stated affine form.
\end{proof}

\begin{remark}[Cost of the test]
\label{rem:affine-cost}
For a cut with $K$ inputs, $\partial f/\partial x_i$ is the
exclusive-or of the two cofactor halves of the truth table, and testing
it for constancy is one sweep of $2^{K-1}$ entries.  Deciding affinity
therefore costs one linear pass over the cut's truth table for each of
its $K$ variables, $O(K2^K)$ in all, and the test stops at the first
non-constant difference.  At the shipped $K\le12$ this is a fixed
tabular cost per cut, which is why the detector runs on every enumerated
cut rather than on a filtered subset.
\end{remark}

\begin{corollary}[Realization invariance]
\label{cor:realization-invariant}
The test of Lemma~\ref{lem:affine} depends only on the function a cut
computes, not on the gate network realizing it.
\end{corollary}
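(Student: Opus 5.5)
The plan is to show that every quantity the test of Lemma~\ref{lem:affine} reads is a function of the truth table of the cut's local function $f:\F^K\to\F$, and of nothing else. The argument is a short chain of three observations.

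First, I would fix what a cut's function is, following Section~\ref{sec:eda-bg}. The function of a cut is the map from assignments of the cut nets to the value at the root node. Any two gate networks realizing the same cone function therefore produce the same truth vector $a[0..2^K-1]$ once a leaf order is fixed. This is also the object the implementation carries, since cone functions are stored as packed truth tables (Section~\ref{sec:frontend}).

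Second, I would note that the Boolean difference $\partial f/\partial x_i=f|_{x_i=0}\oplus f|_{x_i=1}$ is defined purely from $f$ as a function. Concretely, it is the exclusive-or of the two cofactor halves of the truth table, as in Remark~\ref{rem:affine-cost}. Whether each difference is constant is therefore a property of $f$ alone. Likewise the derived data $L=\{i:\partial f/\partial x_i\equiv1\}$ and $c=f(0,\dots,0)$ are read off $f$ alone.

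Third, I would invoke Lemma~\ref{lem:affine}. It says the test's verdict coincides with the intrinsic property ``$f$ is affine,'' and that the affine form it reports is unique. Uniqueness follows from the uniqueness of the ANF (Section~\ref{sec:switching-bg}), since an affine form is just an ANF of degree at most one. So two realizations computing the same function receive identical verdicts and identical extracted forms.

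The only step needing care is the leaf ordering and naming. Two realizations could present the same cut leaves in different orders, permuting the truth-table indices. I would handle this by observing two things. Permuting variables maps the set of Boolean differences bijectively onto itself. It also carries $L$ along by the same permutation. The verdict is therefore order-invariant, and $L$ is the same set of nets. I would treat this as routine bookkeeping rather than a genuine obstacle.
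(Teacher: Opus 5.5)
Your proposal is correct and follows the paper's own argument: the cofactors and their Boolean differences, which are everything the test reads, are determined by the cut's truth table, so two networks computing the same cut function receive the same verdict. Your extra remarks on the uniqueness of the extracted form $c\oplus\bigoplus_{i\in L}x_i$ (via ANF uniqueness) and on invariance under reordering the cut leaves are sound elaborations that the paper's one-line proof leaves implicit.
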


\begin{proof}
Every quantity in the test, the cofactors and their differences, is
defined from the function $f$ alone.  Two networks realizing the same
cut function have the same truth table and therefore the same
differences, so the test accepts one exactly when it accepts the other.
\end{proof}

Corollary~\ref{cor:realization-invariant} is the methodological point of the pass.  A detector built
on Lemma~\ref{lem:affine} recognizes an exclusive-or written as a
four-NAND cluster, as a NOR cluster, in AOI form, or as a wide tree
flattened by an earlier pass, uniformly and without a pattern library.
It also recognizes $k$-input affine cuts that no two-input template can
express.  Evaluating all $n$ differences on a cut of width $n$ costs
$O(n2^n)$ word operations, which is negligible at the widths a
$K$-feasible cover produces.

Commitment is sequential and conflict-free, since two cones can each be
rewritable and still share gates.  A cone is committed only if nothing
it needs has been dropped and nothing it drops has been committed, and
the pass ends with a structural self-check that refuses to return a
netlist containing an undefined net.  The width of the emitted tree is a
ladder rather than a constant, each width iterated to a fixpoint,
priced, and put to the gate.  The widths disagree in opposite directions
on different circuits, so hard-coding the width that happens to win
would encode one benchmark's answer into the method.

The pass's principal result is a property of the method rather than of a
circuit.  On \texttt{c1355} it recovers \texttt{c499} exactly.  The
netlist goes from $518$ gates to $174$ and the per-cycle figure falls
from $1.579776$ to $1.086096$\,pJ, which are \texttt{c499}'s own numbers
to six figures.  \texttt{c1355} is \texttt{c499} with its exclusive-ors
expanded into NAND clusters, and the pass un-expands them by measuring
the function.

That result has a withdrawn predecessor, and the reason for the
withdrawal is the important part.  An earlier version matched a fixed
four-gate structure discovered by inspecting \texttt{c1355}, and
produced a large and correct improvement on that circuit that is not
publishable as a general result.  The present pass tests a property of
the function and was justified by a census of recurring kernels rather
than by the circuit it helps.

\IfFileExists{figures/ex_davio.tex}{\input{figures/ex_davio}}{}

Figure~\ref{fig:ex-davio} shows the pass on a fixture written for this
paper, the parity $a\oplus b\oplus c$ expressed as two classic four-NAND
clusters, eight gates in total.

The pass takes the function of the cut, computes the Boolean difference
with respect to each of the three variables, finds all three differences
constant at one, and concludes by Lemma~\ref{lem:affine} that the cut is
affine over $\{a,b,c\}$ with constant $f(0,0,0)$.  It then re-emits the
cut as a parity tree over those leaves.  Because the test reads the
function rather than the netlist
(Corollary~\ref{cor:realization-invariant}), the same detection fires on
a NOR cluster, on an AOI form, or on a wide tree that an earlier pass
flattened, with no pattern library and no enumeration of shapes.

Measured, eight gates become two, $42$ devices become $20$, and the
capped figure falls from $0.006171$ to $0.004114$\,pJ, a ratio of
$0.6667$.  The uncapped figure does not move at all.  On this fixture
the entire saving is therefore the series bound no longer having work to
do, which is a distinction the two-table ledger makes visible and a
single scalar objective would have hidden.

\subsubsection{Bounded elimination and algebraic extraction}
\label{sec:elim}

The second pass runs bounded elimination that collapses a node into its
fanouts when it pays for itself in literal occurrences, followed by
algebraic division in the form of single-cube division, and optionally
multi-cube kernel extraction with rectangle covering.  Elimination is
what moves the energy in our measurements.  The kernel extractions
inside the same runs save literal occurrences without moving the priced
figures.  Single-cube division finds little improvement, because the
suites are two-input dominated and structural hashing has already merged
every identical-fanin pair.

The extraction filter follows the cost model of
Section~\ref{sec:cover}.  A divisor cube of $k$ literals used by $m$
nodes costs $k\cdot m$ literal occurrences inline and $k+m$ when
extracted, so it pays when $(k-1)(m-1)$ exceeds the configured gain
threshold.  The textbook two-by-two case is a dead break-even condition
under this accounting, because the model bills literal occurrences at
readers rather than at gates or area.  This is a concrete case of an
energy objective and an area objective disagreeing about what is worth
extracting.

The pass is not accepted everywhere, and the refusals indicate the gate
working.  On \texttt{c432} it improves $T_1$ by $3.5\%$ while regressing
$T_2$ by $4.5\%$, which is the trade the two-table objective exists to
refuse.

\IfFileExists{figures/ex_elim.tex}{\input{figures/ex_elim}}{}

Figure~\ref{fig:ex-elim} shows the elimination move.  A node whose
output is charged and read by two consumers is collapsed into both of
them.  The internal net leaves the ledger and each reader gains a
literal occurrence.  Whether that trade pays follows from the arithmetic
in Eq.~\eqref{eq:cost} rather than from gate count, which is why the
pass hill-climbs on the energy model itself.

On \texttt{c880} the pass is accepted at $0.8725$ of the uncapped figure
and $0.9722$ of the capped one.  The device columns beside those ratios
are a clean instance of Corollary~\ref{cor:nonmonotone} arising in
ordinary operation rather than in a constructed example.  Elimination
\emph{lowers} the uncapped device count from $1202$ to $1166$ and
simultaneously \emph{raises} the capped count from $1338$ to $1399$,
because the collapsed cubes are deeper and the series bound then inserts
more stages into them.  Energy falls on both tables regardless, and
energy on both tables is the only test the gate applies.

\subsubsection{Parallel-prefix restructuring}
\label{sec:prefix}

This pass targets serialized structures that compute running prefixes of
an associative combination.  Ripple carry and priority resolution are
the canonical cases, and these and similar structures are detected in a
phase-normalized view at no measurable cost during the netlist sweep.
The longest chain is then rebuilt as a Brent--Kung all-prefix
network~\cite{brent1982adder}\cite{kogge1973recurrence}.

The pass is a \emph{compound} move and it must be in this form.  Treeifying the
chain alone is measured as negative on most of the circuits we
examined.  On \texttt{c432} treeification alone takes the per-cycle
figure from $0.707608$ to $0.761090$\,pJ.  What the restructuring does
is expose window material that the serialized form had blocked, and only
after re-windowing does the whole move pay, reaching $0.580074$\,pJ, a
ratio of $0.8198$ against the unoptimized network.  Because the
intermediate state is worse than the input, the treeification and the
re-windowing are gated together as a single pass.  Gating them
separately would reject the move at the first step.

\IfFileExists{figures/ex_prefix.tex}{\input{figures/ex_prefix}}{}

Figure~\ref{fig:ex-prefix} shows the compound move on \texttt{rca8}, the
eight-bit ripple-carry adder that ships as the pass's own test vector.
The detector finds one carry chain of length $16$ in the
phase-normalized view and rebuilds it as a Brent--Kung all-prefix
network.  The pass then re-windows the result, pricing $512$ candidates,
accepting $5$, and skipping $210$ under the overlap guard.

The table beneath the figure justifies gating the move as a compound
operation.  Treeified alone, the circuit is worse on both tables,
yielding $1.1333$ and $1.1500$.  Re-windowed, it recovers to $1.0000$
and $1.0833$, which is better than the intermediate state but still not
better than the input.  The two-table gate therefore refuses the move
and the tool returns the netlist it was given.

A refusal is shown here rather than an acceptance because the refusal is
the more instructive outcome.  It is what a pass reporting $1.000$ in
the tables of Section~\ref{sec:results} looks like from the inside, and
it is a result rather than a failure.  The same mechanism is accepted on
circuits with denser reconvergence around the chain.  On \texttt{c432}
the recorded triple is $0.707608$\,pJ before, $0.761090$\,pJ treeified,
and $0.580074$\,pJ after re-windowing, a ratio of $0.8198$.  In both
cases the intermediate state is worse than the input, so gating the two
steps separately would reject the move at the first step and the
accepted result would be unreachable.

\subsubsection{Interior affine windows}
\label{sec:windows}

Two related passes re-encode the interior of a circuit region rather
than a single cut.  One is a single-output form operating inside one
reconvergent cone.  The other is a multi-output form operating on a
shared-leaf region with several roots.

This operation is a spectral translation.  An invertible affine change
of coordinates, $u = Ac\oplus m$, is performed on the cut variables and
the local function is re-expressed as $g(u)=f(A^{-1}(u\oplus m))$ in
algebraic normal form.  This is a classical construction from spectral
logic synthesis~\cite{hurst1985spectral}\cite{thornton2001spectral}, applied
here at cut granularity and read against the energy objective rather
than against diagram size.

The local score orders candidates rather than accepting them.
Acceptance is global.  The full netlist is remapped and repriced under
the release configuration and put to the two-table gate.  Encoder rows
are emitted only for coordinates in the transformed function's support.
A weight-one uncomplemented row is aliased directly to its leaf and
costs no gate.  Heavier rows become balanced parity trees.

The multi-output form keys windows by cut rather than by root, so cones
over a shared leaf set merge, and realizes them through a shared term
dictionary.  A single affine encoder serves the whole window and each
distinct monomial is built once.  An interior net that is a primary
output or has an outside reader becomes an additional root rather than
disqualifying the window.  This promotion is a necessity rather than a
refinement, since a carry-save array contains no single-output
fanout-closed cone at all, because full-adder cells are two-output.
Neither window pass subsumes the other in our measurements.

\IfFileExists{figures/ex_window.tex}{\input{figures/ex_window}}{}

Figure~\ref{fig:ex-window} shows the window passes on
\texttt{reconv24}, a fixture of three eight-input parity trees
recombined through majority logic.  It exists because the standard
benchmarks we examined offer no small window-pass acceptance.

Two properties of the implementation are visible in the figure.  Encoder
rows are emitted only for coordinates the transformed function uses, and
a weight-one uncomplemented row is aliased directly to its leaf and
costs no gate, so a coordinate change that merely permutes is free.  The
local score, which counts monomials, literals, and emitted encoder-row
weights, only \emph{orders} candidates.  Acceptance requires remapping
and repricing the whole netlist under the release configuration and
passing the two-table gate.

One window is accepted in this example.  The uncapped figure falls to
$0.6296$ of default and the capped figure to $0.9145$.  The gap between
those two ratios is the series bound reclaiming part of the win, since
the re-encoded interior is deeper than what it replaced.  A tool
reporting one number would have had to choose which of the two ratios to
publish.

\subsubsection{The linear pre-filter}
\label{sec:bdec}

The last pass re-encodes the \emph{output} space with an invertible
matrix $B\in\mathrm{GL}(m,\F)$.  The motivating idea is that an encoding
of the dependent variables in a subcircuit can map the transformed
circuit core to one that expends less energy, to the point that the
encoder and decoder energy plus the transformed core energy is a net
saving.  The core computes $h=Bf$ and a decoder network at the boundary
computes $f=B^{-1}h$.

The antecedent is the linear-transformation material of
\cite[Sec.~6.3]{thornton2001spectral}, developed there for
decision-diagram size reduction.  What differs here is the objective it
is searched against, energy, and the asymmetry that the objective
induces between $B$ and $B^{-1}$.

\begin{lemma}[Decoder cost]
\label{lem:decoder}
The composed network computes $f$ exactly, and in dual-rail logic the
decoder's mapped cost is $\sum_{r:\,w_r\ge2}(w_r-1)$ two-input
exclusive-or gates, where $w_r$ is the weight of row $r$ of $B^{-1}$.
\end{lemma}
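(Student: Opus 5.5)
The lemma makes two claims: that the composed network computes $f$ exactly, and that the decoder costs a stated number of exclusive-or gates. I would treat them separately.

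For correctness, the argument is purely algebraic over $\F$. The core computes $h=Bf$ with $B\in\mathrm{GL}(m,\F)$, so $B^{-1}$ exists. The decoder applies $B^{-1}$ componentwise at the boundary, and its output is $B^{-1}h=B^{-1}Bf=f$ for every input vector, since matrix multiplication over $\F$ is associative. I would then note that output $r$ of the decoder is the linear form $\bigoplus_{j:\,(B^{-1})_{rj}=1}h_j$. This is because the decoder is linear rather than affine: $B^{-1}$ carries no constant offset. Equivalence to the original netlist is therefore exact, not merely sampled, and the admissibility check of Section~\ref{sec:ledger} is only a confirmation.

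For the cost, I would count gates row by row. Row $r$ of $B^{-1}$ has weight $w_r$ and requires the exclusive-or of $w_r$ core signals. Invertibility forces $w_r\ge1$, because a zero row would make $B^{-1}$ singular.

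\begin{itemize}
\item If $w_r=1$, the output is a single $h_j$. It is realized as a wire, aliased exactly as the weight-one encoder rows of Section~\ref{sec:windows} are, and costs no gate.
\item If $w_r\ge2$, a tree of two-input exclusive-or gates over $w_r$ leaves has exactly $w_r-1$ internal nodes. Each gate merges two partial results into one, so it reduces the number of partial results by one. A balanced parity tree, as emitted by the tool, therefore uses $w_r-1$ gates.
\end{itemize}

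Summing over the rows with $w_r\ge2$ gives the stated total. The dual-rail qualifier then enters through the mapping rules of Section~\ref{sec:map}. Parity is realized as the two-branch parallel pair on each rail. Negation is a free rail exchange, so no inverter gates are added. Hence each two-input exclusive-or is exactly one mapped gate, and the count is the same whether read in gates or in mapped blocks.

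The main obstacle is pinning down the word ``cost.'' The count must be the number of two-input exclusive-or gates the emitter produces, not an optimum that sharing common sub-sums across rows could lower. I would state explicitly that the decoder rows are built independently, with no cross-row sharing, so that the count is exact for the emitted decoder and is only an upper bound on the optimal decoder. If the tool did share sub-sums, the lemma would have to be weakened to an upper bound.
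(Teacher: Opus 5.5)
Your proposal is correct and follows the paper's proof essentially step for step. Correctness comes from $B^{-1}Bf=f$, weight-one rows cost nothing as a wire or a free rail exchange, and a two-input exclusive-or tree over $w_r\ge2$ leaves has exactly $w_r-1$ internal nodes, summed over the rows of weight at least two. Your caveat that the count applies to the row-by-row construction without cross-row sharing matches the paper's closing remark that no other devices are emitted by the decoder construction. The \texttt{bdtoy2} figure bears this out: it bills $15$ gates for a lower-triangular all-ones $B^{-1}$, which prefix sharing would realize with $5$.
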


\begin{proof}
Correctness is immediate from invertibility, since the decoder computes
$B^{-1}h=B^{-1}Bf=f$.  For the cost, row $r$ of $B^{-1}$ computes the
parity of the $w_r$ core outputs its nonzero entries select.  If
$w_r=1$ the row copies a single rail pair, possibly complemented.  In
dual-rail logic a copy is a wire and a complement is an exchange of the
two rails, so no mapped device is contributed.  If $w_r\ge2$, a parity
of $w_r$ inputs is computed by a tree of two-input exclusive-or gates.
Any such tree has exactly $w_r-1$ internal nodes, by induction on $w_r$,
since a tree combining $w_r$ leaves pairwise performs one combination to
reduce the leaf count by one and $w_r-1$ reductions in total.  Summing
over rows of weight at least two gives the stated count, and no other
devices are emitted by the decoder construction.
\end{proof}

\begin{corollary}[Asymmetry]
\label{cor:asymmetry}
Permutation matrices are free, and the decoder's cost depends on the
heavy rows of $B^{-1}$, not of $B$.  A light $B$ may have an expensive
inverse.
\end{corollary}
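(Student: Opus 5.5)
The corollary follows from Lemma~\ref{lem:decoder} once its three claims are separated: permutations cost nothing, the cost is read from $B^{-1}$ and not from $B$, and there is an explicit light $B$ whose inverse is heavy. In each case I substitute into the formula $\sum_{r:\,w_r\ge2}(w_r-1)$, where $w_r$ is the weight of row $r$ of $B^{-1}$.

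For the first claim, let $B=P$ be a permutation matrix. Then $P^{-1}=P^{\mathsf{T}}$ is also a permutation matrix, so every row of $B^{-1}$ has weight exactly one. The sum in Lemma~\ref{lem:decoder} has no rows with $w_r\ge2$ and is therefore zero. The decoder reduces to wires, and, by the rail-exchange remark in that proof, to complemented wires as well. I would add that the same holds for any $B$ whose inverse rows all have weight one, but over $\F$ an invertible matrix with weight-one rows is exactly a permutation matrix. So the permutation matrices are precisely the decoder-free encodings.

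For the second claim, the lemma states the decoder cost purely in terms of the row weights of $B^{-1}$. I only need to note that the decoder computes $f=B^{-1}h$, so its emitted gates are determined by $B^{-1}$ and $B$ never enters the count. The weight of $B$ governs the core side, through $h=Bf$, and not the decoder.

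The third claim is existential, so I would exhibit a witness. A natural choice is the upper bidiagonal matrix $B\in\mathrm{GL}(m,\F)$ with ones on the diagonal and on the superdiagonal. Every row of $B$ has weight at most two. Its inverse over $\F$ is the full upper-triangular all-ones matrix, whose row $r$ has weight $m-r+1$. Verifying this is the only real computation: I multiply the two matrices and check that the telescoping sums $1\oplus1=0$ cancel everywhere off the diagonal. Substituting into the lemma then gives a decoder cost of $\sum_{r=1}^{m-1}(m-r)=m(m-1)/2$ two-input XOR gates. This is quadratic in $m$, whereas applying the same formula to the rows of $B$ itself would give only $m-1$. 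That contrast is the asymmetry the corollary asserts. I expect no serious obstacle; the one point needing care is confirming the inverse, and it can also be seen as prefix parity undoing adjacent differencing.
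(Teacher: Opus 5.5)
Your proposal is correct and follows the paper's proof. Both arguments read the three claims off the row-weight formula of Lemma~\ref{lem:decoder}, and both use the bidiagonal and upper-triangular all-ones pair over $\F$ as the witness. The paper states that example with the heavy matrix as $B$ and appeals to the converse, whereas you take the light bidiagonal $B$ directly, which is the direction the corollary asserts. Your explicit count of $m(m-1)/2$ decoder gates against $m-1$, and your observation that permutations are exactly the encodings with zero decoder cost, are small sharpenings of the same argument.
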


\begin{proof}
A permutation matrix has all row weights equal to one, so
Lemma~\ref{lem:decoder} assigns it zero gates.  The cost expression
depends only on the row weights of $B^{-1}$.  Over $\F$ the row weights
of a matrix and of its inverse are not related by any monotone function,
as the standard example of an upper-triangular $B$ with ones on and
above the diagonal shows, whose rows have weight at most $n$ while
$B^{-1}$ is bidiagonal, and conversely, so bounding $B$ alone does not
bound the decoder.
\end{proof}

Corollary~\ref{cor:asymmetry} is the reason the search bounds the row
weight of $B$ and of $B^{-1}$ simultaneously.  Bounding only $B$, the
natural thing to do since $B$ is what one manipulates, admits candidates
whose core is inexpensive and whose decoder returns the saving with
interest.  A second result bounds where the pass can act at all, and
addresses the question of why composing the specification with
invertible maps does not simply make the outputs quieter.

\begin{corollary}[Balanced outputs]
\label{cor:balanced}
Let $f:\F^n\to\F^n$ be a bijection, let the input vectors be independent
and uniform, and let $A,B\in GL(n,2)$.  Then every primary output of
$g=B\circ f\circ A^{-1}$ is exactly balanced, and consequently, by
Lemma~\ref{lem:renyi2}, has switching activity exactly $\tfrac12$.
\end{corollary}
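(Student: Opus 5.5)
The plan is to show that $g$ is itself a bijection of $\F^n$. Once that is established, each output coordinate of a bijection under a uniform input is a nonzero linear functional applied to a uniform vector, and such a functional is balanced. The order of steps is: bijectivity of $g$, uniformity of $g(X)$, balance of each coordinate, then activity via Lemma~\ref{lem:renyi2}.

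\emph{Bijectivity.} Since $A,B\in GL(n,2)$, both $A^{-1}$ and $B$ are invertible linear maps, hence bijections of $\F^n$. A composition of bijections is a bijection, so $g=B\circ f\circ A^{-1}$ is a permutation of $\F^n$.

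\emph{Uniformity.} Let $X$ be uniform on $\F^n$; this is what independent uniform input bits mean. For any $y$, the preimage $g^{-1}(\{y\})$ is a single point, so $\Pr[g(X)=y]=2^{-n}$. Hence $g(X)$ is uniform on $\F^n$.

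\emph{Balance of each output.} Coordinate $j$ of $g(X)$ is $e_j^{\mathsf T}g(X)$. Write $Y=g(X)$, which is uniform, and let $Y'$ be the vector $Y$ with bit $j$ flipped. The map $y\mapsto y\oplus e_j$ is a measure-preserving involution of the uniform distribution that exchanges $\{Y_j=0\}$ and $\{Y_j=1\}$. Therefore each event has probability exactly $2^{-n}\cdot 2^{n-1}=\tfrac12$. Note that no property of $f$ beyond bijectivity is used, and that the affine offsets of the paper's coordinate changes would not affect this argument either.

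\emph{Activity.} Set $q=\tfrac12$ in the proof of Lemma~\ref{lem:renyi2}, with successive input vectors i.i.d. uniform. This gives $H_2=1$ and $\alpha=1-2^{-1}=2q(1-q)=\tfrac12$.

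There is no real obstacle here. The only point needing care is that Lemma~\ref{lem:renyi2} requires i.i.d. successive vectors, so I would state explicitly that the corollary's hypothesis is read as independent uniform draws across cycles as well as across bits.
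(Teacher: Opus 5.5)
Your proof is correct and is essentially the paper's argument. The paper pushes the uniform distribution through $A^{-1}$, $f$, and $B$ one bijection at a time rather than composing them first, then observes that each coordinate of a uniform vector is balanced and applies Lemma~\ref{lem:renyi2} with $q=\tfrac12$; your remark that the hypothesis must be read as i.i.d.\ draws across cycles, as that lemma requires, is a point the paper leaves implicit.
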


\begin{proof}
$A^{-1}$ is a bijection of $\F^n$ and so preserves the uniform
distribution.  $f$ and $B$ are likewise bijections and do the same.
Hence $g(X)$ is uniform on $\F^n$, each coordinate of a uniform vector
is balanced, and Lemma~\ref{lem:renyi2} with $q=\tfrac12$ gives
$\alpha=2\cdot\tfrac12\cdot\tfrac12=\tfrac12$.
\end{proof}

Corollary~\ref{cor:balanced} states where the gain must live rather than
that no gain exists.  No invertible re-coordinatization alters the
marginal activity of any primary output, so whatever energy difference a
coordinate change produces is produced in the interior of the
realization that the synthesis flow finds for the transformed function.
That is what the measurements show.  The recorded search on
\texttt{crc8} leaves every primary-output activity untouched, as the
corollary requires, and nevertheless lowers the measured interior energy
on both tables, the identity pricing at $1.727880$\,pJ and the recorded
search terminating at $1.670284$\,pJ, a ratio of $0.9667$.

Mechanically, the pass required separate machinery rather than a flag.
The core and decoder are mapped separately and the two maps
concatenated, because weight-one rows of $B^{-1}$ must be dropped before
mapping.  A buffer is a free rail exchange in dual rail, and mapping one
as a gate charges an entire spurious block.  The search is a
deterministic hill climb over elementary row additions with both weight
bounds enforced.  The identity is priced first as the incumbent, and
every candidate is priced twice, once with an inexpensive ranking price
and once with a full price before anything is accepted.  Nothing is
accepted on the inexpensive number alone.

\subsubsection*{Candidate Acceptance}
\label{sec:bdecrealize}

One implementation decision inside the pre-filter machinery governed
everything the pass could do.  A candidate $B$ is priced by building a
netlist and mapping it.  Through \tool~v91.1 that netlist was the
original circuit with the $B$ bank attached to its outputs.  Row $i$ of
$B$ was emitted as the exclusive-or of the \emph{finished cones} of the
outputs it selects.  The cancellation that gives the re-encoding its
value was therefore never performed algebraically.  It was left for the
technology mapper to rediscover, and the mapper sees at most a $K$-cut.

The consequence is quantitative.  Take eight outputs that are
seventeen-input sliding-window parities,
$y_i=x_i\oplus\cdots\oplus x_{i+16}$, so that
$y_i\oplus y_{i+1}=x_i\oplus x_{i+17}$, a two-input parity.  The netlist
passed to the mapper computed that two-input parity as the exclusive-or
of two full seventeen-input trees, thirty-three gates, and asked the
mapper to discover that sixteen terms cancel.  At $K=12$ it cannot.
Pricing the same row addition three ways makes the mechanism visible.

\begin{center}
\small
\begin{tabular}{lrrr}
\toprule
realization of the same $B$ & devices & $T_1$ & $T_2$ \\
\midrule
incumbent (no re-encoding)      & $2176$ & $1.0000$ & $1.0000$ \\
core derived algebraically      & $1926$ & $\mathbf{0.9464}$ & $\mathbf{0.9205}$ \\
core as v91.1 built it          & $2198$ & $1.1964$ & $1.1250$ \\
\bottomrule
\end{tabular}
\end{center}

The candidate dominates when realized algebraically and regresses when
realized as the pass built it, so the never-regress rule refused it, and
refused it \emph{correctly}.  As constructed, the candidate was worse
than doing nothing.  With the pool widened until every one of the
fifty-six legal moves was priced rather than the default twenty-four,
the verdict did not change.  The search was not failing to consider
these candidates.  It was being handed candidates from which the saving
had already been removed.

From \tool~v91.2 the pass builds such a row directly over the primary
inputs.  Three conditions gate the construction and each is load-bearing.

The row must have weight at least two, since a weight-one row \emph{is}
an original output and has nothing to cancel.  This condition is also
what makes $B=I$ produce a byte-identical core to the previous
behavior, so every incumbent price and every circuit the pass declines
is untouched by the change.

Every output the row combines must be affine in the primary inputs.
This is tested structurally, by a walk over exclusive-or, exclusive-nor,
inversion and buffering, and is therefore exact.  There is no sampling
and no tolerance, which is what allows the two implementations to
continue owing each other byte identity.

The combination must cancel, in the sense that the symmetric difference
of the supports is strictly thinner than the thinnest support it came
from.  That condition is not decoration.  When two wide outputs have
disjoint supports the difference is their sum, so a fresh tree that wide
is pure cost, while the old form pays one exclusive-or over cones the
other rows need anyway.  Without the test the construction would hand
the search candidates worse than the ones it already had, which is the
same defect pointing the other way.  Anything failing any of the three
conditions falls back to the previous construction, so the change can
only add candidates.

What the structural test does not see is an output that is affine as a
\emph{function} but built from conjunctive clusters.  A functional test
is available and is used offline as an instrument
(Section~\ref{sec:bdecresults}).  Wiring it into the pass would require
an exactness argument that survives the byte contract between the two
implementations, and that is reserved as future work.

Section~\ref{sec:results} reports that this pass improves no circuit in
either benchmark set, which raises the question of why a pass that does
not pay on the measured circuits is in the tool.  There are three
reasons.

The first is that the pass is not idle on every input, and the inputs it
pays on are instructive rather than accidental.  On the shipped fixture
of Fig.~\ref{fig:ex-bdec}, whose outputs are overlapping wide parities,
it is accepted at $0.9091$ of the default on both tables, taking the
mapped circuit from $230$ devices to $192$.  On the same shape at scale
it reaches $0.8929$ and $0.7955$, and on a Gray-to-binary decoder it
takes ten per cent off the capped table.  Table~\ref{tab:bdecset}
reports these and the cases where it declines.  What those circuits
share is an output space whose coordinates are related by an inexpensive
linear map, and that shape is common in exactly the application domain
the energy argument is made for, namely checksums, syndrome formers, and
error-correcting logic.  Neither benchmark set here is weighted toward
that shape, and both were fixed before this pass existed.  A pass that
pays on a structure the benchmark set does not contain is a different
object from a pass that does not work, and we report the difference by
including the zeros and stating which structure is missing.

The second is that its cost is bounded and its failure mode is benign.
The identity matrix is priced first and held as the incumbent, so a
search that finds nothing returns the identity and the flow maps exactly
as it would have.  Only time is spent.  That is also why the pass is
excluded from the switch that enables the others and must be named
explicitly.  It is roughly an order of magnitude more expensive than
they are, and a user who has not asked for it should not pay for it.

The third is a matter of discipline.  Removing a mechanism because it
did not win on the circuits we happened to measure is the same selection
effect, in another form, that makes a development set unusable for a
generalization claim.  The pass is theoretically motivated, its scope is
bounded by Corollary~\ref{cor:balanced}, its cost is understood, and its
results are reported as measured, which in this case means reported as
zeros.

This pass also carries the clearest open problem in the tool.  From the
identity, every candidate is a single row addition, and all of them
score identically under every inexpensive estimator we tried, so the
ranking is uninformative in the first round and the search covers the
tie by brute force over a pool.  A pool of four reports no improvement
on the one development circuit the pass is known to help, and a pool of
twenty-four finds it.  The pass works for a reason we cannot yet
articulate as a heuristic, and two attempts at a better tie-break both
failed to predict acceptance.

An uninformative ranking is a curiosity at small $m$ and is not one at
large $m$.  The move space is $m(m-1)$ while the pool is a constant, so
coverage falls as $1/m^2$.  The last row of Table~\ref{tab:bdecset}
shows what that costs.  On a thirty-two bit Gray-to-binary decoder,
which has by construction the sparsest possible pairwise structure, the
search prices twenty-five of nine hundred and ninety-two legal moves and
finds nothing, while the sixteen-bit instance of the same circuit
accepts.  We regard this as a coverage-limited fact rather than as a
negative result about the circuit, because the two are not the same
claim and only the first is supported.  A ranking heuristic that
discriminates would be worth more to this pass than further work on the
construction.

\IfFileExists{figures/ex_bdec.tex}{\input{figures/ex_bdec}}{}

Figure~\ref{fig:ex-bdec} takes the pass through its own shipped fixture,
\texttt{bdtoy2}, whose six outputs are overlapping thirteen-input
parities, $y_i=x_i\oplus\cdots\oplus x_{i+12}$.  As written, the circuit
needs $72$ two-input XOR gates.

The pass starts from the identity, which it prices first and holds as
the incumbent, then hill-climbs over elementary row additions on the
output space.  A single row addition, $e_{i+1}\mathbin{+}=e_i$, cancels
twelve of the thirteen terms of each successive pair, so the re-encoded
core computes $h_{i+1}=y_i\oplus y_{i+1}=x_i\oplus x_{i+13}$, a
two-input parity, and only $h_0$ retains a long chain.  Measured, the
search accepts two such row additions under the shipped budget and
returns a mapped circuit at $0.9091$ of the default on both tables, $192$
devices against $230$.

The boundary then has to be restored, and this is where
Lemma~\ref{lem:decoder} and Corollary~\ref{cor:asymmetry} become
concrete.  The two matrices are printed beneath the figure.  $B$ is
light, every row of weight at most two.  Its inverse is the
lower-triangular all-ones matrix, with row weights $1$ through $6$.  By
Lemma~\ref{lem:decoder} the decoder costs
$\sum_{w_r\ge2}(w_r-1)=1+2+3+4+5=15$ two-input XOR gates, so the
composed circuit is $32$ gates against the original $72$.  A search that
bounded the weight of $B$ alone would have accepted candidates whose
inexpensive core is handed back at the boundary with interest.  The
search bounds both, and this fixture is where that decision is visible
in two printed matrices.

The fixture also shows the open problem.  From the identity, every
candidate is a single row addition, and on a structure this symmetric
they all look alike to every inexpensive estimator we tried, so the
first round is a tie that the search covers by brute force over a pool.
The search stops short of the construction the fixture admits, since the
ideal $B$ of the figure is five row additions and the recorded search
accepts two.

On the same structure one size larger the search was stopping
immediately, and not because it had run out of ranking signal.  The
candidates it was handed had the saving already removed from them, for
the reason given above.  With the construction corrected, the larger
instance accepts three row additions and reaches $0.8929$ and $0.7955$,
while this fixture is unchanged at $0.9091$.  The fixture is unchanged
because at thirteen-input windows over twenty inputs the technology
mapper was already recovering the cancellation on its own, which is why
the defect could hide behind a working example.  What remains
attributable to the search is the tie-break, and
Section~\ref{sec:bdecresults} bounds what that costs as $m$ grows.

\subsection{Input Statistics and the Drive Model}
\label{sec:drive}

Activity estimation normally assumes that a net's value is temporally
independent from cycle to cycle, which fixes $\alpha=2p_1(1-p_1)$.  How
much freedom that assumption gives away can be answered exactly.  Let
$X_t\in\{0,1\}$ be a stationary first-order Markov source with marginal
$p=\Pr[X_t=1]$ and toggle rate $\alpha=\Pr[X_{t+1}\neq X_t]$, and write
$a=\Pr[X_{t+1}{=}1\mid X_t{=}0]$, $b=\Pr[X_{t+1}{=}0\mid X_t{=}1]$.

\begin{lemma}[Admissible toggle rates]
\label{lem:alpha-range}
For $p\in(0,1)$, a stationary first-order source with marginal $p$ and
toggle rate $\alpha$ exists if and only if
$0\le\alpha\le2\min(p,1-p)$, and it is then unique, with
$a=\alpha/\big(2(1-p)\big)$ and $b=\alpha/(2p)$.
\end{lemma}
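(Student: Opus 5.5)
The plan is to parametrize every two-state Markov chain by its transition probabilities $a,b\in[0,1]$, derive the marginal and the toggle rate as functions of $(a,b)$, and then invert that map for a prescribed pair $(p,\alpha)$. A stationary first-order source on $\{0,1\}$ is completely specified by its transition matrix
\[
P=\begin{bmatrix}1-a & a\\ b & 1-b\end{bmatrix}
\]
together with a stationary marginal $(1-p,\,p)$. Throughout I will read ``the source'' as the pair consisting of $P$ and its stationary marginal, so that existence and uniqueness refer to the parameters $(a,b)$.

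The first key step is stationarity. The equation $\pi P=\pi$ with $\pi=(1-p,p)$ reduces to the single balance condition $(1-p)a=pb$, the flux from $0$ to $1$ equal to the flux from $1$ to $0$. The second step computes the toggle rate by conditioning on the current state:
\[
\alpha=\Pr[X_t=0]\,a+\Pr[X_t=1]\,b=(1-p)a+pb .
\]
Combining the two equations gives $(1-p)a=pb=\alpha/2$. Because $p\in(0,1)$, neither coefficient vanishes, so $a=\alpha/(2(1-p))$ and $b=\alpha/(2p)$ are forced. That settles uniqueness.

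For the ``only if'' direction, the constraints $a\le1$ and $b\le1$ become $\alpha\le2(1-p)$ and $\alpha\le2p$, that is, $\alpha\le2\min(p,1-p)$. Nonnegativity of $a$ and $b$ gives $\alpha\ge0$. For the ``if'' direction, I define $a$ and $b$ by the stated formulas. The range condition makes both lie in $[0,1]$, so $P$ is a valid stochastic matrix. I then check directly that $\pi=(1-p,p)$ satisfies $\pi P=\pi$ and that the resulting toggle rate is $(1-p)a+pb=\alpha$. Starting the chain from $\pi$ yields a stationary process with the required statistics.

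The only delicate point is the degenerate case $\alpha=0$, where $a=b=0$ and $P=I$. Then every distribution is stationary, so $p$ is not determined by $P$ alone. Uniqueness should therefore be stated for the pair ($P$, initial marginal), or equivalently for $(a,b)$ given $p$, which is what the lemma asserts. I will add one sentence noting this, so that the uniqueness claim concerns the transition parameters and not the stationary law of $P$. Apart from this, the argument is a two-line linear solve, and I expect no real obstacle.
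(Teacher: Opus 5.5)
Your proposal is correct and follows essentially the same route as the paper: the balance condition $(1-p)a=pb$ together with $\alpha=(1-p)a+pb$ forces $a$ and $b$, and the constraints $0\le a,b\le 1$ give the interval. Your explicit check of the converse and your remark that in the degenerate case $\alpha=0$ uniqueness refers to $(a,b)$ given $p$ are small additions of care that the paper's proof leaves implicit.
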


\begin{proof}
Stationarity of the two-state chain requires $p=(1-p)a+p(1-b)$, that is,
$(1-p)a=pb$.  The toggle rate is
$\alpha=\Pr[X_t{=}0]\,a+\Pr[X_t{=}1]\,b=(1-p)a+pb$, and substituting the
stationarity identity gives $\alpha=2(1-p)a=2pb$, which inverts to the
stated $a$ and $b$ and determines them uniquely.  These are
probabilities exactly when $a\le1$ and $b\le1$, that is,
$\alpha\le2(1-p)$ and $\alpha\le2p$.  Non-negativity of $\alpha$ is the
remaining constraint, and every value in the stated interval is realized
by the corresponding chain.
\end{proof}

\begin{corollary}[Independence is interior]
\label{cor:indep-interior}
For every $p\in(0,1)$ the temporally independent value
$\alpha_{\mathrm{ind}}=2p(1-p)$ lies in the open interior of the
admissible interval.
\end{corollary}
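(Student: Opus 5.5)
By Lemma~\ref{lem:alpha-range}, the admissible interval for a given $p\in(0,1)$ is $[0,\,2\min(p,1-p)]$. Its interior is therefore the open interval $(0,\,2\min(p,1-p))$. The plan is to check directly that $\alpha_{\mathrm{ind}}=2p(1-p)$ satisfies both strict inequalities.

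The lower bound is immediate. For $p\in(0,1)$ both factors $p$ and $1-p$ are strictly positive, so $2p(1-p)>0$.

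For the upper bound, I will use the identity $2p(1-p)=2\min(p,1-p)\cdot\max(p,1-p)$. This reduces the claim to showing $\max(p,1-p)<1$. That holds because both $p$ and $1-p$ are strictly less than one on the open interval. Combined with $\min(p,1-p)>0$, it gives $2p(1-p)<2\min(p,1-p)$ strictly.

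As a consistency check, I will confirm that the independent chain corresponds to the parameters the lemma assigns. Substituting $\alpha_{\mathrm{ind}}$ into the lemma gives $a=\alpha_{\mathrm{ind}}/\big(2(1-p)\big)=p$ and $b=\alpha_{\mathrm{ind}}/(2p)=1-p$. These are exactly the transition probabilities of an i.i.d.\ source, and both lie strictly inside $(0,1)$, which matches the interior claim.

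There is no real obstacle here. The only care point is the word ``open'': the endpoints are excluded only because $p\neq0,1$. At $p\in\{0,1\}$ the interval collapses to the single point $\{0\}$, and the statement rightly excludes that case.
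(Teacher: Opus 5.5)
Your proof is correct and is essentially the paper's argument. The paper assumes $p\le\tfrac12$ without loss of generality, so the upper endpoint is $2p$, and checks $0<2p(1-p)<2p$; your factorization $2p(1-p)=2\min(p,1-p)\max(p,1-p)$ is the same symmetry reduction without the case split, and your check that $a=p$, $b=1-p$ is a correct but optional confirmation.
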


\begin{proof}
Take $p\le\tfrac12$ without loss of generality, so the upper endpoint is
$2p$.  Then $\alpha_{\mathrm{ind}}=2p(1-p)<2p$ exactly when $p>0$, and
$\alpha_{\mathrm{ind}}>0$ exactly when $p<1$.  Both hold on $(0,1)$.
\end{proof}

The consequence matters beyond this tool, because it bears on power
estimation in general.  An independence-based activity model is not
conservative.  For any marginal there exist stationary sources that
toggle strictly less than independent ones and sources that toggle
strictly more, and both are realizable.  A figure computed under
independence is an estimate at an interior point of a known interval
rather than a bound on either side of it.

Nor is this only a possibility argument.  Treating each ISCAS-89
sequential benchmark as a symbolic Markov chain and extracting the
stationary $(p,\alpha)$ pair of every flip-flop, the deviation
$\alpha-\alpha_{\mathrm{ind}}$ reaches $0.3403$ in absolute value across
the suite, with twelve of twenty machines stickier on balance and eight
more active.  Because the sign varies, the error cannot be absorbed by a
fixed derating factor.  We report the deviation rather than correcting
for it, because the correction is workload-dependent and the flow prices
what it is told.

The instrument behind those numbers is a symbolic primary-input
transition model (PITM).  Per-flip-flop next-state functions are
extracted from the driving machine's netlist as decision diagrams,
composed with an input-distribution diagram by weighted existential
abstraction, restricted to the reachable set, and quantized to interval
terminals.  At no point is the state space enumerated.

The resulting object provides two results.  Read as a \emph{sequencer}
model for the primary inputs, it drives the tag sweep with the lag-one
statistics of the deployment context, replacing the independence
assumption whose admissible range is the point of this subsection.  Read
as a \emph{simulator}, its repeated squaring yields multi-step behavior,
stationary laws, and reachability.

Before any stationary average is reported, the machine is tested for
halting behavior, since a stationary average over a machine that halts
describes the activity of a stopped machine.  The tool refuses rather
than warns, and the refusal is enforced by ordering, a caller being
unable to obtain the figure before the precondition has passed.
Reachable-state counts reproduce the published figures on all
twenty-two machines of the suite for which one exists, which is our
check that the chain construction is correct.  The construction and its
quantization-error analysis are reported at greater length in a separate
manuscript.

The synthesis flow consumes the PITM result as an activity file that
specifies the behavior of the primary inputs.  Primary-input drive
statistics change the synthesized circuit only under the
activity-priced cover, because the default cover does not consume
activity, and claims about workload-aware synthesis in this paper are
scoped to that cover.

\subsection{The Certified Optimality-Gap Instrument}
\label{sec:certified}

A typical synthesis tool evaluation considers its results and
evaluates whether another different flow is better than
its own but does not consider how much room for improvement is available.
Constructed benchmarks with planted optima have long shown that mappers
can sit far from optimal while improving on one another, but they cannot
bound the distance from optimal on an arbitrary netlist.  \tool~includes
an optimality-gap instrument that prices that distance on the real
benchmark set, with certificates that state how close the synthesized
circuit is to the floor of the tool's own search space.

The instrument rests on a separability property of the cost model.
Under the release configuration, the $T_1$ objective of a mapped network
decomposes as a sum of independent per-cover contributions plus a
calibrated affine term, $T_1 = c_u W + P$, where $W$ is the total weight
of the chosen covers.  Each cover's weight is computable from the cover
alone, and the calibration constants are recovered from three
instrumented runs with a $10^{-9}$ self-check.  Energy-optimal covering
over the tool's own $K$-feasible cover space is therefore an instance of
weighted MaxSAT, with selection variables per candidate cover,
cardinality constraints for legality, and weights from the calibrated
model.

Three consequences follow.  For small and medium instances an exact
MaxSAT solver returns the true floor of the cover space, and when the
floor equals the shipped weight the shipped circuit is proven to attain
that floor, so no selection of covers in the space improves on it.  That
is a certificate rather than a comparison.  For large instances a
core-guided solver run in an anytime mode maintains a monotonically
rising certified lower bound whose invariant survives interruption,
alongside a greedy upper bound, and the pair brackets the optimum and
tightens with budget.  Because the reduction prices this tool's own
cover space, negative gaps are informative rather than contradictory,
since they attribute a win to a mechanism outside the covering algebra.
Section~\ref{sec:results} reports the program's results over the
development set.  Three statements carry the weight of the certificates
reported there and are recorded next.  The instrument's algorithms, its
solver engineering, and its evaluation as an instrument in its own right
are developed separately.

\subsection{The Statements Behind the Certificates}
\label{sec:certified-proofs}

Fix a subject graph with primary-input set $\Pi$ and required outputs
$O$, and let $\mathcal{S}(K,N)$ be the set of legal selections of
$K$-feasible covers drawn from at most $N$ enumerated cuts per node: a
selection realizes every output, and realizes every non-primary-input
leaf of every cut it chooses.  For a node $r$ and one of its cuts $c$,
let $w(r,c)$ count the literal occurrences of names outside $\Pi$ in the
two rails of the block the family's realizer maps for $(r,c)$, and write
$W(\sigma)=\sum_{(r,c)\in\sigma}w(r,c)$.

\begin{lemma}[Calibrated separability]
\label{lem:separable}
Under the release configuration there are constants $c_u>0$ and $P$,
fixed by the family record and the design's input and output boundary,
such that $T_1(\sigma)=c_u\,W(\sigma)+P$ for every
$\sigma\in\mathcal{S}(K,N)$, with each $w(r,c)$ computable from $(r,c)$
alone.  Any two selections of distinct total weight determine $c_u$ and
$P$, and any third selection is then an independent check on the pair.
\end{lemma}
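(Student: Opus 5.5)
The proof will start from the cost definition \eqref{eq:cost} and the literal-occurrence reading of Proposition~\ref{prop:cost}, and split $T_1$ into a part that depends on the selection $\sigma$ and a part that does not. For a legal selection $\sigma\in\mathcal{S}(K,N)$ the mapped network has exactly one block per chosen pair $(r,c)$. Every literal occurrence in the rails of that block reads either a primary input or the output of another chosen block. Legality guarantees that each non-primary-input leaf is realized, so every such occurrence is charged to a block actually present.

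The first step is to classify the terms of $E=c_{\mathrm{dev}}\sum_g\mathrm{load}(g)$ by origin. \emph{Reader occurrences} are literal occurrences of names outside $\Pi$ in the rails of the block for $(r,c)$, each contributing one device input capacitance to the driver's load. Summed over $\sigma$ these give exactly $W(\sigma)$. Occurrences of primary inputs are uncharged under the release convention (primary-input drive is uncharged by default, as stated in Section~\ref{sec:ledger}), so they contribute nothing. \emph{Pad load} is charged once per output in $O$, following free rail exchanges to the physical driver. It therefore depends only on $|O|$ and the family pad constant. \emph{Self-load and clock-driven terms} are the remaining family-record terms.

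Taking $c_u=c_{\mathrm{dev}}$ times the per-device input capacitance (positive from the family record) then yields $T_1=c_uW(\sigma)+P$. Here $P$ collects the pad term and any boundary-fixed constants. Computability of $w(r,c)$ from $(r,c)$ alone follows because the realizer is a deterministic function of the cone's local function and cut, both fixed by $(r,c)$. In particular $w(r,c)$ does not depend on which other blocks are chosen.

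The main obstacle is the self-load and per-gate overhead term. This term scales with the number of blocks $|\sigma|$, which is not obviously constant across selections. I will handle it in one of two ways. If the release family (TG) has zero per-gate overhead and zero self-load, as Table~\ref{tab:famrecords} indicates for the starred row, the term vanishes. Otherwise the term must be absorbed into $w(r,c)$ by adding a per-block constant, which preserves separability. I would check which of these the realizer implements and state the lemma accordingly. A second subtlety is the fanout-one absorption and duplication discount: the proof must confirm that under the release configuration $T_1$ is evaluated on the cover as selected, before any absorption rewrites it. Buffering is excluded since it affects only $T_2$.

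The final clause is elementary. Two selections $\sigma,\sigma'$ with $W(\sigma)\ne W(\sigma')$ give a $2\times2$ linear system whose determinant is $W(\sigma)-W(\sigma')\neq0$, so $c_u$ and $P$ are uniquely determined. Any third selection's measured $T_1$ is predicted by the affine relation, so agreement with that prediction is an independent check on the pair.
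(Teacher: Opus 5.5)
Your proposal is correct and takes essentially the paper's route. You read $T_1$ as the literal-occurrence functional of Proposition~\ref{prop:cost}, charge each block's occurrences of non-primary-input names to $w(r,c)$, place the uncharged primary-input occurrences, the pad load and the family self-load in $P$, and finish with the two-point affine solve and the third-selection check. Your self-load worry is settled by your first route: the release record carries zero self-load, overhead and clock devices, and that is exactly why the paper can call these terms selection-independent. Your fallback of folding a per-block constant into $w(r,c)$ is not available, because it would change the weight $w$ that the lemma fixes.
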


\begin{proof}
By Proposition~\ref{prop:cost}, $T_1$ is the literal-occurrence
functional \eqref{eq:cost} of the mapped network.  The release
configuration maps each chosen block from its own cut, and the block's
gates read only names produced inside the block or standing at its
leaves, so the literal occurrences a block contributes are a function of
$(r,c)$ and the family record alone and are counted by $w(r,c)$; $c_u$
is the per-occurrence device capacitance carried in that record.  The
occurrences the release configuration does not charge to a block are
exactly those of names in $\Pi$, together with the pad load at the
output boundary and the family's self-load; none of these depends on
which cuts are chosen, and their contribution is the constant $P$.
Summing over the selection gives $T_1(\sigma)=c_u\,W(\sigma)+P$.  This
map is affine in $W$ with slope $c_u\ne0$, so two selections with
$W(\sigma_1)\ne W(\sigma_2)$ determine
$c_u=(T_1(\sigma_1)-T_1(\sigma_2))/(W(\sigma_1)-W(\sigma_2))$ and
$P=T_1(\sigma_1)-c_uW(\sigma_1)$ uniquely, after which a third selection
must reproduce its own reported $T_1$ from the recovered pair.  The
implementation performs that third evaluation and refuses to convert a
weight into an energy when the two disagree by more than
$10^{-9}$\,pJ.
\end{proof}

Separability is what makes the question tractable at all: it turns
minimization of $T_1$ over $\mathcal{S}(K,N)$ into minimization of the
integer quantity $W$, which is a weighted MaxSAT instance.  Write
$W^\star$ for that minimum.  The next statement is what licenses the two
kinds of entry in the reported table, and in particular why an
interrupted run may still be quoted.

\begin{proposition}[Both certificate modes are sound]
\label{prop:certified}
Let $F$ be the shipped circuit's own weight.  Then:
\emph{(i)} $\min_{\sigma\in\mathcal{S}(K,N)}T_1(\sigma)=c_uW^\star+P$,
so a solver that closes the instance returns the exact floor of the
cover space, and $F=W^\star$ proves that no selection in that space
improves on the shipped circuit.
\emph{(ii)} If $L$ is the cost accumulated by a core-guided search at
the moment a budget stops it, then $L\le W^\star$ regardless of the
solver's remaining state, and $L$ is nondecreasing as the search
proceeds.  Hence the quantity reported from an interrupted run,
$(c_uF+P)/(c_uL+P)-1$, is a \emph{ceiling} on the true gap, and it can
only fall as the budget grows.
\end{proposition}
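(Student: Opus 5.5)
The plan is to derive both parts from Lemma~\ref{lem:separable}, which makes $T_1$ a strictly increasing affine function of $W$ on $\mathcal{S}(K,N)$, together with the standard lower-bound invariant of core-guided MaxSAT.

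For part (i), I would argue as follows. Since $c_u>0$, the map $W\mapsto c_uW+P$ is strictly increasing, so it commutes with taking minima over the finite set $\mathcal{S}(K,N)$. This gives $\min_\sigma T_1(\sigma)=c_u\min_\sigma W(\sigma)+P=c_uW^\star+P$. The MaxSAT encoding has one selection variable per candidate cover, hard constraints for legality, and soft weights $w(r,c)$. Its satisfying assignments must correspond bijectively to elements of $\mathcal{S}(K,N)$, with cost equal to $W$. Granting that faithfulness, a closed instance returns $W^\star$ and hence the exact floor of the cover space. If $F=W^\star$, then every $\sigma$ has $T_1(\sigma)=c_uW(\sigma)+P\ge c_uF+P$, which is the shipped circuit's $T_1$. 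The shipped selection itself lies in $\mathcal{S}(K,N)$, so $c_uF+P$ is its own $T_1$ by the same lemma.

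For part (ii), I would invoke the invariant of core-guided search. Each extracted unsatisfiable core certifies that every feasible assignment pays at least the minimum soft weight among the core's clauses. The reformulation, which subtracts that minimum and adds relaxation and cardinality constraints, preserves the optimum cost minus the increment. Inductively, the accumulated $L$ after any number of cores satisfies $L\le W^\star$. Monotonicity holds because each increment is nonnegative. Since the invariant refers only to completed iterations, it holds at whatever moment the budget interrupts, independently of the solver's in-progress state.

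For the ceiling claim, let $g(x)=(c_uF+P)/(c_ux+P)-1$. I need $c_ux+P>0$ for $x\in[L,W^\star]$. This holds because $c_uL+P$ is at least the $T_1$ contribution from the boundary and self-load, which is positive; if necessary I would establish $P\ge0$ from its definition as a sum of loads. Then $g$ is decreasing in $x$, so $L\le W^\star$ gives $g(L)\ge g(W^\star)$, the true gap. As $L$ rises, $g(L)$ falls.

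I expect the main obstacle to be the two implicit modelling facts. The first is that the MaxSAT encoding is exact, meaning legality constraints and feasible selections coincide. The second is that the denominator stays positive, which requires a sign argument on $P$. I would handle the first by reading legality directly off the definition of $\mathcal{S}(K,N)$ clause by clause. I would handle the second by noting $P$ is a sum of nonnegative capacitance terms.
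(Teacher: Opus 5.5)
Your proposal is correct and follows essentially the same route as the paper: part (i) uses the positive-slope affine relation of Lemma~\ref{lem:separable} to carry the minimum of $W$ to the minimum of $T_1$. Part (ii) uses the core-guided invariant that the working formula's optimum equals $W^\star-L$; this holds after every completed round, so it survives interruption, and your explicit checks that the MaxSAT encoding is faithful to $\mathcal{S}(K,N)$ and that $c_uL+P>0$ (so the reported ratio is well defined and decreasing in $L$) fill in points the paper's proof leaves implicit.
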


\begin{proof}
\emph{(i)} By Lemma~\ref{lem:separable}, $T_1$ is an affine function of
$W$ with positive slope.  Such a map preserves the argmin set and
carries the optimal value to $c_uW^\star+P$, so minimizing $W$ over
$\mathcal{S}(K,N)$ and minimizing $T_1$ over it are the same problem
and have the same optimizers.  If the shipped circuit's weight equals
that minimum, no member of $\mathcal{S}(K,N)$ has smaller $T_1$, which
is the stated certificate.

\emph{(ii)} A core-guided search maintains a working formula, initially
the instance, and a bound $L$, initially $0$, under the invariant that
the working formula's optimum equals $W^\star-L$.  Each round extracts a
set of soft clauses that is unsatisfiable together with the hard
clauses; every solution therefore falsifies at least one of them and so
pays at least that set's minimum weight $\mu>0$.  The round adds $\mu$
to $L$ and relaxes by $\mu$, which subtracts exactly $\mu$ from the
working formula's optimum and restores the invariant.  So $L$ increases
by $\mu$ each round, and since the working formula's optimum
$W^\star-L$ is a cost and hence nonnegative, $L\le W^\star$ throughout.
Neither statement mentions termination, so both hold wherever the search
is stopped.  For the ceiling, $c_uL+P\le c_uW^\star+P=
\min_\sigma T_1(\sigma)$, and the numerator $c_uF+P$ does not depend on
$L$, so the reported ratio is at least the true one and decreases as $L$
rises.  When the solver closes, $L=W^\star$ and the ceiling is the gap.
\end{proof}

The instrument prices $T_1$, but the tool reports two figures and
accepts on both.  The certificate transfers to the second figure at no
additional solver cost, which is the last statement of this section.

\begin{proposition}[A certified $T_1$ floor is a $T_2$ floor]
\label{prop:t2floor}
Every mapped network satisfies $T_2\ge T_1$, with equality exactly when
the series-bounding pass extracts no stage.  Hence
\[
\begin{aligned}
\min_{\sigma\in\mathcal{S}(K,N)}T_2(\sigma)
 &\;\ge\;\min_{\sigma\in\mathcal{S}(K,N)}T_1(\sigma)\\
 &\;=\;c_uW^\star+P\;\ge\;c_uL+P
\end{aligned}
\]
for every accumulated bound $L$: a certified floor for $T_1$ over the
cover space is also a floor for the $T_2$ optimum over that space, even
though $T_2$ itself is not separable and admits no such reduction.
\end{proposition}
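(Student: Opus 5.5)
The plan is to prove the pointwise inequality $T_2\ge T_1$ on a single mapped network first, and then pass to minima over $\mathcal{S}(K,N)$ by a standard argument. For the pointwise step I will use Proposition~\ref{prop:cost}: $T_1$ and $T_2$ are the same literal-occurrence functional \eqref{eq:cost}, evaluated before and after the series-bounding pass of Proposition~\ref{prop:cap}. So it suffices to show that this pass never lowers the billed occurrence multiset.

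Take one round of the construction in the proof of Proposition~\ref{prop:cap}. It leaves alone every literal that is not extracted. An extracted child, or a series segment passed through an accumulator, is replaced in its parent by one literal on a new stage, and the extracted subnetwork reappears intact as a new gate on both rails. The original literal occurrences therefore all survive, only relocated into the new stage. On top of them, each extraction adds at least the new stage's output occurrence at its reader (the accumulator or replacement literal), its charged internal net, and the family self-load for that gate. Since $c_{\mathrm{dev}}>0$, one round with at least one extraction raises \eqref{eq:cost} strictly, and a round that changes nothing leaves it unchanged. Iterating over the terminating rounds gives $T_2\ge T_1$, with equality exactly when no stage is extracted.

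The step I expect to be the main obstacle is the accounting for the DeMorgan opposite rail of an extracted segment. It must be checked that re-emitting the complement does not remove occurrences from the original opposite rail without billing them elsewhere. I would argue this by rail-consistency: the opposite rail of the parent gate is rewritten by the same construction in the same or the next round, so its literals are likewise relocated rather than deleted.

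Once the pointwise inequality holds, I fix any $\sigma$ and write $T_2(\sigma)\ge T_1(\sigma)\ge\min_{\sigma'}T_1(\sigma')$, then take the minimum over $\sigma$ on the left. Proposition~\ref{prop:certified}(i) supplies the equality $\min_\sigma T_1(\sigma)=c_uW^\star+P$. Part (ii), with $L\le W^\star$ and $c_u>0$, gives $c_uW^\star+P\ge c_uL+P$. This chain needs only the pointwise inequality, not separability of $T_2$, which matches the remark in the statement.
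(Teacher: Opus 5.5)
Your proposal is correct and follows the paper's proof essentially step for step. You get the pointwise $T_2\ge T_1$ because the bounding pass only relocates an extracted subnetwork's occurrences into a new stage and adds nonnegative terms, with the replacing literal making each extraction strict. You then minimize over $\mathcal{S}(K,N)$ and chain with Proposition~\ref{prop:certified}(i) and (ii). The paper handles your opposite-rail concern more directly than your ``rewritten in the same or next round'' argument: it observes that the pass deletes no gate and no reader, so no term of \eqref{eq:cost} is ever removed, whichever rails are rewritten.
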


\begin{proof}
$T_2$ is the functional \eqref{eq:cost} evaluated after the
series-bounding pass of Proposition~\ref{prop:cap}.  That pass performs
one kind of rewrite: it extracts a subnetwork whole and replaces it by a
literal on a new stage.  The extracted subnetwork's gates and their
literal occurrences survive inside the new stage, the replacing literal
adds one further occurrence at its reader, and the new stage carries the
family's self-load; every one of these terms is nonnegative, and no term
of \eqref{eq:cost} is removed, since the pass deletes no gate and no
reader.  Applying this to each extraction gives $T_2\ge T_1$ network by
network, with equality precisely when no extraction occurs, which is the
case exactly when the network already respects the in-flight series
bound.  Taking the minimum of both sides over $\mathcal{S}(K,N)$
preserves the inequality, and chaining with
Proposition~\ref{prop:certified} gives the display.  The
non-separability of $T_2$ is not an obstacle here and is the reason the
detour is needed: the bounding pass couples blocks, because whether a
block is extracted depends on the depth accumulated by the blocks
feeding it, so no per-block weight reproduces $T_2$ and the argument of
Proposition~\ref{prop:certified} has no direct $T_2$ analogue.
Table~\ref{tab:default} exhibits the equality case on the six
development circuits the bounding pass leaves untouched, and the strict
case on the other fourteen.
\end{proof}

\section{Development Approach}
\label{sec:development}

\tool\ was developed with an AI agent that generated the initial software
base, which was then reviewed and in a small number of cases modified by the
author.  The author specified every algorithm by supplying requirements and
specifications in natural language and, in several cases, past publications
describing related algorithms.  The generative capability of the agent was
used heavily.  The contribution reported in this section is the validation
approach that accompanied it, which established functional correctness and
also surfaced defects quickly and corrected them verifiably.

The division of responsibility rests on the respective strengths of the two
participants.  The human supplies the architecture and the expertise
regarding the methods used to implement the algorithms.  The agent supplies
rapid implementation and validation cycles.  Three procedural rules govern
the arrangement.  Full validations are performed by the human on local
hardware, and every result in this paper was obtained that way.  The ground
rules for accepting a change are fixed in advance, stated to the agent, and
applied without exception.  The tool is developed concurrently in two source
languages, which supplies an equivalence check at every iteration rather
than two alternative products.

That last rule is the operational core of the method.  The development
process benefits in the same way the operational tool benefits from
equivalence checking a newly synthesized subcircuit against the netlist it
replaced.  Here the check is carried out between two concurrently developed
implementations.  A Python reference captures the semantics of each software
fragment first.  A C implementation follows it.  The two are required to
agree not only in what they compute but in the bytes they emit.

\tool\ is therefore two complete implementations of one tool, developed in
lockstep.  The Python reference defines the tool's functionality.  The C engine
makes it fast enough for practical use.  This pair of implementations are 
not a prototype and a rewrite.  They
grew together, and the acceptance rule for every change was that the two
implementations agree at the strictest level the surface permits:  
\emph{byte-level equivalence},  \emph{numeric contracts} 
and \emph{verdict-class contracts}.  Every
surface is assigned one of three of these explicit contract classes, and every
release names the class of every surface.

Under \emph{byte contracts}, anything deterministic the tool writes must be
byte-identical between the two implementations, apart from a single
family-label header.  This covers mapped netlists, run records, SPICE decks,
and schematic exports.  The demand is stronger than functional equivalence,
because the port has to reproduce the reference's \emph{incidental}
determinism rather than only its answers.  The C implementation therefore
carries its own implementation of Python's Mersenne Twister generator,
reproducing how the interpreter turns the underlying 32-bit outputs into
random integers and floating-point draws.  It reproduces the reference's
float summation order, compiled so that the compiler cannot fuse operations
the reference performs separately.  It emulates the interpreter's hash-table
probe sequence wherever an algorithm's choices follow set-iteration order,
and it mirrors round-half-even formatting down to the semantics of rounding
recorded near-misses.  In one case a pass's activity score is reproducible
only because every term in it is a multiple of $2^{-7}$, so no partial sum
can round.  That property was proven and then engineered around with
scaled-integer accumulation rather than assumed.

Under \emph{numeric contracts}, energy figures target bit-identity and are
reported at $10^{-9}$ relative tolerance.  That band is roughly seven orders
of magnitude tighter than any physical laboratory instrument can reasonably measure,
which is the point.  A wrong constant or a dropped pad net has nowhere to
hide inside it.

Under \emph{verdict-class contracts}, wall-clock budgets are the one place
byte parity is refused, under a standing rule that the C engine is not
slowed down to make parity-agreement among the two implementations 
convenient.  Given the same budget, the C implementation completes
more work per second, which is the objective, so budgeted runs are validated by final equivalence,
verdict classes, and budget-honored receipts.  When C finishes a pass that
the reference's budget check starves, the asymmetry is documented as the
rule operating correctly.

\subsection{Lineage}

Writing a specification as two implementations and comparing them is
an instance of $N$-version programming~\cite{avizienis1977nversion}\cite{avizienis1985nversion},
here with $N=2$.  Two things differ from the classical setting.

The first is economics.  $N$-version development was largely abandoned
because writing everything twice doubles the cost.  An AI implementer that
can port tens of thousands of lines in weeks changes that arithmetic.

The second is the trust model.  The classical
objection~\cite{knight1986independence} is that independently written
versions make correlated errors, so \emph{voting} among them buys less
reliability than the independence assumption promises.  That objection does
not transfer here, because nothing votes.  The two implementations are
asymmetric, a reference that defines semantics and a port that must match
it, and disagreement is used for detection rather than for masking.  Any
byte of divergence stops the line.  In that respect the method is
differential testing~\cite{mckeeman1998differential} hardened from an
agreement criterion to a bit-identity criterion, with the reference
implementation as the oracle.

Parity cannot catch both implementations being wrong in the same way and
agreeing byte for byte on an incorrect answer.  That risk is handled by
layers that never compare the implementations to each other.  Self-verification
stages check every synthesized circuit against its source netlist by
simulation.  Brute-force cross-checks cover small instances.  A held-out
benchmark screen covers generalization.  Benchmarks used during development
are implicitly fitted, since every debugging decision conditions on them, so
the release discipline keeps a registry of circuits that no development
decision touched and screens correctness claims there.  The one shipped-path
divergence found late in the campaign was found that way, green on every
development benchmark and exposed by the held-out screen.  Parity catches
the implementations disagreeing.  The held-out set catches them agreeing
wrongly.  The claims of this paper rest on surviving both.

\subsection{Parity as the Development Loop}

Parity is the unit of development progress rather than a milestone at the
end of it.  Every ported surface follows the same cycle.  The reference is
written, or already exists, and is measured.  The port is written against
it.  Parity cells, which are concrete circuit and configuration pairs run
through both implementations and compared at the surface's contract class,
are added to a persistent matrix.  The port is complete only when every cell
is identical and every previously green cell is still green.  Because the
matrix is cumulative, it is at once the acceptance test for the newest
surface and the regression harness for every older one.

By the close of the port campaign the matrix held $237$ cells over $13$
circuits spanning every synthesis mode, technology family, and orchestration
surface.  It is backed by an $18$-stage cross-language suite that pins
parser agreement, an $82$-pair quick byte-parity check, each re-synthesis
pass, and each orchestration surface on every build.

Two properties make this loop effective rather than merely thorough.
Byte-level comparison catches a class of defect that functional testing
cannot reach.  A divergence in a tie-break, an iteration order, or a
rounding path produces outputs that are both functionally correct and
different, which is invisible to an equivalence check and fatal to
reproducibility.  Several recorded defects were of exactly this kind.  The
full check is also inexpensive enough to run on every change.  The complete
matrix executes in under three minutes on the validating workstation, $163$
seconds uncached at $14$ workers\footnote{In this example, a ``worker'' is a CPU core
executing a \tool~thread.} in the most recent release validation.
Most of that efficiency is a product of the port, since the C half of every
cell is nearly free.  The strictness ratchets one way.  A fence may be
loosened when the fence is shown to be wrong, and no surface was moved to a
weaker contract class in order to complete a release.

\subsection{Failures as Certificates}

The discipline treats every failure, refusal, and termination as an artifact
with the same evidentiary status as a pass.  Every candidate a pass rejects
leaves a machine-readable receipt carrying both cost tables and their
deltas.  Development ladder rungs are marked accepted, rejected, or skipped with a
reason, and budget expiry leaves a truncation string stating that the result
is a floor.  A sweep cell killed at its hard limit is recorded as a receipt
rather than discarded, because the kill \emph{is} the measurement, namely
that the configuration exceeds the stated budget on the stated hardware.
Defects are numbered, minimally reproduced, and kept in the record after the
fix, so the defect history is part of the artifact.

Three consequences follow.

Negative results survive into the record.  When one re-synthesis pass
produced zero accepted improvements across an entire held-out set, that
entered the record with the same receipts as the accepted results reported
in Section~\ref{sec:results}.  When the two custom hash benchmarks resisted
every optimization pass, that became evidence that the benchmark set is not
fitted rather than a deficit of the tool.

Stale evidence is adjudicated rather than overwritten.  When a failure row
in a sweep ledger contradicted a later successful rerun of the same cell,
the resolution was to establish which artifact was stale, record why, and
let the rerun supersede it explicitly, with the ledger keeping both the
error and its disposition.

The validation instruments are held to the same discipline as the tool.
Twice the fence rather than the code was the defect, and both incidents are
numbered in the same registry as code defects.  One parity fence demanded
exact float equality and reported a failure over a one-ulp difference
sitting seven orders inside the tool's own numeric contract.  A control
experiment established that the check was stricter than the contract it
claimed to enforce, and the correction went into the fence, with the receipt
kept.  A generated validation script that named itself in a command fence
recursed when mechanically re-extracted, and the correction became a
standing rule enforced mechanically thereafter.  A validation run on an
unbuilt tree showed that temporary-file comparisons could be satisfied by
artifacts left by the previous release's validation, so every temporary
artifact is now version-stamped and the validation script's first stage is
the clean build itself.  No check can now run against state the script did
not produce.

Every release ships a checkpoint document that states what changed and what
is not claimed.  The validation procedure's command blocks are mechanically
extracted into a runnable script, and the release gate refuses the version
if the script and the document have drifted, so the documentation cannot rot
relative to the procedure it documents.  The gate also asserts version
banners and the presence of a language-parity disclosure section, and it
checks, one assertion per ported surface, that every surface claimed
bilingual is wired into the C driver.  A separate C-only build gate
establishes that the engine needs no Python.  Each version is smoke-tested
from a clean extraction of the delivered artifact, a rule adopted after a
release whose working tree passed validation while its delivered archive was
unbuildable.  The record of record for a release is the full parity matrix,
run on hardware separate from where the code was written and by a different
party than the one who wrote it.

\subsection{Division of Labor}

One human domain expert supplied the architecture and the specifications,
stated in natural language at the algorithmic level.  He also supplied the
adjudications, which fixed the published gate topology a family bills,
whether a budget may throttle the fast implementation, what a validation is
allowed to claim, and whether a given disagreement is a defect in the fence
or in the code.  He ran the full-matrix validations on his own hardware.  An
AI agent wrote the large majority of the code in a cloud container, filling
in the implementation detail beneath the specifications, validating against
a reduced smoke set, and delivering every stage as a snapshot and every cut
as a checksummed archive.  The split follows comparative advantage.  The
human contribution is architecture and systems-level judgment.  The agent
contribution is syntax-level fluency, library breadth, and volume.  Neither
substitutes for the other.  An agent does not decide what a validation is
allowed to claim, and a human does not hand-port a Mersenne Twister
bit-exactly in an afternoon.

Four protocol elements make this division safe rather than merely fast, and
they transfer to projects of similar shape.

The reference implementation is the code review.  Under a byte-parity
contract the agent's ports are checked against an executable oracle rather
than by a human reading agent-written code, so review effort concentrates
on the specifications and the adjudications.

Validation labor is split by trust level, and the container's numbers never
enter the record.  The agent's environment runs a reduced smoke set,
everything it produces is treated as a development signal rather than as
evidence, and the record of record is produced by the owner on owned
hardware.  That division held when it was inconvenient, and it is why no
reported figure depends on an ephemeral environment.

The agent is instrumented as fallible.  Defects introduced by the agent
enter the same numbered registry as any other, and two entered it on the
final development day.  Both were caught by the machinery rather than by
care, which is the point.  A methodology that assumes a fallible
collaborator and instruments accordingly is robust to that collaborator
being fallible.

Progress is visible at all times.  Long-running work is detached with logs
and process records, reported on a fixed cadence, and snapshotted to durable
storage after every block.

The released tool is roughly $55{,}000$ logical lines across the two
languages, $29.6$ thousand of Python and $25.8$ thousand of C and C++.  The
basic COCOMO model~\cite{boehm1981cocomo} prices an effort of that size at
roughly $14$ to $37$ person-years across its organic-to-embedded range, and
about $22$ at the semi-detached grade.  The development calendar from the
first parity cell to the closing of the bilingual port was measured in
weeks.  The claim made for the methodology is not speed.  It is that the
development rate cost nothing in rigor.  Every accelerated line was accepted
through the same byte, numeric, or verdict-class gate, and the defects the
acceleration introduced are in the record with numbers, reproductions, and
fixes.

\section{Tool Performance}
\label{sec:results}

This section reports measured performance. The development set at the
shipped defaults, the re-synthesis passes over it, a held-out study on
twenty circuits no development decision touched, runtime, comparisons
against a published baseline convention, a nine-family cross-technology
study, the certified optimality gaps, and a device-level check of the
cost model.  Every figure is generated from run records rather than
transcribed, and the release discipline of
Section~\ref{sec:development} applies to all of them.

\subsection{The Benchmark Sets and Their Justification}
\label{sec:benchmarks}

Every development figure is generated over a fixed twenty-circuit
release set.  The set was assembled under one principle.  Aside from
carrying the complete ISCAS-85 suite~\cite{brglez1985iscas85} because
it is the standard choice, each of the other members was selected because they present
a challenging case for a specific aspect of synthesis, the covering
heuristic, the affine re-synthesis passes, the realization arbitration,
the energy accounting, or the front-end parsers.  Eighteen of the
twenty are standard or standard-construction circuits and two were
generated and chosen for this project for reasons given below.  The set also doubles as
the cross-language parity corpus. Among them the twenty circuits enter
through every front end the tool implements, so a parser divergence
cannot hide behind a single-format test set.
Table~\ref{tab:benchmarks} lists the members and the role each plays.

\begin{table}[t]
\caption{The twenty-circuit development set.  Gate counts are the
parsed-netlist counts reported by the released tool.  Two members
(\texttt{c17}, \texttt{xa}) are pad-load dominated, so their improvement
ratios are marked imprecise wherever they appear and no claim is built
on them.}
\label{tab:benchmarks}
\centering
\scriptsize
\setlength{\tabcolsep}{3.5pt}
\begin{tabular}{lrrl}
\toprule
circuit & in/out & gates & role in the set \\
\midrule
c17      & 5/2    & 6    & smallest live parity cell \\
xa       & 2/2    & 2    & minimal spectral anchor \\
reconv24 & 24/1   & 28   & smallest window-pass accept \\
c432     & 36/7   & 171  & deep reconvergence; affine true negative \\
c499     & 41/32  & 174  & XOR-dominated ECC; affine accept target \\
crc8     & 64/8   & 192  & linear-feedback chains; GF(2) target \\
8-bit hash & 8/8  & 272  & dense random table (see text) \\
c880     & 60/26  & 323  & ALU; non-monotone-in-$K$ witness \\
ctrl     & 7/26   & 349  & control-dominated; binary AIGER cell \\
c1908    & 33/25  & 479  & XOR-rich but irregular \\
c1355    & 41/32  & 518  & c499 with XORs expanded to NANDs \\
dec      & 8/256  & 624  & one-hot fanout; output-pad stress \\
c2670    & 233/140 & 790 & wide I/O, unconnected inputs \\
router   & 60/30  & 801  & mixed control/datapath \\
c3540    & 50/22  & 1043 & dense ALU with don't-cares \\
c5315    & 178/123 & 1605 & cover-runtime scale \\
c6288    & 32/32  & 2353 & $16{\times}16$ multiplier; the classic pathology \\
c7552    & 207/108 & 2381 & largest member; end-to-end scale gate \\
12-bit hash & 12/12 & 4120 & dense random table at PLA scale \\
t481     & 16/1   & 5250 & huge as an AIG, tiny decomposed \\
\bottomrule
\end{tabular}
\end{table}

Five of these circuits appear as worked examples in
Section~\ref{sec:renesis}, where the effect of a stage on them is drawn
rather than tabulated.  Several members deserve a sentence beyond the
table.  The pair
\texttt{c499} and \texttt{c1355} is a representation-robustness
instrument since the two compute the same error-correcting function, with
every exclusive-or of the former expanded into NAND clusters in the
latter, so they ask whether re-synthesis recovers the affine structure
the decomposition destroyed, and the affine pass's recovery of
\texttt{c499} from \texttt{c1355} (Section~\ref{sec:davio}) is the
receipt that it does.  \texttt{t481} is the classic decomposition
benchmark, enormous as a sum-of-products and tiny under the right
XOR-based decomposition.  \texttt{t481} is thus a standing test of whether the flow finds
non-obvious algebraic structure rather than merely tidying the obvious
kind.  \texttt{c6288} stresses the depth machinery with deep
reconvergence and is important to include in any tool that uses
decision diagrams since it is well-known to cause exponentially-sized 
decision diagrams under any variable order.

The two custom additions, \textbf{EightBitHashTable} (a full 256-row
PLA) and \textbf{TwelveBitHash} (4096 rows), are dense pseudo-random
substitution tables in the form of collision-free hashes that
obey the strict avalanche criterion in most cases. These two
non-standard custom circuits were felt necessary to include for three reasons.
First, they are good examples of anti-structure control.  Every other hard member of
the set is hard because structure is present but hidden. In contrast, these random
substitution tables have, by construction, no affine, algebraic, or
decompositional structure to find, and they provide maximum entropy 
relative to their width.  A
re-synthesis pass that reports wins on these circuits is indicative of an error
since it is expected that every affine pass should earn a clean reject which makes the
pair the benchmark set's false-positive instrument. 
Second, they discriminate between the tool's two
realization strategies precisely because no gate-level restructuring
can win. The arbitration between the structural cover and the
shared-forest realization is exposed undiluted, and the hash circuits
are where the shared-forest path earns its keep.  Third, they are the
application workload.  The tool's target regime, energy-harvesting
sensors and persistent low-intensity computing, computes through
lookup, dispatch, and hash tables at least as often as through
arithmetic.  A benchmark set drawn only from 1980s arithmetic and
control ASICs would say nothing about the workloads the energy argument
is actually designed for.

These twenty circuits were the
\emph{development set} for this work. The reported defaults, cut size,
price caps, the internal-load weight, the series bound, were all chosen
while looking at them, and no amount of care in reporting undoes a
parameter selected after seeing a result.  The generalization evidence
is the held-out study of Section~\ref{sec:heldout}, whose twenty
circuits were not used during development and which therefore did not influence
any decision made during specification, implementation and tuning of the algorithms.

All results use the transmission-gate reference family at $V=1.1$~V
with a per-device input capacitance of $1.7$~fF and a pass-device
on-resistance of $10$~k$\Omega$. These are calibrated constants with
stated provenance, the capacitance being a characterized 45\,nm library
input pin, rather than extracted values, and Section~\ref{sec:spice}
measures the consequence rather than asking the reader to take them on
trust.  The flow is deterministic, requires a pinned interpreter hash
seed and asserts it, and two runs of one command differ only in
wall-clock fields. The reference and performance implementations
produce byte-identical mapped netlists on the parity cells.

A third set appears in this version and is used for one purpose only.  The
linear pre-filter of Section~\ref{sec:bdec} consumes a property neither the
development nor the held out benchmark set
contains. The boundary-structure set
of Table~\ref{tab:bdecset} is therefore generated rather than collected
to ensure the structures that the linear prefilter were designed to optimize are present.
This third set comprises seven circuits, three built to have the property and four built not to, each from
a recorded seed and each equivalence-checked against its intended function
before use.  It is not a generalization set and no claim in
Sec.~\ref{sec:results} outside Sec.~\ref{sec:bdecresults} rests on it. Rather the third set
exists so that a negative result on the other two sets can be read as a
statement about those circuits rather than about the mechanism.  We keep it
visibly separate for the same reason the held-out set is kept separate, and
we would regard folding its circuits into either of the others as exactly the
selection effect the two-set discipline is designed to prevent.

\subsection{The Development Set and Assumed Defaults}

\begin{table}[t]
\caption{Twenty circuits at the shipped defaults.  $T_1$ is the
per-cycle switched capacitance of the mapped network.  $T_2$ is the same
quantity after the series bound.  Rows ordered by gate count.}
\label{tab:default}
\centering
\begin{tabular}{lrrrr}
\toprule
Circuit & Gates & Devices & $T_1$ (pJ) & $T_2$ (pJ) \\
\midrule
c17                &     6 &     22 & 0.008228 & 0.008228 \\
xa                 &     7 &     12 & 0.008228 & 0.008228 \\
reconv24           &    28 &   1208 & 0.111078 & 0.481338 \\
c432               &   171 &    674 & 0.707608 & 0.732292 \\
c499               &   174 &   2192 & 1.086096 & 1.151920 \\
crc8               &   192 &   1636 & 1.727880 & 1.727880 \\
8-bit hash         &   272 &   3874 & 0.131648 & 1.104609 \\
c880               &   323 &   1202 & 0.839256 & 0.888624 \\
ctrl               &   348 &    656 & 0.209814 & 0.209814 \\
c1908              &   479 &   1812 & 1.731994 & 1.849243 \\
c1355              &   518 &   2576 & 1.579776 & 1.625030 \\
dec                &   624 &   1280 & 3.159552 & 3.159552 \\
c2670              &   789 &   3242 & 1.908896 & 2.061114 \\
router             &   801 &    847 & 0.929764 & 0.929764 \\
c3540              &  1043 &   4014 & 2.772836 & 3.027904 \\
c5315              &  1605 &   7862 & 3.595636 & 4.011150 \\
c6288              &  2353 &  11806 & 15.324650 & 15.334935 \\
c7552              &  2381 &  13370 & 6.825126 & 7.508050 \\
12-bit hash        &  4120 &  93966 & 1.678512 & 24.305512 \\
t481               &  5250 &    607 & 0.123420 & 0.238612 \\
\midrule
median & & & 1.332936 & 1.388475 \\
\bottomrule
\end{tabular}
\end{table}

The spread in Table~\ref{tab:default} is four orders of magnitude in
$T_1$, and the two tables do not move together.  On \texttt{reconv24}
the series bound costs a factor of $4.3$ while on \texttt{crc8},
\texttt{dec}, and \texttt{ctrl} it costs nothing at all.  A circuit
whose mapped network already respects the bound pays nothing to have it
imposed, and one built from wide conduction paths pays a great deal.
That is the realizability cap doing what it is for, and it is why both
columns are carried everywhere rather than $T_1$ alone.

\subsection{Per-Pass Results on the Development Set}
\label{sec:perpass}

Table~\ref{tab:passes7} reports every re-synthesis arm over the full
set, namely the passes singly, the composed window arm, and the all-passes
switch.  Each entry is the ratio of $T_2$ to the circuit's own default.
Entries below $1.0$ are improvements, and a value of exactly $1.000$ is
the acceptance gate declining every candidate, a result rather than a
failure.  A dagger marks a run that reached its wall-clock budget.
Those entries are floors, not fixed points.  A double dagger marks a
cell that did not complete and produced no ratio.  Four cells carry it,
all on the \texttt{factor} arm, and they did not complete for two
different reasons.  On \texttt{t481}, \texttt{router} and \texttt{c2670}
the extraction exhausted the sweep's per-cell address-space ceiling and
the pass reported the exhaustion rather than reaching the operating
system.  On the $12$-bit hash the arm was still searching when it
reached the harness's hard limit of $16{,}500$ seconds and was
terminated, which is the measurement that this circuit exceeds that
arm's practical budget on the sweep hardware.  The \texttt{factor}
column is therefore reported over sixteen circuits rather than twenty.
The other seven arms on those four circuits are unaffected.

\begin{table*}[t]
\caption{Every re-synthesis arm over the development set, as the ratio of
$T_2$ to each circuit's own default.  Bold marks an improvement,
$\dagger$ marks a wall-clock floor, and $\ddagger$ marks a cell that did
not complete and produced no ratio, three on the address-space ceiling
and one on the hard time limit.}
\label{tab:passes7}
\centering
\input{tab_passes}
\end{table*}

Three observations organize the table.  Fourteen of the twenty circuits
improve under at least one arm, with the largest wins at $0.673$
(\texttt{c1355}, all passes: the affine and multi-output window passes
both accept, and the win is bought with fewer devices) and $0.677$
(\texttt{dec}, elimination: the classic energy-for-area trade at
$2.1\times$ the device count).  Both directions of the area-energy
relationship appear in one table, which is the concrete form of
Proposition~\ref{prop:cost}.  The multi-output window pass is the
workhorse, the winning ingredient on eight circuits alone or in
composition.  And the six circuits that decline everything are not
noise.  The two hash benchmarks resist every arm for the reason
that their components are spectrally indistinguishable from random
functions, and \texttt{t481} carries a certificate
(Section~\ref{sec:certified-results}) that no selection of covers in the
space the tool searches improves on its default mapping.

We report the default column as the primary result and the best-arm column
separately, and we label the second as what it is.  Taking the best of
several arms per circuit and presenting the envelope as ``the tool''
would be the same selection effect that makes a development set
unusable for a generalization claim, in a different costume.  The
default is what a user obtains.  The best-arm figures are what the
mechanisms can reach when one is willing to search.

\subsection{The Held-Out Study}
\label{sec:heldout}

The generalization evidence is a second twenty-circuit set, drawn from
ISCAS-85 members excluded from the development set, ISCAS-89
combinational cores, the EPFL suite, and two further locally generated
hash tables, selected once, registered before any run, and never used
in any development decision, and the shipped defaults were frozen before
the first held-out run.  Table~\ref{tab:heldout-default} reports the
set at the shipped defaults, and Table~\ref{tab:heldout-arms} the
re-synthesis arms.  Cells were budgeted at three hours of search each
with a hard kill at $16{,}500$ seconds.  $43$ of the $180$ cells reached
the search budget and are floors, and the eight arm cells of the
largest member (\texttt{arbiter}, $35{,}712$ gates) were hard-killed,
which is itself the measurement that the circuit exceeds every arm's
practical budget on the sweep hardware, its default-arm record standing
as its representation.

\begin{table}[t]
\caption{The held-out set at the with \tool~default options.}
\label{tab:heldout-default}
\centering
\footnotesize
\setlength{\tabcolsep}{4pt}
\input{tab_heldout_default}
\end{table}

\begin{table}[t]
\caption{Held-out re-synthesis arms: best arm per circuit, ratio to the
circuit's own default ($T_2$ and $T_1$); \texttt{lw+mw} is the composed
window arm and \texttt{opt-all} the all-passes switch.  \emph{(none)} means no arm
improved on both tables.  \texttt{arbiter} is excluded (default record
only; every arm hard-killed at the budget).}
\label{tab:heldout-arms}
\centering
\small
\setlength{\tabcolsep}{4pt}
\input{tab_heldout_arms}
\end{table}

The principal figures, with their denominators result in fifteen of nineteen held-out
circuits improving under at least one re-synthesis arm, with a best-arm
$T_2$ median of $0.9149$ over all nineteen, and $0.8931$ median and
$0.8408$ geometric mean over the fifteen that improve.  The deepest
single cut is the elimination pass on \texttt{cavlc}, to $0.41$ of the
default energy at $0.26$ of the devices, the one cell at scale where a
pass wins energy and area together.  The comparison of interest
is with Section~\ref{sec:perpass} where fourteen of twenty development
circuits improve as compared to fifteen of nineteen held-out circuits improve, and
the median improvements are of the same size, so the passes' value
survives on circuits their defaults never saw.  Pass interaction under
one budget is real.  The all-passes arm is not the per-circuit maximum
everywhere, missing the elimination result on \texttt{cavlc} and
\texttt{s5378}, which is why the portfolio framing, best single arm, is
the honest summary.

The negative results are reported with the same prominence.  The linear
pre-filter confirmed zero improvements in eighteen completed held-out cells,
and it improved no development circuit either under the budget of
Table~\ref{tab:passes7}.  Those zeros are unchanged by the rebuild described
in Sec.~\ref{sec:bdecrealize}, and Sec.~\ref{sec:bdecresults} explains why
they are unchanged rather than merely reporting that they are.  No circuit in
either set has the property the pass needs.  The two held-out hash tables
resist every arm, consistent with their design intent and with their
development-set counterparts, which we interpret as evidence that the
benchmark suite is not fitted to the passes.  And \texttt{max} and
\texttt{priority} show no improving arm at all, their structure offers
no carry chains, no affine cuts that price through, and no accepted
windows.

\IfFileExists{tab_bdecset.tex}{\input{tab_bdecset}}{}

\subsection{Linear Pre-Filter Usage and Results}
\label{sec:bdecresults}

The pass of Sec.~\ref{sec:bdec} improves nothing in either benchmark set,
which invites two different explanations.  Either the circuits lack the
structure, or that the pass cannot exploit it.  Until the rebuild of
Sec.~\ref{sec:bdecrealize} both were true at once, which is precisely the
situation in which a table of zeros in the Development and Held-out
Benchmark circuit results teaches nothing.  With the construction
corrected the two can be separated, and this subsection addresses
this situation.

\subsubsection{The property the pass needs}

The re-encoding optimization earns its keep when a row of $B$ combines outputs whose
combination is less expensive than the outputs themselves.  Over $\F$ that is a
statement about the support set.  Writing $S_i$ for the set of primary inputs an
affine output $i$ depends on, an elementary row addition $e_i \mathbin{+}= e_j$
produces a core row of support $S_i \triangle S_j$, and it is worth taking
exactly when
\[
  |S_i \triangle S_j| \;<\; \min(|S_i|,|S_j|),
\]
that is, when the two outputs are not merely linear but \emph{near}, meaning closer
to each other than either is to the origin.  Three conditions therefore have
to hold together, and it is worth separating them because circuits fail them
for different reasons.  The outputs must be affine.  They must be near.  And
the search must be able to reach the improvement by single moves that each
dominate on both tables.

Being linear is not sufficient, and the natural counterexample is the one a
reader will reach for first.  A cyclic redundancy check is the archetype of a
linear circuit, and \texttt{crc8}'s eight outputs are all affine, and all
wide parities of \emph{similar} support, so their pairwise differences stay
wide, the sparsest being twenty-eight against a thinnest support of
thirty-three.  Nothing collapses.  The same holds, more starkly, for a
Hamming syndrome former, whose $r$ rows have weight $2^{r-1}$ and whose
pairwise differences also have weight $2^{r-1}$, half the codeword, by the
construction of the code.  \texttt{hamsynd} in Table~\ref{tab:bdecset} is
that circuit, and it declines.

\subsubsection{Linear Prefilter Results}

Table~\ref{tab:bdecset} reports a third benchmark set, disjoint from the
development and held-out sets and built specifically to contain boundary
structure, together with controls built specifically not to.  Every member is
generated from a recorded seed by a script that ships with the tool, and every
one is equivalence-checked against its intended matrix before the tool sees
it.  Each output is emitted as its own balanced exclusive-or tree with net
names unique to that output, so no two outputs share a textual subexpression.
A structural hasher has nothing to merge, and the saving is algebraic or it
does not exist.

Two of its members are notable.  \texttt{bdslide} is a bank of
seventeen-input sliding-window parities, the shape a moving checksum or a
convolutional parity accumulator has, and the pass takes it to $0.8929$ and
$0.7955$ of the unflagged result with devices falling from $2176$ to $1432$.
\texttt{gray2bin16} is the textbook Gray-to-binary decoder,
$b_i=g_i\oplus g_{i+1}\oplus\cdots\oplus g_{N-1}$, in which consecutive
outputs differ by exactly one input, the extreme case of nearness, and
the pass takes ten per cent off the capped table.  We report its uncapped
ratio, $1.0526$, as the regression it is intended for.  The pass's own never-regress gate
measures against its internal identity pricing, which on this circuit
differs from the ordinary flow by one block, so a candidate it scores as
$1.0000$ on $T_1$ is in fact slightly worse than not using the flag.  That
seam is a defect of the pass's incumbent bookkeeping rather than of the
re-encoding, it is the only member of the set where the two disagree, and it
is recorded here rather than smoothed over.

The declining members are as much of the result as the accepting ones.
\texttt{hamsynd} is linear but not near.  \texttt{bdwin}, whose full
re-encoding cuts devices by a factor of four and a half, is refused because
it \emph{raises} uncapped energy.  A device count is not an energy, and the
two-table gate exists for exactly this case.  \texttt{gray2bin32} has the
same structure as its sixteen-bit sibling and is refused for a third reason
again coverage.  At $m=32$ there are $992$ legal moves and the pool prices
twenty-five of them.  \texttt{bdnull} is the control, same size and density
with independently drawn rows, and finds nothing, which is what it is for.

\subsubsection{Analysis of Linear Prefilter Results for Development and Held Out Benchmarks}

The remaining question is whether the benchmark sets are unlucky or
structurally excluded, and the answer is based on the circuits' structure.
Applying the support test above to every output of every circuit in both sets,
 affineness established by evaluation, over the whole cone rather than
within a cut, so the answer does not depend on how the function happens to be
written, returns no pair satisfying the inequality in any circuit of
either set.  There is nothing for this particular synthesis pass to find, and the zeros in
Sec.~\ref{sec:perpass} are the correct answer about for these circuits rather
than a failure of the search.

The same screen over a hundred circuits of the MCNC PLA suite 
was conducted.  Thirty-nine of the hundred have at
least one output that is an affine function, \texttt{xor5} is a five-input
parity outright, and \emph{none} has a collapsing pair.  Linear content is
not rare in that suite and near pairs are absent from it entirely.  The reason
is representational as much as functional.  A PLA is a two-level cover in
which each output carries its own cube set, and independence of the outputs
is precisely the absence of the property this pass consumes.  We record the
screen rather than the hundred runs it makes unnecessary.

\subsubsection{Where Linear Prefilter is Actionable}

On circuits whose outputs are affine in the primary inputs, structurally so, and
pairwise near, and few enough that a constant-size pool covers a useful
fraction of the $m(m-1)$ moves, the pass reduces capped switched capacitance
by between four and twenty per cent on the instances measured here.  Outside
that description we have no evidence and claim none.  The circuit classes
that satisfy the description are identifiable in advance and include sliding-window
and moving-sum parities, Gray and other reflected-code converters, syndrome
formers with staggered rather than balanced checks, and the quasi-cyclic
parity-check blocks of modern codes, in which consecutive rows are shifts of
one another and therefore differ in a bounded number of positions. Such circuits and structures
are efficiently identifiable since the screen above costs $n+1$ simulations per
circuit against the several seconds each candidate pricing costs.  The
practical recommendation that follows is to run the screen and enable the
pass when it fires, rather than to enable the pass and hope.

\subsection{Runtime}
\label{sec:runtime}

Every number in this paper is defined by the reference implementation,
and none of them needs to be produced by it, since all six re-synthesis
passes are carried in the C engine, byte-identical to the reference on
the mapped-netlist format, and every results table can be regenerated
from the C side alone.  Table~\ref{tab:wallclock} reports the
wall-clock cost of doing so, the full development set at the default
configuration regenerates in under five minutes of single-threaded C,
and the complete five-pass optimization chain on \texttt{c880} takes
under four minutes, where the reference implementation needs on the
order of two hours for the same chain.  A designer iterating on a
technology target reruns the whole optimization surface interactively
rather than overnight.

\begin{table}[t]
\caption{C-engine wall-clock times (seconds).  Left: the development
set at the default configuration.  Right: each re-synthesis pass alone
and the composed five-pass chain on \texttt{c880}, at the shipped
budget policy.}
\label{tab:wallclock}
\centering
\input{tab_wallclock}
\end{table}

\subsection{Comparison Conventions and the Baseline}
\label{sec:baseline}

The published convention closest to this work maps a NOR-gate netlist
structurally onto transmission gates and we reproduced it as a baseline,
both as the naive construction and with the netlist first optimized by
a conventional AIG optimizer, and verify the baseline's functional
equivalence to the original netlist on random vectors for every
circuit.  Across all twenty development circuits the tool uses
$0.003\times$ to $0.25\times$ the per-cycle switched capacitance of the
optimized-NOR baseline, median $0.096\times$, on $20$ of $20$.  Against
the optimized OR-inverter construction, mapped like-for-like through
the same energy model, the tool's covers use a median $0.885\times$ on
$14$ of $20$ circuits (both sides capped: $0.903\times$ on $12$ of
$20$).  The structural axes disagree with the energy axis, and we
report them separately.  Fewer mapped gates and fewer clock-phase levels
at roughly $1.6\times$ the devices, which is not a contradiction but
the cost decomposition restated, since most of our devices hang off
uncharged primary-input literals.  On five of the twenty circuits the
AIG optimization made the NOR baseline worse than the naive
translation, a reminder that optimizing a proxy is not optimizing the
cost.  The baseline reproduces the published method, not the authors'
implementation, which may contain refinements beyond the description.

Replacing the unit-capacitance model with characterized values calibrates the axis
into joules without changing any ratio.  But mapping both flows to the
same static-CMOS standard cells and pricing both with a conventional
engine \emph{inverts} the result.  The median ratio moves to $1.887$
against this work, with seven of twenty circuits still winning.  A
multi-control pass gate is charged only when it fires, while a static-CMOS
tree switches at every level on every transition regardless.  The
energy advantage reported above is therefore claimed for
energy-recovery pass-transistor targets specifically, and a
conventional-CMOS reading of these numbers would be wrong.

\subsection{Cross-family Behavior}
\label{sec:families-results}

The family sweep runs the release configuration over all nine mapping
targets for all twenty development circuits, $180$ runs with zero
failures and zero truncations.  Table~\ref{tab:families7} summarizes.

\begin{table}[t]
\caption{The family choice, priced: per-cycle switched capacitance
relative to the release family over twenty circuits.  ``Wins'' counts
circuits where the family is the outright best choice in terms of energy.}
\label{tab:families7}
\centering
\input{tab_families}
\end{table}

Under the \tool~cost model, ECRL and PAL are jointly the most energy-efficient families, at a
$0.612$ geometric-mean ratio to the release family.  The two are not merely
close as Section~\ref{sec:family-records} shows, their shipped records are
identical in every field this metric reads, so they return the same figure
on every circuit and the win column attributes the tie to whichever is
listed first.  The seventeen circuits credited to ECRL in the table are
seventeen ties.  The quasi-adiabatic families with the lightest keeper cells
sit at the top of the table, and the fully static, fully adiabatic families
pay for their discipline in devices.  The boundary, stated so the table
cannot be over-read.  These are calibrated switched-capacitance figures under
each family's published topology and billing.  $T_2$ does not see the
non-adiabatic residues of the quasi-adiabatic families, and it does not
credit the fully adiabatic families' asymptotic scaling at slower clocks.
The device-level instrument of Section~\ref{sec:spice} exists to
close that boundary.

\subsubsection{Held-out Set Sweep Results}
\label{sec:families-heldout}

The family study was then repeated on the held-out circuits, again at frozen
defaults and again nine families per circuit, and
Table~\ref{tab:families-heldout} reports it.  Six of the $180$ cells
produced no record and all six are on the two pipelined families, four hard
kills at the budget, on \texttt{i2c} and \texttt{s838}, and two on
\texttt{adder} where the sweep's per-cell memory ceiling stopped a decision
diagram construction cleanly rather than letting it reach the operating
system.  No cell on any of the other seven families failed anywhere in the
sweep.  The two pipelined columns are therefore over seventeen circuits
rather than twenty, and since the three missing circuits are among the
harder members, those two figures are if anything optimistic.

\begin{table}[t]
\caption{The family choice on the held-out set: per-cycle switched
capacitance after the series bound, relative to the release family, over
twenty circuits at frozen defaults.  $\dagger$ marks the two families whose
column is over seventeen circuits because three cells did not complete, and
``wins'' counts circuits where the family attains the minimum, ties
included.}
\label{tab:families-heldout}
\centering
\small
\setlength{\tabcolsep}{4pt}
\input{tab_families_heldout}
\end{table}

The result we would put weight on is that \textbf{the ordering reproduces
exactly, nine families of nine}.  ECRL and PAL tied in terms of minimal energy, then
PFAL, CAL, TG, the release family, SPGAL, and the two pipelined families
last.  A designer choosing a family on the evidence of
Table~\ref{tab:families7} would make the same choice on circuits the
defaults never saw.  The magnitudes drift modestly, between $9$ and $15$ per
cent, on the seven non-pipelined families, and by $75$ per cent on the two
pipelined ones, by the same factor on both, which is what one expects when
the second is the first under a stated multiplier rather than an independent
measurement.

The held-out sweep also settles the identity claimed in
Sec.~\ref{sec:family-records} from the parameter files alone.  Across all
twenty circuits, the largest difference between the ECRL and PAL figures is
$0.0$\,pJ and the largest difference in device count is zero.  They are not
close.  They are one column printed twice, and every win either is credited
with is a tie with the other.

One measurement in the sweep asks a question of the shipped default rather
than of the alternatives, and we would rather print it than leave it in a
data file.  The cost of imposing the series bound, $T_2/T_1$, has a median
of $1.004$ to $1.005$ on every family in the set except the release family,
whose median is $1.056$ and whose maximum reaches $12.63$, on the two dense
hash tables.  On \texttt{TenBitHash} the release family maps to
$0.411400$\,pJ uncapped, a tenth of what the pipelined family needs, and the
bounding pass then takes it to $5.195982$\,pJ at $58{,}193$ devices against
$19{,}745$.  That is the entire source of the only two cells in the sweep
where a pipelined family appears to win.  It is the baseline losing rather
than 2LAL winning.  The same shape appears on the development set's own
dense table, where the release family runs $1.678512$ to $24.305512$\,pJ
(Table~\ref{tab:default}), so the effect reproduces on both benchmark sets
on the circuit class that provokes it.  The release family differs from
plain TG in exactly one field, an in-flight series limit of six against
four, and plain TG's worst cap penalty anywhere in the sweep is $1.049$.
We report the measurement and the coincidence and stop there.  Isolating the
mechanism is a small experiment we have not run, and until it is run this
belongs in the paper as a property of the default configuration on dense
wide logic rather than as an explained result.

\subsection{Certified Optimality Gaps}
\label{sec:certified-results}

Table~\ref{tab:certified} gives the results of the optimality-gap
program of Section~\ref{sec:certified} over the development set.

\begin{table}[t]
\caption{Gap from the shipped circuit's $T_1$ weight to the
certified floor of the tool's own cover space.  ``Exact'' rows are
solved to optimality.  The ``anytime bound'' rows report the certified
bracket's ceiling at the run's budget, so the true gap is at most the
figure shown.  The two negative rows are circuits whose shipped realization
is reached by a mechanism the cover space does not contain, so the floor of
that space is not a lower bound on them.}
\label{tab:certified}
\centering
\input{tab_certified}
\end{table}

Three circuits are solved exactly at gap zero, and \texttt{t481},
$5{,}250$ gates in its source netlist, carries a full certificate.  Its
shipped circuit attains the certified floor of the tool's cover space, so
no selection of covers in that space improves on it.  We state the
certificate that way rather than calling the circuit optimal without
qualification, because the claim is bounded by the space searched.  It says
cover selection has nothing left to optimize on this circuit, not that another
circuit computing the function is less costly.  The bounded rows are ceilings rather than measurements.
\texttt{crc8}'s bound, for instance, fell from $72\%$ to $35\%$ as its
anytime budget grew, and further budget moves only the number, not the
circuit.  The two hash benchmarks return large negative gaps, and this
is the instrument earning its keep.  The shipped wins on those circuits
belong to the shared-forest realizer, not to cover selection, an
attribution the reduction makes precise.  Together with the spectral
profiles of the hash components, flat at the random-function ceiling,
the program separates three kinds of hard circuit that a results table
alone cannot distinguish, namely structure-absent, structure-exhausted with
certificate in hand, and budget-limited, where the bracket is honest
about what remains unknown.

\subsection{Device-Level Validation of the Cost Model}
\label{sec:spice}

The energy model is an accounting of charge, and an accounting can be
checked.  A SPICE deck is generated directly from the mapped netlist,
and the charge the network draws from the power clock is compared, on
the same circuit, against the switched capacitance the model summed.
An independent reimplementation of the load accounting, written from
the mapped-netlist format specification rather than derived from the
tool, reproduces the tool's figures exactly on \texttt{c17}
($0.008228$~pJ) and \texttt{c432} ($0.732292$~pJ).  That is evidence
about the accounting rather than about the constants, which are inputs
to both.

\IfFileExists{figures/energy_vs_ramp.pdf}{%
\begin{figure}[t]
\centering
\includegraphics[width=\columnwidth]{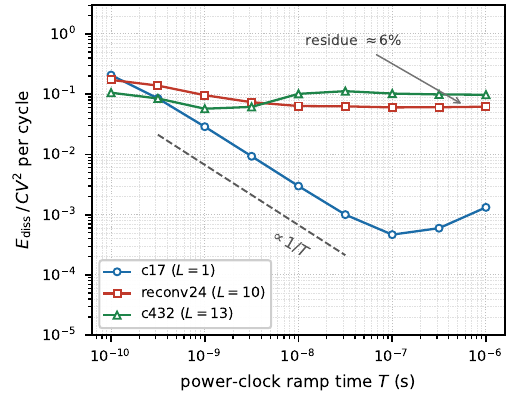}
\caption{Dissipation per cycle against power-clock ramp time for three
representative circuits, normalized by the per-cycle figure the model
bills.  $L$ is the logic depth of the mapped network.  The shallow
circuit follows the $1/T$ law over about two decades.  The two deeper
ones settle onto a floor at roughly $6\%$ of the billed energy.}
\label{fig:ramp}
\end{figure}}{}

Sweeping the ramp time requires one precaution.  A $k$-phase network $L$
levels deep needs about $\lceil L/k\rceil$ cycles simply to move data
from inputs to outputs, and measuring before that reads the charging of
nodes that have never held charge, which costs on the order of
$\tfrac12 CV^2$ whatever the ramp time and is easily mistaken for
leakage or a threshold residue.  We run eight cycles, measure the last,
and confirm the per-cycle energy is unchanged across the preceding
cycles.  What the sweep then shows is that on \texttt{c17} the measured
log-log slope lies between $-0.94$ and $-0.99$ across about two decades
of ramp time, the $1/T$ behavior the family is supposed to exhibit and
the check that the deck operates adiabatically at all, while the two
deeper circuits settle onto a ramp-independent floor at about $6\%$ of
their billed energy.  We looked first for a numerical cause and did not
find one.  Moving the deck's shunt resistance by four orders of
magnitude moves the figure by under $0.1\%$.  We therefore report the
floor as a property of the mapped circuit rather than of the deck.  One
reading consistent with the observation is that the reference family
carries no explicit pipelining discipline, so on a deep network a
control rail may return toward rest before the node it gates has
recovered, leaving charge unreturned.  The families that do carry such a
discipline would then behave differently.  We have not isolated that
mechanism, and we state the measurement and the candidate explanation
separately rather than presenting the second as established.  For a
target that leaves a ramp-independent remainder, switched capacitance
orders designs correctly to the extent the remainder is common to the
designs compared, which is a narrower claim than the proxy condition
suggests, and Section~\ref{sec:discussion} records it as such.
\section{Discussion and Summary}
\label{sec:discussion}

Conventional synthesis
optimizes area or delay and reports power as a consequence.  Every
optimization in \tool~is made with respect to energy, and a pass
that saves area at the cost of energy is refused.  The measured record
says this inversion is not cosmetic.  Gate count, device count, and
per-cycle energy moved independently on real circuits throughout the
campaign.  The affine pass at fixpoint produced the fewest gates of any
\texttt{c1355} variant and mapped to the most devices, an accepted
rewrite on \texttt{c880} raised devices while lowering energy, and the
elimination pass bought its largest win at twice the device count.
Proposition~\ref{prop:cost} says why.  The cost is a literal-occurrence
functional, and no structural count predicts it.  Any flow for
pass-transistor energy-recovery targets that selects candidates on area
is, on this evidence, optimizing the wrong quantity, and the common
practice of using area as a proxy for dynamic power in these styles
should be regarded as unsafe rather than approximate.

\tool~computes the Shannon-order term although it is not an optimization objective.
The ledger makes claims separable and decisions falsifiable.  A
saving obtained by reducing activity is not a saving in erasure, the
crossover at which un-computing scratch would become less costly than
erasing it was computed rather than assumed, and quasi-adiabatic CMOS
operates orders of magnitude away from it.  The introduction's
exchange-rate argument and the measured six-order gap are two forms of
one statement.  At present device parameters, switching is where the
energy is, and reversibility earns its place electrically, as the
precondition for recovery, rather than thermodynamically.  Because
$CV^2$ falls with each device generation while the Landauer constant
does not move, the ordering will not hold forever, and circuits
synthesized against the switching term remain valid when the floor
eventually binds, and the reverse does not hold.

It is often the case that a development benchmark set fits its tool.
This can happen because every default was
chosen while looking at those twenty circuits, and the paper's claims
would be circular if they ended there.  They do not.  On twenty
held-out circuits that no development decision touched, with defaults
frozen in advance, fifteen of nineteen improved under at least one
re-synthesis arm at a best-arm median of $0.9149$, against fourteen of
twenty and comparable magnitudes on the development set.  The passes'
value, on this evidence, is a property of the methods rather than of
the circuits they were tuned on.  The same study kept the negatives.
The linear pre-filter confirmed nothing on the held-out set in eighteen
completed cells, the hash tables resisted every arm exactly as their
design intended, and two circuits offered the passes no structure at
all.  The first of those negatives is the one we have since been able to
explain rather than merely report.  A support screen over both sets finds no
pair of outputs with the property that pass consumes, so the zeros are a
statement about the circuits, and on a set built to have the property the
same pass reduces capped switched capacitance by up to twenty per cent
(Sec.~\ref{sec:bdecresults}).  

Generalization has a second half that yielded a clear  result.  
The choice of technology family is the largest single decision a
user of this tool makes, and its ranking was established on the development
set.  Repeated on the held-out circuits at frozen defaults, that ranking
reproduces exactly, nine families of nine, with the non-pipelined
magnitudes drifting by under a sixth.  A designer picking a family from
Table~\ref{tab:families7} would pick the same one on circuits the defaults
never saw.  The same study also turned the ECRL and PAL coincidence from a
reading of two parameter files into a measurement.  Across twenty held-out
circuits the two differ by exactly zero, on energy and on device count
alike, which is a reminder that a metric can rank targets it cannot
actually distinguish, and that the ranking is only as fine as the model
underneath it.

Results tables,
including ours, answer only ``better than the default.''  The
optimality-gap program changes the kind of statement a synthesis paper
can make.  Three circuits are solved exactly, one large benchmark is
proven to attain the floor of the tool's entire cover space, the
bounded rows carry certified ceilings that tighten with budget rather
than opinions, and the negative gaps on the hash circuits localize
those wins to a mechanism outside the covering algebra.  The template is
exactness where it is affordable and a stated bracket where it is not.

\tool~is a bilingual tool
under a byte contract, with an executable oracle in place of
line-by-line review, a cumulative parity matrix inexpensive enough to run at
every change, failures retained as certificates, and a held-out screen
for the failure mode parity cannot see, allowed a domain expert and an
AI implementer to produce, in weeks, a released artifact that a
standard effort model prices at a decade or more of person-effort, with
every reported number reproducible from the release.  The speed is not
the claim.  The claim is that the development rate cost nothing in rigor, and the
numbered defect record, including the defects the fast collaborator
introduced, is the evidence.  The elements that made it safe are stated
in Section~\ref{sec:development} because they transfer.  They are not
specific to synthesis, and we suspect they are close to a minimal
protocol for building verification-grade software with an implementer
whose volume exceeds its own reliability.

The capacitance and resistance constants are calibrated with
stated provenance, not extracted from a characterized cell library, and
the device-level study of Section~\ref{sec:spice} closes only part of
that distance.  The deep-circuit floor it measured is reported as a
property of the mapped circuits on the reference family, with its
mechanism un-isolated, and it narrows the proxy claim to comparisons in
which the ramp-independent remainder is common.  The series bound of
six is an engineering compromise whose value a review of the
pass-transistor literature did not support, and part of the modeled
advantage is purchased with depth a stricter design rule would forbid.
Both sides of every comparison are bounded identically, which is what
licenses the comparisons rather than the absolute figures.  The default
charging convention leaves primary-input drive unpriced, a median
$70\%$ of the energy, for comparability with the published baseline.  The
excluded quantity is always computed and reported, and one of the two
principal comparisons survives the other convention while the second does
not.  Workload-aware synthesis is scoped to the activity-priced
cover, since the default cover consumes no activity.  Observability is
reported as headroom, $31$ to $38\%$ of total energy on the two
circuits measured, not as a saving, because the gating discipline that
would harvest it in a clocked dual-rail family was not established.
The linear pre-filter's first-round ranking is uninformative and the pass is
consequently expensive.  Because the move space grows as $m^2$ while the
pool does not, its reach shrinks with output count, which we bound with a
measurement rather than leave implicit.  Its scope is a stated property of a
circuit's output supports and not a claim about circuits in general.  And the temporal-independence assumption in the default drive is
an interior point of a characterized family, not a bound, with the
instrument for replacing it built and its synthesis-side leverage still
to be measured properly.

\tool\ receives an
ordinary irreversible netlist and synthesizes it 
into a verified, technology-mapped
energy-recovery circuit with energy as the criterion of every stage.  A
vector-space netlist model makes simulation and justification two
directions of one linear action.  A tag ledger prices switching, erasure, and
observability at their natural R\'enyi orders and keeps the electrical
and thermodynamic accounts separate.  Eight published energy-recovery
families are mapped through one engine from parameters that ship as
data, under a realizability bound enforced as a design rule.  Every
accepted transformation is equivalence-checked against the original netlist
and dominates on two cost tables, so the flow cannot return a circuit worse
than its input.  The acceptance gate's refusals are receipted, its negative
results are published, and its distance from the floor of its own search
space is certified where certification is affordable and bracketed where it
is not.

The tool, its reference and performance implementations, its benchmark sets,
the validation procedure, and the run records behind every figure in this
paper are released together at
\url{https://github.com/mitch-thornton/renesis-low-energy-circuit-synthesis}.
Reproduction, correction, and extension of these results by others are
invited.


\balance
\bibliographystyle{IEEEtran}
\bibliography{renesis_arxiv}

\end{document}

%% file: figures/fam_cells_owner.tex
\providecommand{\famscale}{0.66}

\newcommand{\fnbox}[2]{%
  \draw[line width=0.5pt,rounded corners=1.2pt,fill=black!6]
       ($#1+(-0.36,-0.27)$) rectangle ($#1+(0.36,0.27)$);
  \node[font=\scriptsize] at #1 {$#2$};}

\newcommand{\xcpair}{%
  \node[pmos,anchor=D,xscale=-1] (p1) at (-1.3,2.05) {};
  \node[pmos,anchor=D]           (p2) at ( 1.3,2.05) {};
  \draw (p1.S) -- ++(0,0.26) coordinate (ra);
  \draw (p2.S) -- ++(0,0.26) coordinate (rb);
  \draw (ra) -- (rb);
  \draw ($(ra)!0.5!(rb)$) -- ++(0,0.30) node[anchor=south,font=\scriptsize] {$\phi$};
  \draw (-1.3,2.05) -- (-1.3,1.15) coordinate (yn);
  \draw ( 1.3,2.05) -- ( 1.3,1.15) coordinate (ybn);
  \draw (p1.G) -- (p1.G |- 0,1.52) -- (1.3,1.52);
  \draw (p2.G) -- (p2.G |- 0,1.78) -- (-1.3,1.78);}

\newcommand{\outtaps}[1]{%
  \draw (-1.3,#1) -- (-2.02,#1) node[ocirc]{}
        node[anchor=east,font=\tiny,xshift=-2.5pt] {$y$};
  \draw ( 1.3,#1) -- ( 2.02,#1) node[ocirc]{}
        node[anchor=west,font=\tiny,xshift=2.5pt] {$\bar{y}$};}

\begin{figure*}[t]
\centering
\setlength{\tabcolsep}{1pt}
\renewcommand{\arraystretch}{1.0}
\begin{tabular}{cccc}

\begin{circuitikz}[scale=\famscale,transform shape,line width=0.55pt]
  \node[nmos,anchor=D]           (n) at (-0.55,2.55) {};
  \node[pmos,anchor=S,xscale=-1] (p) at ( 0.55,2.55) {};
  \draw (-0.55,2.55) -- (-0.55,2.95) -- (0.55,2.95) -- (0.55,2.55);
  \draw (0,2.95) -- (0,3.28) node[anchor=south,font=\scriptsize] {$\phi$};
  \draw (n.S) -- (-0.55,1.55) -- (0.55,1.55) -- (p.D);
  \draw (0,1.55) -- (0,1.10) node[ocirc]{}
        node[anchor=north,font=\tiny,yshift=-2.5pt] {$y$};
  \node[font=\tiny,anchor=east] at ($(n.G)+(-0.05,0)$) {$x$};
  \node[font=\tiny,anchor=west] at ($(p.G)+( 0.05,0)$) {$\bar{x}$};
  \node[font=\tiny,gray!55!black] at (0,0.52) {one rail shown, $F_{\mathrm T}$};
\end{circuitikz}
&
\begin{circuitikz}[scale=\famscale,transform shape,line width=0.55pt]
  \node[nmos,anchor=D]           (n) at (-0.55,2.55) {};
  \node[pmos,anchor=S,xscale=-1] (p) at ( 0.55,2.55) {};
  \draw (-0.55,2.55) -- (-0.55,2.95) -- (0.55,2.95) -- (0.55,2.55);
  \draw (0,2.95) -- (0,3.28) node[anchor=south,font=\scriptsize] {$\phi_{k}$};
  \draw (n.S) -- (-0.55,1.55) -- (0.55,1.55) -- (p.D);
  \node[font=\tiny,anchor=east] at ($(n.G)+(-0.05,0)$) {$x$};
  \node[font=\tiny,anchor=west] at ($(p.G)+( 0.05,0)$) {$\bar{x}$};
  \draw (0,1.55) -- (0,1.22);
  \draw[dashed,line width=0.5pt,rounded corners=1.5pt,fill=black!4]
        (-0.86,0.55) rectangle (0.86,1.22);
  \node[font=\tiny] at (0,0.99) {buffer};
  \node[font=\tiny] at (0,0.72) {$\phi_{k+1}$};
  \draw (0,0.55) -- (0,0.18) node[ocirc]{}
        node[anchor=north,font=\tiny,yshift=-2.5pt] {$y$};
\end{circuitikz}
&
\begin{circuitikz}[scale=\famscale,transform shape,line width=0.55pt]
  \node[nmos,anchor=D]           (n) at (-0.95,2.55) {};
  \node[pmos,anchor=S,xscale=-1] (p) at ( 0.95,2.55) {};
  \draw (-0.95,2.55) -- (-0.95,3.00) node[anchor=south,font=\scriptsize] {$\phi_{k}$};
  \draw ( 0.95,2.55) -- ( 0.95,3.00) node[anchor=south,font=\scriptsize] {$\bar{\phi}_{k}$};
  \draw (n.S) -- (-0.95,1.55) -- (0.95,1.55) -- (p.D);
  \draw (0,1.55) -- (0,1.10) node[ocirc]{}
        node[anchor=north,font=\tiny,yshift=-2.5pt] {$y$};
  \node[font=\tiny,anchor=east] at ($(n.G)+(-0.05,0)$) {$x$};
  \node[font=\tiny,anchor=west] at ($(p.G)+( 0.05,0)$) {$\bar{x}$};
  \node[font=\tiny,gray!55!black] at (0,0.52) {static pair};
\end{circuitikz}
&
\begin{circuitikz}[scale=\famscale,transform shape,line width=0.55pt]
  \xcpair
  \fnbox{(-2.15,2.32)}{F_{\mathrm N}}
  \fnbox{( 2.15,2.32)}{\bar{F}_{\mathrm N}}
  \draw (-2.15,2.59) -- ($(ra)+(-0.85,0)$) -- (ra);
  \draw ( 2.15,2.59) -- ($(rb)+( 0.85,0)$) -- (rb);
  \draw (-2.15,2.05) -- (-2.15,1.15) -- (-1.3,1.15);
  \draw ( 2.15,2.05) -- ( 2.15,1.15) -- ( 1.3,1.15);
  \draw (-1.3,1.15) -- (-1.3,0.72) node[ocirc]{}
        node[anchor=north,font=\tiny,yshift=-2.5pt] {$y$};
  \draw ( 1.3,1.15) -- ( 1.3,0.72) node[ocirc]{}
        node[anchor=north,font=\tiny,yshift=-2.5pt] {$\bar{y}$};
\end{circuitikz}
\\[-2pt]
\scriptsize (a) \textsc{tgate}, 4$\phi$ & \scriptsize (b) \textsc{2lal}, 4$\phi$
& \scriptsize (c) \textsc{s2lal}, 8$\phi$ & \scriptsize (d) \textsc{pfal}, 4$\phi$
\\[7pt]

\begin{circuitikz}[scale=\famscale,transform shape,line width=0.55pt]
  \xcpair
  \outtaps{1.15}
  \fnbox{(-1.3,0.66)}{F_{\mathrm N}}
  \fnbox{( 1.3,0.66)}{\bar{F}_{\mathrm N}}
  \draw (-1.3,1.15) -- (-1.3,0.93);
  \draw ( 1.3,1.15) -- ( 1.3,0.93);
  \draw (-1.3,0.39) -- (-1.3,0.12) -- (1.3,0.12) -- (1.3,0.39);
  \draw (0,0.12) -- (0,-0.10) node[ground,scale=0.62]{};
\end{circuitikz}
&
\begin{circuitikz}[scale=\famscale,transform shape,line width=0.55pt]
  \xcpair
  \outtaps{1.15}
  \node[nmos,anchor=D]           (c1) at (-1.3,0.92) {};
  \node[nmos,anchor=D,xscale=-1] (c2) at ( 1.3,0.92) {};
  \draw (-1.3,1.15) -- (-1.3,0.92);
  \draw ( 1.3,1.15) -- ( 1.3,0.92);
  \node[font=\tiny,anchor=east] at ($(c1.G)+(-0.05,0)$) {$cx$};
  \node[font=\tiny,anchor=west] at ($(c2.G)+( 0.05,0)$) {$cx$};
  \fnbox{(-1.3,-0.62)}{F_{\mathrm N}}
  \fnbox{( 1.3,-0.62)}{\bar{F}_{\mathrm N}}
  \draw (c1.S) -- (-1.3,-0.35);
  \draw (c2.S) -- ( 1.3,-0.35);
  \draw (-1.3,-0.89) -- (-1.3,-1.16) -- (1.3,-1.16) -- (1.3,-0.89);
  \draw (0,-1.16) -- (0,-1.38) node[ground,scale=0.62]{};
\end{circuitikz}
&
\begin{circuitikz}[scale=\famscale,transform shape,line width=0.55pt]
  \xcpair
  \outtaps{1.15}
  \fnbox{(-1.3,0.66)}{F_{\mathrm N}}
  \fnbox{( 1.3,0.66)}{\bar{F}_{\mathrm N}}
  \draw (-1.3,1.15) -- (-1.3,0.93);
  \draw ( 1.3,1.15) -- ( 1.3,0.93);
  \draw (-1.3,0.39) -- (-1.3,0.14) node[anchor=north,font=\tiny,yshift=-1pt] {$x$};
  \draw ( 1.3,0.39) -- ( 1.3,0.14) node[anchor=north,font=\tiny,yshift=-1pt] {$\bar{x}$};
  
\end{circuitikz}
&
\begin{circuitikz}[scale=\famscale,transform shape,line width=0.55pt]
  \xcpair
  \outtaps{1.15}
  \fnbox{(-1.3,0.66)}{F_{\mathrm T}}
  \fnbox{( 1.3,0.66)}{\bar{F}_{\mathrm T}}
  \draw (-1.3,1.15) -- (-1.3,0.93);
  \draw ( 1.3,1.15) -- ( 1.3,0.93);
  \draw (-1.3,0.39) -- (-1.3,0.14) node[anchor=north,font=\tiny,yshift=-1pt] {$x$};
  \draw ( 1.3,0.39) -- ( 1.3,0.14) node[anchor=north,font=\tiny,yshift=-1pt] {$\bar{x}$};
  \draw[<->,gray!60,line width=0.45pt,>=stealth] (-0.90,0.66) -- (0.90,0.66);
  \node[font=\tiny,gray!55!black,fill=white,inner sep=0.8pt] at (0,0.66) {balanced};
\end{circuitikz}
\\[-2pt]
\scriptsize (e) \textsc{ecrl}, 4$\phi$ & \scriptsize (f) \textsc{cal}, 2$\phi{+}cx$
& \scriptsize (g) \textsc{pal}, 2$\phi$ & \scriptsize (h) \textsc{spgal}, 2$\phi$
\\
\end{tabular}
\caption{One representative cell from each of the eight energy-recovery
families \tool\ can target, with the phase discipline each family requires
given beneath it.  The power clock enters at the top of every cell.  $F$ and
$\bar{F}$ denote the dual-rail series--parallel function network, which is
structurally the same in every family and is what the mapper actually
builds.  Only the devices that distinguish a family are drawn explicitly, and
the subscript gives the network's device type, $\mathrm{N}$ for nMOS pass
devices and $\mathrm{T}$ for full transmission gates.  Cells (a)--(c) are
pass networks, and one rail of the dual-rail pair is shown.  The
complementary rail is identical with $x$ and $\bar{x}$ exchanged.  Cells
(d)--(h) share a cross-coupled \textsc{pmos} pair and differ in where the
function network attaches: in parallel with the latch device in (d), as a
pull-down to ground in (e) and (f), and as a dual-rail pass network in (g)
and (h).  Cell (f) additionally gates each branch with an auxiliary clock
$cx$.  Cells (b) and (c) are pipelined, so a consumer more than one level
downstream requires the explicit buffer stage drawn dashed in (b).
\textbf{Several of these cells coincide as drawings, and deliberately so.}
Cells (g) and (h) have the same topology and are separated by the device
type inside the network and by the latch device count.  Cells (e) and (g)
differ only in where the network terminates, at ground for (e) and at the
dual-rail inputs for (g).  A cell schematic is simply not where these
families separate.  Table~\ref{tab:famrecords} is, and
Sec.~\ref{sec:family-records} reads the consequence off it.}
\label{fig:family-cells}
\end{figure*}

%% file: tab_famrecords.tex
\begin{tabular}{lrrrrcrr}
\toprule
family & $\phi$ & ovhd & self & clk & dev & fF & residue \\
\midrule
\textsc{tgate}$^\ast$ & 4 & 0 & 0 & 0 & T & 1.70 & --- \\
2\textsc{lal} & 4 & 0 & 0 & 0 & T & 1.70 & --- \\
\textsc{s2lal} & 8 & 0 & 0 & 0 & T & 1.70 & --- \\
\textsc{pfal} & 4 & 4 & 2 & 0 & N & 0.85 & --- \\
\textsc{ecrl} & 4 & 2 & 1 & 0 & N & 0.85 & 0.101 \\
\textsc{cal} & 2 & 4 & 1 & 2 & N & 0.85 & 0.101 \\
\textsc{pal} & 2 & 2 & 1 & 0 & N & 0.85 & --- \\
\textsc{spgal} & 2 & 4 & 2 & 0 & T & 1.70 & --- \\
\bottomrule
\end{tabular}

%% file: figures/fig_pipeline_v5.tex
%
%
\begin{figure}[t]
\centering
\begin{tikzpicture}[
  font=\scriptsize, line width=0.5pt,
  node distance=3.2mm,
  stage/.style={draw, rounded corners=1.2pt, align=center,
                text width=0.46\columnwidth, inner sep=2.2pt, fill=black!3},
  opt/.style={stage, dashed, fill=black!1},
  note/.style={align=left, text width=0.21\columnwidth, font=\tiny,
               text=black!62},
  ar/.style={-{Stealth[length=1.5mm]}, line width=0.45pt},
]

\node[stage] (fe) {\textbf{Front end}\\[0.4pt]
  \texttt{.v .isc .pla .aig .aag .blif .bench}};
\node[opt,   below=of fe]  (prep) {structural preparation\\[0.4pt]
  hash $\to$ balance $\to$ rewrite};
\node[opt,   below=of prep] (res) {re-synthesis passes\\[0.4pt]
  \texttt{davio $\to$ prefix $\to$ linwin $\to$ mowin}};
\node[stage, below=of res] (tags) {\textbf{Forward tag sweep} $(p_1,\alpha)$\\[0.4pt]
  computed only when the cover can read it};
\node[opt,   below=of tags] (bdec) {linear pre-filter \texttt{-{}-bdec}\\[0.4pt]
  replaces the stage below};
\node[stage, below=of bdec] (cover) {\textbf{Cover and technology map}\\[0.4pt]
  technology-priced (default) $|$ activity-priced\\[0.4pt]
  PO-rooted extraction, fanout recovery,\\[0.4pt]
  dead-block elimination};
\node[stage, below=of cover] (cap) {\textbf{Buffer insertion} to series bound\\[0.4pt]
  a design rule, never a cost term};
\node[stage, below=of cap] (er) {\textbf{Energy report, twice}\\[0.4pt]
  $T_1$ uncapped $\;\cdot\;$ $T_2$ capped};
\node[stage, below=of er] (out) {\textbf{Record and emit}\\[0.4pt]
  independent netlist $\cdot$ mapped \texttt{.tgn} $\cdot$ mapped Verilog};

\foreach \a/\b in {fe/prep, prep/res, res/tags, tags/bdec, bdec/cover,
                   cover/cap, cap/er, er/out}
  \draw[ar] (\a) -- (\b);

\node[note, right=1.4mm of tags] (n1)
  {inert under the default\\cover (\S\ref{sec:forward})};
\node[note, right=1.4mm of cover] (n2)
  {the one \textbf{backward} pass\\the flow depends on:\\reverse reachability,\\not justification\\(\S\ref{sec:backward})};
\node[note, right=1.4mm of er] (n3)
  {both reported; neither\\is the other's proxy};
\node[note, left=1.4mm of res, text width=0.17\columnwidth, align=right] (n4)
  {all passes off\\by default};


\end{tikzpicture}
\caption{The flow as the released tool executes it.  Dashed stages are off by
default and can be enabled with command line flag settings or within the UI: 
a default run parses, covers, maps, inserts buffers to the series
bound, and reports two energies.  The mapped network is equivalence-checked
after mapping and again after buffer insertion.  The
forward tag sweep is skipped entirely when the selected cover cannot consume
it, which is the shipped default, so the trial-count options are inert on a
default run.  The technology-priced cover is the default, and the activity-priced
one is the alternative and is the only consumer of the tags, and everything
above the cover is technology-independent.  And the backward
traversal the flow genuinely depends on is the primary-output-rooted extraction
inside the cover, which is reverse reachability over the chosen cuts, not
the justification pass of Eq.~\eqref{eq:justification}, which the
combinational flow does not execute at all.}
\label{fig:pipeline}
\end{figure}

%% file: figures/ex_frontend.tex
\begin{figure}[t]
\centering
\begin{tikzpicture}[scale=0.76, transform shape,
  ag/.style={and gate US, draw, logic gate inputs=nn, scale=0.58, fill=black!3},
  og/.style={or gate US, draw, logic gate inputs=nn, scale=0.58, fill=black!3},
  ng/.style={not gate US, draw, scale=0.58, fill=black!3},
  pin/.style={font=\scriptsize},
  nl/.style={font=\tiny, inner sep=1pt}]

  \node[nl] at (1.5,3.55) {before: six gates};
  \node[pin] (a) at (-0.15,3.05) {$a$};
  \node[pin] (b) at (-0.15,2.55) {$b$};
  \node[pin] (c) at (-0.15,1.05) {$c$};
  \node[ag] (g1) at (1.25,2.95) {};
  \node[ag] (g2) at (1.25,2.10) {};
  \node[ng] (g3) at (1.25,1.35) {};
  \node[ng] (g4) at (2.25,1.35) {};
  \node[og] (g5) at (3.45,2.55) {};
  \node[og] (g6) at (3.45,1.55) {};
  \coordinate (sa) at (0.45,3.05); \coordinate (sb) at (0.45,2.55);
  \coordinate (sc) at (0.45,1.05);
  \draw (a) -- (sa); \draw (b) -- (sb); \draw (c) -- (sc);
  \draw (sa) |- (g1.input 1); \draw (sb) |- (g1.input 2);
  \draw (sa) |- (g2.input 1); \draw (sb) |- (g2.input 2);
  \draw (sc) |- (g3.input);
  \draw (g3.output) -- (g4.input);
  \draw (g1.output) -| (2.85,2.95) |- (g5.input 1);
  \draw (g4.output) -| (2.95,1.35) |- (g5.input 2);
  \draw (g2.output) -| (2.75,2.10) |- (g6.input 1);
  \draw (sc) -| (3.05,1.05) |- (g6.input 2);
  \draw (g5.output) -- ++(0.42,0) node[right,pin]{$y_1$};
  \draw (g6.output) -- ++(0.42,0) node[right,pin]{$y_2$};
  \node[nl,above=0.5pt] at (g2.north) {duplicate};
  \node[nl,below=1pt] at (g4.south) {double negation};

  \draw[-{Latex[length=2.2mm]}, thick] (5.05,2.2) -- (5.85,2.2);
  \node[nl,above] at (5.45,2.25) {strash};

  \node[nl] at (7.6,3.55) {after: three gates};
  \node[pin] (a2) at (6.15,3.05) {$a$};
  \node[pin] (b2) at (6.15,2.55) {$b$};
  \node[pin] (c2) at (6.15,1.35) {$c$};
  \node[ag] (h1) at (7.35,2.80) {};
  \node[og] (h2) at (8.85,2.45) {};
  \node[og] (h3) at (8.85,1.55) {};
  \coordinate (t)  at (7.95,2.80); \coordinate (sc2) at (6.65,1.35);
  \draw (a2) |- (h1.input 1); \draw (b2) |- (h1.input 2);
  \draw (c2) -- (sc2);
  \draw (h1.output) -- (t);
  \draw (t) |- (h2.input 1); \draw (t) |- (h3.input 1);
  \draw (sc2) |- (h2.input 2); \draw (sc2) |- (h3.input 2);
  \draw (h2.output) -- ++(0.42,0) node[right,pin]{$y_1$};
  \draw (h3.output) -- ++(0.42,0) node[right,pin]{$y_2$};
\end{tikzpicture}
\caption{Structural preparation (Sec.~\ref{sec:frontend}) on a fixture built
to contain the two redundancies it removes.  Hash-consing on the function
and the content-sorted input tuple merges the duplicated conjunction, and
the double negation collapses to a wire, taking six gates to three.  The
stage is nonetheless \emph{off by default}, and this fixture shows why the
default is defensible rather than merely cautious: mapped and priced, the
two netlists above are identical, at $12$ devices and $0.008228$\,pJ on both
tables, because the cover of Sec.~\ref{sec:cover} absorbs the redundancy on
its own.  Preparation changes the input netlist rather than the
optimization, so leaving it on would make every reported ratio a statement
about two changes at once.  Measured, its usual benefit is a faster run
rather than a less costly circuit.}
\label{fig:ex-frontend}
\end{figure}

%% file: figures/ex_c17.tex
\begin{figure}[t]
\centering
\begin{tikzpicture}[scale=0.80, transform shape,
  nd/.style={nand gate US, draw, logic gate inputs=nn, scale=0.62,
             fill=black!3},
  pin/.style={font=\scriptsize},
  nl/.style={font=\tiny, inner sep=1.2pt},
  stem/.style={circle, draw, dashed, thick, gray, inner sep=1.4pt}]

  \node[pin] (i1) at (-0.15,3.30) {$1$};
  \node[pin] (i3) at (-0.15,2.55) {$3$};
  \node[pin] (i6) at (-0.15,1.80) {$6$};
  \node[pin] (i2) at (-0.15,1.05) {$2$};
  \node[pin] (i7) at (-0.15,0.30) {$7$};

  \node[nd] (g10) at (1.55,3.00) {};
  \node[nd] (g11) at (1.55,2.10) {};
  \node[nd] (g16) at (3.05,1.35) {};
  \node[nd] (g19) at (3.05,0.55) {};
  \node[nd] (g22) at (4.60,2.35) {};
  \node[nd] (g23) at (4.60,0.95) {};

  \coordinate (s3) at (0.55,2.55);
  \draw (i3) -- (s3);
  \draw (s3) |- (g10.input 2);
  \draw (s3) |- (g11.input 1);
  \draw (i1) -| (0.75,3.30) |- (g10.input 1);
  \draw (i6) -| (0.75,1.80) |- (g11.input 2);

  \coordinate (s11) at (2.20,2.10);
  \draw (g11.output) -- (s11);
  \draw (s11) |- (g16.input 2);
  \draw (s11) |- (g19.input 1);
  \draw (i2) -| (2.40,1.05) |- (g16.input 1);
  \draw (i7) -| (2.55,0.30) |- (g19.input 2);

  \coordinate (s16) at (3.60,1.35);
  \draw (g16.output) -- (s16);
  \draw (s16) |- (g22.input 2);
  \draw (s16) |- (g23.input 1);
  \draw (g10.output) -| (3.95,3.00) |- (g22.input 1);
  \draw (g19.output) -| (3.80,0.55) |- (g23.input 2);
  \draw (g22.output) -- ++(0.42,0) node[right,pin]{$22$};
  \draw (g23.output) -- ++(0.42,0) node[right,pin]{$23$};

  \node[nl,above=1pt] at (g10.north) {$10$};
  \node[nl,below=1pt] at (g11.south) {$11$};
  \node[nl,above=1pt] at (g16.north) {$16$};
  \node[nl,below=1pt] at (g19.south) {$19$};
  \node[stem] at (s3) {}; \node[stem] at (s11) {};
\end{tikzpicture}

\smallskip
{\footnotesize
\begin{tabular}{lrrr}
\toprule
net & exact $p_1$ & independence & error \\
\midrule
$10$ & $0.75000$ & $0.75000$ & $0.00\%$ \\
$11$ & $0.75000$ & $0.75000$ & $0.00\%$ \\
$16$ & $0.62500$ & $0.62500$ & $0.00\%$ \\
$19$ & $0.62500$ & $0.62500$ & $0.00\%$ \\
$22$ & $0.56250$ & $0.53125$ & $\mathbf{+5.56\%}$ \\
$23$ & $0.56250$ & $0.60938$ & $\mathbf{-8.33\%}$ \\
\bottomrule
\end{tabular}}

\caption{The running example, \texttt{c17}, and the forward tags of
Sec.~\ref{sec:forward}.  The exact column enumerates all $32$ input
vectors.  The independence column is the per-gate product rule.  They agree
to every printed digit on the four nets whose fan-ins are independent and
disagree only at nets $22$ and $23$, which are exactly the gates downstream
of a reconvergent stem: net $3$ reaches net $22$ along two paths and net
$11$ reaches net $23$ along two paths (circled).  The two errors carry
\emph{opposite} signs, so no single calibration constant absorbs them, which
is the whole case for measuring tags rather than propagating them
analytically.}
\label{fig:ex-tags}
\end{figure}

%% file: figures/ex_cover.tex
\begin{figure}[t]
\centering
\begin{tikzpicture}[scale=0.80, transform shape,
  nd/.style={nand gate US, draw, logic gate inputs=nn, scale=0.62,
             fill=black!3},
  pin/.style={font=\scriptsize},
  nl/.style={font=\tiny, inner sep=1.2pt},
  cone/.style={draw, dashed, rounded corners=3pt, thick}]

  \node[pin] (i1) at (-0.15,3.30) {$1$};
  \node[pin] (i3) at (-0.15,2.55) {$3$};
  \node[pin] (i6) at (-0.15,1.80) {$6$};
  \node[pin] (i2) at (-0.15,1.05) {$2$};
  \node[pin] (i7) at (-0.15,0.30) {$7$};

  \node[nd] (g10) at (1.55,3.00) {};
  \node[nd] (g11) at (1.55,2.10) {};
  \node[nd] (g16) at (3.05,1.35) {};
  \node[nd] (g19) at (3.05,0.55) {};
  \node[nd] (g22) at (4.60,2.35) {};
  \node[nd] (g23) at (4.60,0.95) {};

  \coordinate (s3) at (0.55,2.55);
  \draw (i3) -- (s3); \draw (s3) |- (g10.input 2); \draw (s3) |- (g11.input 1);
  \draw (i1) -| (0.75,3.30) |- (g10.input 1);
  \draw (i6) -| (0.75,1.80) |- (g11.input 2);
  \coordinate (s11) at (2.20,2.10);
  \draw (g11.output) -- (s11);
  \draw (s11) |- (g16.input 2); \draw (s11) |- (g19.input 1);
  \draw (i2) -| (2.40,1.05) |- (g16.input 1);
  \draw (i7) -| (2.55,0.30) |- (g19.input 2);
  \coordinate (s16) at (3.60,1.35);
  \draw (g16.output) -- (s16);
  \draw (s16) |- (g22.input 2); \draw (s16) |- (g23.input 1);
  \draw (g10.output) -| (3.95,3.00) |- (g22.input 1);
  \draw (g19.output) -| (3.80,0.55) |- (g23.input 2);
  \draw (g22.output) -- ++(0.42,0) node[right,pin]{$22$};
  \draw (g23.output) -- ++(0.42,0) node[right,pin]{$23$};
  \node[nl,above=1pt] at (g10.north) {$10$};
  \node[nl,below=1pt] at (g11.south) {$11$};
  \node[nl,above=1pt] at (g16.north) {$16$};
  \node[nl,below=1pt] at (g19.south) {$19$};

  \node[cone, fit=(g10)(g11)(g16)(g22), inner sep=5pt,
        label={[font=\tiny]above right:block $22$}] {};
  \node[cone, draw=black!55, fit=(g11)(g16)(g19)(g23), inner sep=8pt,
        label={[font=\tiny,black!55]below right:block $23$}] {};
\end{tikzpicture}

\smallskip
{\scriptsize
\begin{minipage}{0.96\columnwidth}
\raggedright
\texttt{g 22gat ph0 POS=P(S(+1,+3),S(+2,P(-3,-6)))}\\
\texttt{\phantom{g 22gat ph0 }NEG=S(P(-1,-3),P(-2,S(+3,+6)))}\\[2pt]
\texttt{g 23gat ph0 POS=P(S(+2,P(-3,-6)),S(P(-3,-6),+7))}\\
\texttt{\phantom{g 23gat ph0 }NEG=S(P(-2,S(+3,+6)),P(S(+3,+6),-7))}
\end{minipage}}

\caption{Covering (Sec.~\ref{sec:cover}) on the running example.  The six
NAND gates are absorbed into \emph{two} mapped blocks, one per primary
output, with cuts $\{1,2,3,6\}$ and $\{2,3,6,7\}$.  No internal net survives
as a block of its own.  Net $16$ lies inside both cones, so its logic is
duplicated rather than shared, a decision the cover's duplication discount
priced and accepted.  Below is the tool's own output for the two blocks,
written as series--parallel conduction expressions over signed literals:
\texttt{P} is parallel, \texttt{S} is series, and a sign is the literal's
polarity.  Reading the first: net $22$ conducts on its positive rail when
$(1\wedge3)$ or $(2\wedge(\bar3\vee\bar6))$, which is $\overline{10\wedge16}$
rewritten by De Morgan, and the two rails are complementary by
construction.  Counting literal occurrences gives ten pass devices here and
twelve in block $23$, the twenty-two the energy model bills.}
\label{fig:ex-cover}
\end{figure}

%% file: figures/ex_map.tex
\begin{figure}[t]
\centering
\resizebox{\columnwidth}{!}{\input{figures/c17_node22_gen}}
\caption{Technology mapping (Sec.~\ref{sec:map}) applied to block $22$ of
Fig.~\ref{fig:ex-cover}: the positive-rail pull network and its complement,
each drawn from power-clock phase $\phi_0$ to one rail of the node.  Every
device is a CMOS transmission gate driven by the named rail, and an overline
marks a literal that conducts when its signal is low.  The series--parallel
structure is exactly the conduction expression printed in
Fig.~\ref{fig:ex-cover}: the positive rail is a parallel pair of series
branches, and the negative rail is its DeMorgan dual, series of parallel.
Ten pass devices appear here, twelve in the circuit's other block, which is
the twenty-two the energy model bills.  This figure is generated from the
mapped netlist by the released tool rather than drawn, so it cannot drift
from what the mapper builds.}
\label{fig:ex-map}
\end{figure}

%% file: figures/c17_node22_gen.tex
\begin{circuitikz}[scale=1.0, transform shape, every node/.style={inner sep=1pt}]
  \draw (-16.696,4.050) to[Tnmos, n=tgP1] (-16.696,2.025);
  \draw (tgP1.G) -- ++(0.62,0)
      node[anchor=west, font=\scriptsize, inner sep=2pt] {$1gat$};
  \draw (-16.696,2.025) to[Tnmos, n=tgP2] (-16.696,0.000);
  \draw (tgP2.G) -- ++(0.62,0)
      node[anchor=west, font=\scriptsize, inner sep=2pt] {$3gat$};
  \draw (-9.898,4.050) to[Tnmos, n=tgP3] (-9.898,2.025);
  \draw (tgP3.G) -- ++(0.62,0)
      node[anchor=west, font=\scriptsize, inner sep=2pt] {$2gat$};
  \draw (-12.164,2.025) to[Tnmos, n=tgP4] (-12.164,0.000);
  \draw (tgP4.G) -- ++(0.62,0)
      node[anchor=west, font=\scriptsize, inner sep=2pt] {$\overline{3gat}$};
  \draw (-7.631,2.025) to[Tnmos, n=tgP5] (-7.631,0.000);
  \draw (tgP5.G) -- ++(0.62,0)
      node[anchor=west, font=\scriptsize, inner sep=2pt] {$\overline{6gat}$};
  \draw (-12.164,2.025) -- (-7.631,2.025);
  \draw (-12.164,0.000) -- (-7.631,0.000);
  \draw (-9.898,2.025) -- (-12.164,2.025);
  \draw (-9.898,0.000) -- (-12.164,0.000);
  \draw (-16.696,4.050) -- (-9.898,4.050);
  \draw (-16.696,0.000) -- (-9.898,0.000);
  \draw (-12.164,4.050) -- (-16.696,4.050);
  \draw (-12.164,0.000) -- (-16.696,0.000);
  \draw (-12.164,4.050) -- (-12.164,4.793) node[anchor=south, font=\small] {$\phi_{0}$};
  \draw (-12.164,0.000) -- (-12.164,-0.743) node[anchor=north, font=\small] {22gat$_{p}$};
  \node[font=\footnotesize\itshape] at (-12.164,5.603) {positive rail};
  \draw (9.898,4.050) to[Tnmos, n=tgN1] (9.898,2.700);
  \draw (tgN1.G) -- ++(0.62,0)
      node[anchor=west, font=\scriptsize, inner sep=2pt] {$\overline{1gat}$};
  \draw (14.430,4.050) to[Tnmos, n=tgN2] (14.430,2.700);
  \draw (tgN2.G) -- ++(0.62,0)
      node[anchor=west, font=\scriptsize, inner sep=2pt] {$\overline{3gat}$};
  \draw (9.898,4.050) -- (14.430,4.050);
  \draw (9.898,2.700) -- (14.430,2.700);
  \draw (12.164,4.050) -- (9.898,4.050);
  \draw (12.164,2.700) -- (9.898,2.700);
  \draw (9.898,2.700) to[Tnmos, n=tgN3] (9.898,0.000);
  \draw (tgN3.G) -- ++(0.62,0)
      node[anchor=west, font=\scriptsize, inner sep=2pt] {$\overline{2gat}$};
  \draw (14.430,2.700) to[Tnmos, n=tgN4] (14.430,1.350);
  \draw (tgN4.G) -- ++(0.62,0)
      node[anchor=west, font=\scriptsize, inner sep=2pt] {$3gat$};
  \draw (14.430,1.350) to[Tnmos, n=tgN5] (14.430,0.000);
  \draw (tgN5.G) -- ++(0.62,0)
      node[anchor=west, font=\scriptsize, inner sep=2pt] {$6gat$};
  \draw (9.898,2.700) -- (14.430,2.700);
  \draw (9.898,0.000) -- (14.430,0.000);
  \draw (12.164,2.700) -- (9.898,2.700);
  \draw (12.164,0.000) -- (9.898,0.000);
  \draw (12.164,4.050) -- (12.164,4.793) node[anchor=south, font=\small] {$\phi_{0}$};
  \draw (12.164,0.000) -- (12.164,-0.743) node[anchor=north, font=\small] {22gat$_{n}$};
  \node[font=\footnotesize\itshape] at (12.164,5.603) {negative rail};
\end{circuitikz}

%% file: figures/ex_cap.tex
\begin{figure}[t]
\centering
\begin{tikzpicture}[scale=0.78, transform shape,
  dev/.style={draw, minimum width=5.5mm, minimum height=4mm, font=\tiny,
              inner sep=0.5pt, fill=black!3},
  st/.style={draw, minimum width=5.5mm, minimum height=4mm, font=\tiny,
             inner sep=0.5pt, fill=black!12},
  lb/.style={font=\scriptsize},
  ph/.style={font=\scriptsize}]

  \node[lb] at (-1.15,4.6) {before};
  \draw (0,5.05) node[above,ph]{$\phi$} -- (0,4.85);
  \foreach \i/\n in {0/a,1/b,2/c,3/d,4/e,5/f,6/g,7/h}
    \node[dev] (d\i) at (0,4.55-0.52*\i) {$\n$};
  \draw (0,4.85) -- (d0);
  \foreach \i [evaluate=\i as \j using int(\i+1)] in {0,...,6} \draw (d\i) -- (d\j);
  \draw (d7) -- (0,0.35) node[below,ph]{$y$};
  \draw[<->,gray] (0.55,4.7) -- node[right,lb]{$\delta=8$} (0.55,0.5);

  \node[lb] at (3.05,4.6) {after, $c=3$};
  \draw (2.4,5.05) node[above,ph]{$\phi$} -- (2.4,4.85);
  \node[dev] (e0) at (2.4,4.55) {$a$};
  \node[dev] (e1) at (2.4,4.03) {$b$};
  \node[dev] (e2) at (2.4,3.51) {$c$};
  \draw (2.4,4.85) -- (e0); \draw (e0) -- (e1); \draw (e1) -- (e2);
  \draw (e2) -- (2.4,3.05) node[right=1pt,lb]{$s_1$};
  \draw (4.35,5.05) node[above,ph]{$\phi$} -- (4.35,4.85);
  \node[st]  (f0) at (4.35,4.55) {$s_1$};
  \node[dev] (f1) at (4.35,4.03) {$d$};
  \node[dev] (f2) at (4.35,3.51) {$e$};
  \draw (4.35,4.85) -- (f0); \draw (f0) -- (f1); \draw (f1) -- (f2);
  \draw (f2) -- (4.35,3.05) node[right=1pt,lb]{$s_2$};
  \draw (3.35,2.55) node[above,ph]{$\phi$} -- (3.35,2.35);
  \node[st]  (g0) at (3.35,2.05) {$s_2$};
  \node[dev] (g1) at (3.35,1.53) {$f$};
  \node[dev] (g2) at (3.35,1.01) {$g$};
  \node[dev] (g3) at (3.35,0.49) {$h$};
  \draw (3.35,2.35) -- (g0); \draw (g0) -- (g1); \draw (g1) -- (g2);
  \draw (g2) -- (g3);
  \draw (g3) -- (3.35,0.05) node[below,ph]{$y$};
  \node[lb,align=left] at (6.05,2.6)
    {two new\\ dual-rail\\ stages,\\ each $\delta\le3$};
\end{tikzpicture}
\caption{Series bounding (Sec.~\ref{sec:cap}) on a conduction chain of eight
pass devices, with the bound set to three.  The chain is partitioned into
consecutive segments, and each segment after the first is composed through
an \emph{accumulator literal} $s_i$, the buffered output of the previous
segment: a segment conducts exactly when the previous stage conducted and
this segment conducts, so the conjunction of the segments is the original
series function.  Every segment carrying an accumulator has budget $c-1$
rather than $c$, which is why three devices become three, two, and three.
Each extracted segment becomes a real dual-rail gate whose opposite rail is
the DeMorgan complement, so rail consistency survives insertion.  The bound
is a design rule and never a cost term: it makes a circuit realizable rather
than less costly, and on the small fixture of Fig.~\ref{fig:ex-davio} imposing
it raises the billed device count from $42$ to $56$.}
\label{fig:ex-cap}
\end{figure}

%% file: figures/ex_davio.tex
\begin{figure}[t]
\centering
\begin{tikzpicture}[scale=0.74, transform shape,
  nd/.style={nand gate US, draw, logic gate inputs=nn, scale=0.58,
             fill=black!3},
  xg/.style={xor gate US, draw, logic gate inputs=nn, scale=0.58,
             fill=black!8},
  pin/.style={font=\scriptsize},
  nl/.style={font=\tiny, inner sep=1pt}]

  \node[pin] (a) at (-0.2,3.05) {$a$};
  \node[pin] (b) at (-0.2,1.55) {$b$};
  \node[pin] (c) at (-0.2,0.15) {$c$};
  \node[nd] (n1) at (1.15,2.30) {};
  \node[nd] (n2) at (2.35,2.85) {};
  \node[nd] (n3) at (2.35,1.75) {};
  \node[nd] (n4) at (3.55,2.30) {};
  \node[nd] (m1) at (4.75,1.35) {};
  \node[nd] (m2) at (5.95,1.90) {};
  \node[nd] (m3) at (5.95,0.80) {};
  \node[nd] (m4) at (7.15,1.35) {};
  \coordinate (sa) at (0.35,3.05); \coordinate (sb) at (0.35,1.55);
  \draw (a) -- (sa); \draw (b) -- (sb);
  \draw (sa) |- (n1.input 1); \draw (sb) |- (n1.input 2);
  \draw (sa) |- (n2.input 1); \draw (sb) |- (n3.input 2);
  \coordinate (s1) at (1.72,2.30);
  \draw (n1.output) -- (s1);
  \draw (s1) |- (n2.input 2); \draw (s1) |- (n3.input 1);
  \draw (n2.output) -| (3.15,2.85) |- (n4.input 1);
  \draw (n3.output) -| (3.15,1.75) |- (n4.input 2);
  \coordinate (st) at (4.05,2.30);
  \draw (n4.output) -- (st);
  \coordinate (sc) at (0.72,0.15);
  \draw (c) -- (sc);
  \draw (st) |- (m1.input 1);
  \draw (sc) -| (4.32,0.15) |- (m1.input 2);
  \draw (st) -- (4.32,2.30);
  \coordinate (s2) at (5.32,1.35);
  \draw (m1.output) -- (s2);
  \draw (s2) |- (m2.input 2); \draw (s2) |- (m3.input 2);
  \draw (st) -| (5.05,2.30) |- (m2.input 1);
  \draw (sc) -| (5.05,0.15) |- (m3.input 1);
  \draw (m2.output) -| (6.72,1.90) |- (m4.input 1);
  \draw (m3.output) -| (6.72,0.80) |- (m4.input 2);
  \draw (m4.output) -- ++(0.45,0) node[right,pin]{$y$};
  \node[nl] at (2.35,3.55) {eight NAND gates};

  \draw[-{Latex[length=2.2mm]}, thick] (3.65,-0.55) -- (4.65,-0.55);
  \node[nl,above] at (4.15,-0.5) {\textsc{davio}};

  \node[pin] (a2) at (1.65,-1.55) {$a$};
  \node[pin] (b2) at (1.65,-2.10) {$b$};
  \node[pin] (c2) at (1.65,-2.75) {$c$};
  \node[xg] (x1) at (3.15,-1.82) {};
  \node[xg] (x2) at (4.65,-2.25) {};
  \draw (a2) |- (x1.input 1); \draw (b2) |- (x1.input 2);
  \draw (x1.output) -| (4.15,-1.82) |- (x2.input 1);
  \draw (c2) -| (4.15,-2.75) |- (x2.input 2);
  \draw (x2.output) -- ++(0.45,0) node[right,pin]{$y$};
  \node[nl] at (3.9,-1.15) {two XOR gates};
\end{tikzpicture}

\smallskip
{\footnotesize
\begin{tabular}{lrrrr}
\toprule
 & gates & devices & $T_1$ (pJ) & $T_2$ (pJ) \\
\midrule
default        & $8$ & $42$ & $0.004114$ & $0.006171$ \\
\texttt{--davio} & $2$ & $20$ & $0.004114$ & $\mathbf{0.004114}$ \\
\midrule
ratio          &     &      & $1.0000$   & $\mathbf{0.6667}$ \\

\bottomrule
\end{tabular}}

\caption{Affine-cut extraction (Sec.~\ref{sec:davio}) on a fixture in which
$a\oplus b\oplus c$ is written as two classic four-NAND clusters.  The pass
never matches that shape.  It takes the cut function, evaluates the Boolean
difference with respect to each variable, finds all three differences
constant at $1$, and concludes by Lemma~\ref{lem:affine} that the cut is
affine with $L=\{a,b,c\}$ and $c_0=f(0,0,0)$, so it re-emits the cut as a
parity tree over exactly those leaves.  Because the test reads the function
rather than the netlist (Corollary~\ref{cor:realization-invariant}), the
same detection fires on a NOR cluster, an AOI form, or a wide tree flattened
by an earlier pass.  The measured effect on this fixture is a fall from
eight gates to two, from $42$ devices to $20$, and from $0.006171$ to
$0.004114$\,pJ on the capped table, a ratio of $0.6667$.  The uncapped table
does not move, so the whole saving here is the series bound no longer having
work to do.}
\label{fig:ex-davio}
\end{figure}

%% file: figures/ex_elim.tex
\begin{figure}[t]
\centering
\begin{tikzpicture}[scale=0.74, transform shape,
  nn/.style={draw, rounded corners=1.5pt, minimum width=13mm,
             minimum height=5mm, font=\scriptsize, fill=black!4},
  lb/.style={font=\scriptsize},
  nl/.style={font=\tiny, inner sep=1pt}]

  \node[lb] at (-0.85,1.9) {before};
  \node[nn] (t) at (0.85,1.9) {$t=ab$};
  \node[nn] (u1) at (3.15,2.45) {$u_1=t\,c$};
  \node[nn] (u2) at (3.15,1.35) {$u_2=t\,d$};
  \draw[-{Latex[length=1.6mm]}] (t) -- (u1);
  \draw[-{Latex[length=1.6mm]}] (t) -- (u2);
  \draw[-{Latex[length=1.6mm]}] (-0.35,1.9) -- (t);
  \draw[-{Latex[length=1.6mm]}] (u1) -- ++(0.55,0);
  \draw[-{Latex[length=1.6mm]}] (u2) -- ++(0.55,0);
  \node[nl] at (2.0,0.85) {$t$ is charged and read twice};

  \draw[-{Latex[length=2.2mm]}, thick] (2.0,0.45) -- (2.0,-0.05);
  \node[nl,right] at (2.1,0.2) {collapse $t$ into its readers};

  \node[lb] at (-0.85,-0.75) {after};
  \node[nn] (v1) at (1.55,-0.30) {$u_1=abc$};
  \node[nn] (v2) at (1.55,-1.25) {$u_2=abd$};
  \draw[-{Latex[length=1.6mm]}] (0.35,-0.30) -- (v1);
  \draw[-{Latex[length=1.6mm]}] (0.35,-1.25) -- (v2);
  \draw[-{Latex[length=1.6mm]}] (v1) -- ++(0.6,0);
  \draw[-{Latex[length=1.6mm]}] (v2) -- ++(0.6,0);
  \node[nl,align=left] at (4.05,-0.78)
    {one charged net removed,\\ one literal occurrence added};
\end{tikzpicture}

\smallskip
{\scriptsize
\setlength{\tabcolsep}{4pt}
\begin{tabular}{lrrrr}
\toprule
\texttt{c880} & devices & capped dev. & $T_1$ (pJ) & $T_2$ (pJ) \\
\midrule
default          & $1202$ & $1338$ & $0.839256$ & $0.888624$ \\
\texttt{--elim single} & $1166$ & $1399$ & $\mathbf{0.732292}$ & $\mathbf{0.863940}$ \\
\midrule
ratio            &        &        & $0.8725$   & $0.9722$ \\
\bottomrule
\end{tabular}}

\caption{Bounded elimination (Sec.~\ref{sec:elim}).  A node whose output is
charged and read by two consumers is collapsed into them when that pays for
itself in literal occurrences: the internal net disappears from the ledger,
at the cost of one extra occurrence at each reader.  Whether that is a win
is arithmetic in the cost model of Eq.~\eqref{eq:cost} and not a matter of
gate count.  The measured effect on \texttt{c880} is a fall to $0.8725$ of
the uncapped figure and $0.9722$ of the capped one, and the device columns
show why no structural count could have predicted it: the pass \emph{lowers}
the uncapped device count from $1202$ to $1166$ while \emph{raising} the
capped count from $1338$ to $1399$, because the collapsed cubes are deeper
and the series bound then inserts more stages.  Energy falls on both tables
regardless, which is the only test the gate applies.}
\label{fig:ex-elim}
\end{figure}

%% file: figures/ex_prefix.tex
\begin{figure}[t]
\centering
\begin{tikzpicture}[scale=0.78, transform shape,
  op/.style={draw, circle, inner sep=0pt, minimum size=3.6mm, font=\tiny,
             fill=black!5},
  lb/.style={font=\scriptsize},
  nl/.style={font=\tiny, inner sep=1pt}]

  \node[lb] at (-0.75,2.55) {before};
  \foreach \i in {0,...,5} \node[op] (a\i) at (0.72*\i,2.55) {};
  \foreach \i [evaluate=\i as \j using int(\i+1)] in {0,...,4}
     \draw[-{Latex[length=1.6mm]}] (a\i) -- (a\j);
  \draw[-{Latex[length=1.6mm]}] (a5) -- ++(0.5,0) node[right,nl]{$c_{\mathrm{out}}$};
  \foreach \i in {0,...,5}
     \draw[-{Latex[length=1.6mm]}] (0.72*\i,3.15) node[above,nl]{$g_\i p_\i$} -- (a\i);
  \node[nl] at (1.8,1.95) {serial: depth grows with the chain};

  \draw[-{Latex[length=2.2mm]}, thick] (1.8,1.55) -- (1.8,1.05);
  \node[nl,right] at (1.9,1.3) {treeify, then re-window};

  \node[lb] at (-0.75,0.35) {after};
  \foreach \i in {0,...,5} \node[op] (b\i) at (0.72*\i,0.35) {};
  \foreach \i in {0,...,5}
     \draw[-{Latex[length=1.6mm]}] (0.72*\i,0.95) node[above,nl]{$g_\i p_\i$} -- (b\i);
  \foreach \i in {0,...,5} \node[op] (c\i) at (0.72*\i,-0.55) {};
  \foreach \i in {0,...,5} \node[op] (d\i) at (0.72*\i,-1.45) {};
  \foreach \i in {0,...,5} \draw[-{Latex[length=1.6mm]}] (b\i) -- (c\i);
  \foreach \i in {0,...,5} \draw[-{Latex[length=1.6mm]}] (c\i) -- (d\i);
  \foreach \i [evaluate=\i as \j using int(\i+1)] in {0,2,4}
     \draw[-{Latex[length=1.6mm]}] (b\i) -- (c\j);
  \foreach \i/\j in {0/2,1/3,2/4,3/5}
     \draw[-{Latex[length=1.6mm]},gray] (c\i) -- (d\j);
  \draw[-{Latex[length=1.6mm]}] (d5) -- ++(0.5,0) node[right,nl]{$c_{\mathrm{out}}$};
  \node[nl] at (1.8,-2.05) {prefix tree: depth $O(\log n)$, more gates};
\end{tikzpicture}

\smallskip
{\footnotesize
\begin{tabular}{lrr}
\toprule
state of the move & $T_1$ ratio & $T_2$ ratio \\
\midrule
treeified only            & $1.1333$ & $1.1500$ \\
treeified $+$ re-windowed & $1.0000$ & $1.0833$ \\
\midrule
gate verdict & \multicolumn{2}{r}{\emph{rejected, input returned}} \\
\bottomrule
\end{tabular}}

\caption{Parallel-prefix restructuring (Sec.~\ref{sec:prefix}) on the shipped
eight-bit ripple-carry fixture \texttt{rca8}.  The pass detects one carry
chain of length $16$, rebuilds it as a Brent--Kung all-prefix network, and
re-windows the result, pricing $512$ candidates of which $5$ are accepted
and $210$ are skipped by the overlap guard.  The table is the whole argument
for gating the move as a \emph{compound}: treeification alone is worse on
both tables, and re-windowing recovers most but not all of the loss, so the
two-table gate refuses the move and the tool returns the input unchanged.
That is a result rather than a failure.  On a circuit with denser
reconvergence around the chain the same mechanism is accepted: on
\texttt{c432} the recorded triple is $0.707608$\,pJ before, $0.761090$\,pJ
treeified, and $0.580074$\,pJ after re-windowing, a ratio of $0.8198$.
Because the intermediate state is worse than the input in both cases, gating
the two steps separately would reject the move at the first step and the
\texttt{c432} result would never be reachable.}
\label{fig:ex-prefix}
\end{figure}

%% file: figures/ex_window.tex
\begin{figure}[t]
\centering
\begin{tikzpicture}[scale=0.80, transform shape,
  bx/.style={draw, rounded corners=2pt, font=\scriptsize, align=center,
             fill=black!4, minimum height=6.5mm},
  enc/.style={bx, fill=black!10},
  lb/.style={font=\scriptsize},
  nl/.style={font=\tiny, inner sep=1pt},
  win/.style={draw, dashed, rounded corners=3pt, thick}]

  \node[lb] at (-0.75,1.55) {before};
  \node[bx, minimum width=15mm] (c0) at (1.35,1.55) {interior\\ $f(c)$};
  \foreach \i in {0,1,2} \draw[-{Latex[length=1.6mm]}]
      (0.15,1.95-0.4*\i) node[left,nl]{$c_\i$} -- (c0.west|-0,1.95-0.4*\i);
  \draw[-{Latex[length=1.6mm]}] (c0.east) -- ++(0.6,0) node[right,nl]{root};
  \node[win, fit=(c0), inner sep=4pt] {};
  \node[nl] at (1.35,0.72) {fanout-closed window};

  \draw[-{Latex[length=2.2mm]}, thick] (1.35,0.4) -- (1.35,-0.1);

  \node[lb] at (-0.75,-0.85) {after};
  \node[enc, minimum width=13mm] (A) at (1.05,-0.85) {encoder\\ $u=Ac\oplus m$};
  \node[bx, minimum width=15mm] (g) at (3.55,-0.85)
       {re-expressed\\ interior $g(u)$};
  \foreach \i in {0,1,2} \draw[-{Latex[length=1.6mm]}]
      (-0.15,-0.45-0.4*\i) node[left,nl]{$c_\i$} -- (A.west|-0,-0.45-0.4*\i);
  \draw[-{Latex[length=1.6mm]}] (A) -- node[above,nl]{$u$} (g);
  \draw[-{Latex[length=1.6mm]}] (g.east) -- ++(0.6,0) node[right,nl]{root};
  \node[nl,align=center] at (1.05,-1.75) {weight-one rows\\ are free rail aliases};
  \node[nl,align=center] at (3.55,-1.75) {priced by remapping\\ the whole netlist};
\end{tikzpicture}

\smallskip
{\footnotesize
\begin{tabular}{lrrr}
\toprule
\texttt{reconv24} & devices & $T_1$ (pJ) & $T_2$ (pJ) \\
\midrule
default            & $1208$ & $0.111078$ & $0.481338$ \\
\texttt{--linwin}  & $1188$ & $\mathbf{0.069938}$ & $\mathbf{0.440198}$ \\
\midrule
ratio              &        & $0.6296$   & $0.9145$ \\
\bottomrule
\end{tabular}}

\caption{Interior affine windows (Sec.~\ref{sec:windows}) on the shipped
\texttt{reconv24} fixture, three eight-input parity trees recombined through
majority logic.  The pass selects a fanout-closed region, applies an
invertible affine change of coordinates $u=Ac\oplus m$ to the cut variables,
and re-expresses the interior as $g(u)=f(A^{-1}(u\oplus m))$ in algebraic
normal form.  The encoder rows are emitted only for coordinates the
transformed function actually uses, and a weight-one uncomplemented row is
aliased directly to its leaf and costs no gate at all.  The local score only
\emph{orders} candidates.  Acceptance is global: the full netlist is
remapped and repriced, and the two-table gate decides.  One window is
accepted here, taking the uncapped figure to $0.6296$ of default and the
capped figure to $0.9145$.  The gap between the two ratios is the series
bound reclaiming part of the win, and is exactly why the tool reports both
tables rather than either one.}
\label{fig:ex-window}
\end{figure}

%% file: figures/ex_bdec.tex
\begin{figure}[t]
\centering
\begin{tikzpicture}[scale=0.80, transform shape,
  blk/.style={draw, rounded corners=2pt, minimum height=7mm, font=\scriptsize,
              align=center, fill=black!4},
  lb/.style={font=\scriptsize},
  nl/.style={font=\tiny, inner sep=1pt}]

  \node[lb] at (-0.55,1.35) {before};
  \node[blk, minimum width=30mm] (core0) at (1.65,1.35)
       {six 13-input parities\\ $y_i=\bigoplus_{j=0}^{12}x_{i+j}$};
  \draw[-{Latex[length=2mm]}] (-0.05,1.35) -- (core0.west);
  \draw[-{Latex[length=2mm]}] (core0.east) -- ++(0.55,0) node[right,lb]{$y$};
  \node[nl] at (1.65,0.72) {$72$ two-input XOR gates};

  \node[lb] at (-0.55,-0.75) {after};
  \node[blk, minimum width=17mm] (core1) at (1.15,-0.75)
       {re-encoded\\ core $h=Bf$};
  \node[blk, minimum width=15mm, fill=black!8] (dec) at (3.75,-0.75)
       {decoder\\ $y=B^{-1}h$};
  \draw[-{Latex[length=2mm]}] (-0.05,-0.75) -- (core1.west);
  \draw[-{Latex[length=2mm]}] (core1.east) -- node[above,nl]{$h$} (dec.west);
  \draw[-{Latex[length=2mm]}] (dec.east) -- ++(0.55,0) node[right,lb]{$y$};
  \node[nl,align=center] at (1.15,-1.55) {one 13-input parity\\ + five 2-input};
  \node[nl,align=center] at (3.75,-1.55) {$15$ XOR gates\\ (rows of weight $\ge2$)};
\end{tikzpicture}

\smallskip
{\footnotesize
\setlength{\tabcolsep}{4pt}
\begin{tabular}{lrrr}
\toprule
\texttt{bdtoy2} & devices & $T_1$ (pJ) & $T_2$ (pJ) \\
\midrule
default        & $230$ & $0.271524$ & $0.271524$ \\
\texttt{--bdec} & $192$ & $\mathbf{0.246840}$ & $\mathbf{0.246840}$ \\
\midrule
ratio          &       & $0.9091$   & $0.9091$ \\
\bottomrule
\end{tabular}}

\smallskip
{\scriptsize
$B=\begin{bmatrix}
1&0&0&0&0&0\\ 1&1&0&0&0&0\\ 0&1&1&0&0&0\\
0&0&1&1&0&0\\ 0&0&0&1&1&0\\ 0&0&0&0&1&1
\end{bmatrix}$\quad
$B^{-1}=\begin{bmatrix}
1&0&0&0&0&0\\ 1&1&0&0&0&0\\ 1&1&1&0&0&0\\
1&1&1&1&0&0\\ 1&1&1&1&1&0\\ 1&1&1&1&1&1
\end{bmatrix}$}

\caption{The linear pre-filter (Sec.~\ref{sec:bdec}) on the shipped fixture
\texttt{bdtoy2}, whose six outputs are overlapping thirteen-input parities,
$y_i=x_i\oplus\cdots\oplus x_{i+12}$.  A single elementary row addition,
$e_{i+1}\mathbin{+}=e_i$, cancels twelve of the thirteen terms, so the
re-encoded core computes $h_{i+1}=y_i\oplus y_{i+1}=x_i\oplus x_{i+13}$,
a two-input parity, and only $h_0$ retains a long chain.  The boundary is
restored by a decoder computing $y=B^{-1}h$.  The two matrices are where
Corollary~\ref{cor:asymmetry} becomes concrete: $B$ is light, every row of
weight at most two, while its inverse is the lower-triangular all-ones
matrix with row weights $1,\dots,6$, so by Lemma~\ref{lem:decoder} the
decoder costs $\sum_{w_r\ge2}(w_r-1)=15$ two-input XOR gates.  Bounding the
weight of $B$ alone would have admitted candidates whose inexpensive core is paid
for at the boundary with interest, which is why the search bounds both.
The matrices above are the fixture's ideal, the point the construction is
built around.  The recorded search reaches part of the way there, accepting
two row additions under the shipped budget and returning a circuit at
$0.9091$ of the default on both tables, at $192$ devices against $230$.
The gap between what the structure admits and what the search finds is a
tie-break problem, discussed in Sec.~\ref{sec:bdec}.  It is \emph{not} the
larger effect described in Sec.~\ref{sec:bdecrealize}, which this fixture
is too small to exhibit.}
\label{fig:ex-bdec}
\end{figure}

%% file: tab_passes.tex
\begin{tabular}{lrrrrrrrr}
\toprule
Circuit & \texttt{davio} & \texttt{linwin} & \texttt{mowin} & \texttt{prefix} & \texttt{factor} & \texttt{bdec} & \texttt{optall} & \texttt{lw+mw} \\
\midrule
xa & 1.000 & 1.000 & 1.000 & 1.000 & 1.000 & 1.000 & 1.000 & 1.000 \\
c17 & 1.000 & 1.000 & 1.000 & 1.000 & 1.000 & 1.000 & 1.000 & 1.000 \\
ctrl & 1.000 & \textbf{0.971} & \textbf{0.873} & \textbf{0.863} & \textbf{0.765} & 1.000 & \textbf{0.863} & \textbf{0.892} \\
t481 & 1.000 & 1.000$^\dagger$ & 1.000$^\dagger$ & 1.000$^\dagger$ & $\ddagger$ & 1.000 & 1.000$^\dagger$ & 1.000$^\dagger$ \\
reconv24 & 1.000 & \textbf{0.915} & \textbf{0.915} & 1.000 & 1.000 & 1.000 & \textbf{0.915} & \textbf{0.915} \\
c432 & 1.000 & 1.000 & \textbf{0.862} & \textbf{0.817} & 1.000 & 1.000 & \textbf{0.817} & \textbf{0.862} \\
c880 & \textbf{0.968} & \textbf{0.970} & \textbf{0.979} & 1.000 & \textbf{0.972} & 1.000 & \textbf{0.968} & \textbf{0.968} \\
router & 1.000 & 1.000 & \textbf{0.973} & \textbf{0.985} & $\ddagger$ & 1.000$^\dagger$ & \textbf{0.936} & \textbf{0.965} \\
8-bit hash & 1.000 & 1.000 & 1.000 & 1.000 & 1.000 & 1.000 & 1.000 & 1.000 \\
c499 & 1.000 & 1.000 & \textbf{0.971} & 1.000 & 1.000 & 1.000 & \textbf{0.971} & \textbf{0.971} \\
c1355 & \textbf{0.709} & \textbf{0.795} & \textbf{0.741} & \textbf{0.839} & 1.000 & 1.000 & \textbf{0.673} & \textbf{0.739} \\
crc8 & 1.000 & 1.000 & 1.000 & 1.000 & 1.000 & 1.000$^\dagger$ & 1.000 & 1.000 \\
c1908 & \textbf{0.990} & \textbf{0.992} & \textbf{0.947} & 1.000 & \textbf{0.959} & 1.000 & \textbf{0.964} & \textbf{0.956} \\
c2670 & \textbf{0.987} & \textbf{0.978} & \textbf{0.910} & \textbf{0.969} & $\ddagger$ & 1.000 & \textbf{0.941} & \textbf{0.944} \\
c3540 & \textbf{0.990} & \textbf{0.995} & \textbf{0.934} & 1.000 & \textbf{0.985} & 1.000 & \textbf{0.938} & \textbf{0.946} \\
dec & 1.000 & 1.000 & \textbf{0.882} & 1.000 & \textbf{0.677} & 1.000 & \textbf{0.882} & \textbf{0.882} \\
c5315 & \textbf{0.973} & \textbf{0.957} & \textbf{0.926} & 1.000 & 1.000 & 1.000 & \textbf{0.916} & \textbf{0.891} \\
c7552 & \textbf{0.954} & \textbf{0.945} & \textbf{0.911}$^\dagger$ & \textbf{0.940} & \textbf{0.998} & 1.000 & \textbf{0.898}$^\dagger$ & \textbf{0.891} \\
c6288 & 1.000 & 1.000 & \textbf{0.984}$^\dagger$ & 1.000$^\dagger$ & \textbf{0.987} & 1.000 & 1.000$^\dagger$ & \textbf{0.984}$^\dagger$ \\
12-bit hash & 1.000 & 1.000 & 1.000 & 1.000 & $\ddagger$ & 1.000 & 1.000 & 1.000 \\
\bottomrule
\end{tabular}

%% file: tab_heldout_default.tex
\begin{tabular}{lrrrr}
\toprule
Circuit & Gates & Devices & $T_1$ (pJ) & $T_2$ (pJ) \\
\midrule
s641 & 379 & 618 & 0.617100 & 0.631499 \\
s838 & 390 & 972 & 0.538934 & 0.538934 \\
s953 & 395 & 1402 & 0.789888 & 0.865997 \\
c1238 & 508 & 1828 & 0.678810 & 0.808401 \\
c1196 & 529 & 1636 & 0.650012 & 0.705551 \\
s1488 & 653 & 2012 & 0.481338 & 0.561561 \\
s1423 & 657 & 1746 & 1.353506 & 1.404931 \\
int2float & 805 & 640 & 0.119306 & 0.143990 \\
10-bit hash & 1044 & 19745 & 0.411400 & 5.195982 \\
10-bit hash B & 1044 & 19457 & 0.411400 & 5.082847 \\
cavlc & 2293 & 4928 & 0.374374 & 1.145749 \\
s5378 & 2779 & 4726 & 4.336156 & 4.545970 \\
priority & 3327 & 2888 & 2.353208 & 2.357322 \\
adder & 3561 & 3822 & 2.098140 & 2.098140 \\
i2c & 4086 & 3880 & 2.414918 & 2.610333 \\
s9234 & 5597 & 7836 & 7.211842 & 7.674667 \\
max & 8975 & 7496 & 8.602374 & 8.849214 \\
dalu & 9553 & 6562 & 5.076676 & 5.284433 \\
sin & 16853 & 17994 & 35.688950 & 36.589916 \\
arbiter & 35712 & 27692 & 48.598682 & 50.145546 \\
\midrule
median & & & 1.071697 & 2.227731 \\
\bottomrule
\end{tabular}

%% file: tab_heldout_arms.tex
\begin{tabular}{llrr}
\toprule
Circuit & best arm & $T_2$ & $T_1$ \\
\midrule
adder & \texttt{elim} & \textbf{0.9059} & 0.8255 \\
c1196 & \texttt{lw+mw} & \textbf{0.9388} & 0.9177 \\
c1238 & \texttt{lw+mw} & \textbf{0.8931} & 0.9394 \\
cavlc & \texttt{elim} & \textbf{0.4093} & 0.9560 \\
dalu & \texttt{linwin} & \textbf{0.9774} & 0.9878 \\
i2c & \texttt{elim} & \textbf{0.9149} & 0.8313 \\
int2float & \texttt{lw+mw} & \textbf{0.8143} & 0.7931 \\
max & \textit{(none)} & 1.0000 & 1.0000 \\
priority & \textit{(none)} & 1.0000 & 1.0000 \\
s1423 & \texttt{lw+mw} & \textbf{0.9253} & 0.9392 \\
s1488 & \texttt{opt-all} & \textbf{0.8645} & 0.8205 \\
s5378 & \texttt{elim} & \textbf{0.7593} & 0.7163 \\
s641 & \texttt{elim} & \textbf{0.8371} & 0.8333 \\
s838 & \texttt{lw+mw} & \textbf{0.9771} & 0.9695 \\
s9234 & \texttt{elim} & \textbf{0.7875} & 0.7353 \\
s953 & \texttt{opt-all} & \textbf{0.8409} & 0.8646 \\
sin & \texttt{davio} & \textbf{0.9971} & 0.9963 \\
10-bit hash & \textit{(none)} & 1.0000 & 1.0000 \\
10-bit hash B & \textit{(none)} & 1.0000 & 1.0000 \\
\midrule
\multicolumn{4}{l}{\emph{15 of 19 improve, best-arm $T_2$ median $0.9149$/19.}} \\
\bottomrule
\end{tabular}

%% file: tab_bdecset.tex
\begin{table}[t]
\caption{The boundary-structure set.  Ratios are the pass's result against
the same circuit synthesized without the flag, on the uncapped and capped
tables.  The set is not all winners by design: four members are expected to
decline, each for a different reason, and a set of only wins would say
nothing about where the pass stops.  Coverage is the number of candidate row
additions priced against the number legal at the shipped weight bound.}
\label{tab:bdecset}
\centering
\small
\setlength{\tabcolsep}{1.5pt}
\begin{tabular}{lrrrrrr}
\toprule
circuit & $m$ & coverage & acc. & devices & $T_1$ & $T_2$ \\
\midrule
\multicolumn{7}{l}{\emph{accepted}}\\
\texttt{bdtoy2}     &  6 & 49/30  & 2 & $230\!\to\!192$   & $0.9091$ & $0.9091$ \\
\texttt{bdslide}    &  8 & 53/56  & 3 & $2176\!\to\!1432$ & $\mathbf{0.8929}$ & $\mathbf{0.7955}$ \\
\texttt{gray2bin16} & 16 & 44/240 & 2 & $2082\!\to\!1626$ & $1.0526$ & $0.9355$ \\
\midrule
\multicolumn{7}{l}{\emph{declined, and why}}\\
\texttt{gray2bin32} & 32 & 25/992 & 0 & $314\!\to\!314$   & $1.0000$ & $1.0000$ \\
\texttt{hamsynd}    &  5 & 21/20  & 0 & $1320\!\to\!1320$ & $1.0000$ & $1.0000$ \\
\texttt{bdwin}      &  6 & 25/30  & 0 & $1128\!\to\!1128$ & $1.0000$ & $1.0000$ \\
\texttt{bdnull}     &  6 & 25/30  & 0 & $888\!\to\!888$   & $1.0000$ & $1.0000$ \\
\bottomrule
\end{tabular}
\end{table}

%% file: tab_wallclock.tex
\begin{tabular}{lr@{\qquad}lr}
\toprule
\multicolumn{2}{c}{default flow} & \multicolumn{2}{c}{pass study (\texttt{c880})} \\
circuit & s & arm & s \\
\midrule
\texttt{c17} & 0.09 & \texttt{--davio} & 7.69 \\
\texttt{xa} & 0.03 & \texttt{--factor} & 2.82 \\
\texttt{ctrl} & 0.46 & \texttt{--prefix} & 198.7 \\
\texttt{reconv24} & 3.63 & \texttt{--linwin} & 17.0 \\
\texttt{crc8} & 42.5 & \texttt{--mowin} & 162.2 \\
\texttt{c432} & 0.16 & \texttt{--bdec} & 72.6 \\
\texttt{c499} & 1.85 & five-pass chain & 212.4 \\
\texttt{c880} & 2.65 &  &  \\
\texttt{c1355} & 4.84 &  &  \\
\texttt{c1908} & 1.79 &  &  \\
\texttt{c2670} & 1.76 &  &  \\
\texttt{c3540} & 4.63 &  &  \\
\texttt{c5315} & 4.73 &  &  \\
\texttt{c6288} & 35.5 &  &  \\
\texttt{c7552} & 28.3 &  &  \\
\texttt{dec} & 0.24 &  &  \\
\texttt{router} & 70.5 &  &  \\
\texttt{hash8} & 1.06 &  &  \\
\texttt{t481} & 47.0 &  &  \\
\texttt{hash12} & 47.2 &  &  \\
\bottomrule
\end{tabular}

%% file: tab_families.tex
\begin{tabular}{lrrrr}
\toprule
Family & geo.\ mean $T_2$ ratio & best case & worst case & wins \\
\midrule
ecrl & 0.612 & 0.211 & 1.25 & 17 \\
pal & 0.612 & 0.211 & 1.25 & 0 \\
pfal & 0.722 & 0.239 & 1.50 & 0 \\
cal & 0.830 & 0.261 & 1.76 & 0 \\
tgate & 0.857 & 0.280 & 1.33 & 0 \\
tgate\_sl6 & 1.000 & 1.000 & 1.00 & 1 \\
spgal & 1.308 & 0.470 & 2.45 & 0 \\
2lal & 2.502 & 0.715 & 6.73 & 2 \\
s2lal & 5.004 & 1.430 & 13.45 & 0 \\
\bottomrule
\end{tabular}

%% file: tab_families_heldout.tex
\begin{tabular}{lrrrrr}
\toprule
Family & $n$ & geo.\ mean & best & worst & wins \\
\midrule
ecrl & 20 & 0.669 & 0.218 & 1.34 & 18 \\
pal & 20 & 0.669 & 0.218 & 1.34 & 18 \\
pfal & 20 & 0.804 & 0.289 & 1.74 & 0 \\
cal & 20 & 0.938 & 0.360 & 2.14 & 0 \\
tgate & 20 & 0.950 & 0.285 & 1.69 & 0 \\
tgate\_sl6 & 20 & 1.000 & 1.000 & 1.00 & 2 \\
spgal & 20 & 1.498 & 0.570 & 3.29 & 0 \\
2lal$^{\dagger}$ & 17 & 4.384 & 0.777 & 20.53 & 0 \\
s2lal$^{\dagger}$ & 17 & 8.767 & 1.553 & 41.06 & 0 \\
\bottomrule
\end{tabular}

%% file: tab_certified.tex
\begin{tabular}{lrl}
\toprule
Circuit & gap (\%) & certificate \\
\midrule
c17 & 0.00 & exact optimum computed \\
xa & 0.00 & exact optimum computed \\
reconv24 & 0.00 & exact optimum computed \\
t481 & 0.00 & \textbf{attains the cover-space floor} \\
c499 & 1.54 & anytime bound (ceiling) \\
c1355 & 6.08 & anytime bound (ceiling) \\
c1908 & 11.38 & anytime bound (ceiling) \\
c880 & 19.30 & anytime bound (ceiling) \\
router & 20.21 & anytime bound (ceiling) \\
c432 & 23.74 & anytime bound (ceiling) \\
ctrl & 24.39 & exact optimum computed \\
crc8 & 35.05 & anytime bound (ceiling) \\
c2670 & 40.61 & anytime bound (ceiling) \\
c5315 & 47.14 & anytime bound (ceiling) \\
dec & 50.00 & exact optimum computed \\
c3540 & 101.19 & anytime bound (ceiling) \\
c7552 & 103.56 & anytime bound (ceiling) \\
c6288 & 250.42 & anytime bound (ceiling) \\
8-bit hash & $-96.90$ & realizer outside the cover space \\
12-bit hash & $-98.37$ & realizer outside the cover space \\
\bottomrule
\end{tabular}

%% file: renesis_arxiv.bib
@inproceedings{zulehner2019adiabatic,
  author    = {Alwin Zulehner and Michael P. Frank and Robert Wille},
  title     = {Design Automation for Adiabatic Circuits},
  booktitle = {Proc. 24th Asia and South Pacific Design Automation Conf. (ASP-DAC)},
  year      = {2019}, publisher = {ACM},
  doi       = {10.1145/3287624.3287673}, note = {arXiv:1809.02421},
  pages     = {VERIFY}
}

@article{landauer1961, author={Rolf Landauer},
  title={Irreversibility and Heat Generation in the Computing Process},
  journal={IBM J. Research and Development}, volume={5}, number={3},
  pages={183--191}, year={1961}}

@article{reeb2014landauer, author={David Reeb and Michael M. Wolf},
  title={An improved {Landauer} principle with finite-size corrections},
  journal={New J. Physics}, volume={16}, pages={103011}, year={2014}}

@inproceedings{lindgren_lowpower_bdd,
  title={Low power optimization technique for BDD mapped circuits},
  author={Lindgren, Per and Kerttu, Mikael and Thornton, Mitchell and Drechsler, Rolf},
  booktitle={Proceedings of the 2001 Asia and South Pacific Design Automation Conference},
  pages={615--621},
  year={2001}
}

@incollection{thornton_rm_chapter1,
  author={Mitchell A. Thornton and David K. Houngninou and D. Michael Miller},
  title={Computing the {Reed-Muller} Spectrum / Algebraic Normal Form:
         Functional Methods},
  booktitle={Advances in the {Boolean} Domain},
  editor={Bernd Steinbach},
  publisher={Cambridge Scholars Publishing},
  address={Newcastle upon Tyne, UK}, year={2022}, chapter={1}}

@incollection{houngninou_rm_chapter2,
  author={David K. Houngninou and Mitchell A. Thornton and D. Michael Miller},
  title={Extracting the {Reed-Muller} Spectrum / Algebraic Normal Form from a
         Circuit Specification},
  booktitle={Advances in the {Boolean} Domain},
  editor={Bernd Steinbach},
  publisher={Cambridge Scholars Publishing},
  address={Newcastle upon Tyne, UK}, year={2022}, chapter={2}}

@article{arf1941untersuchungen,
  title={Untersuchungen {\"u}ber quadratische formen in k{\"o}rpern der charakteristik 2.(teil i.).},
  author={Arf, Cahit},
  year={1941},
  publisher={Walter de Gruyter, Berlin/New York Berlin, New York}
}

@book{hurst1985spectral,
  author={Stanley L. Hurst and D. Michael Miller and Jon C. Muzio},
  title={Spectral Techniques in Digital Logic},
  publisher={Academic Press}, address={London, UK}, year={1985}}

@incollection{steinbach2016chapter,
  author={Mitchell A. Thornton},
  title={A Vector Space Method for {Boolean} Networks},
  booktitle={Problems and New Solutions in the {Boolean} Domain},
  editor={Bernd Steinbach},
  publisher={Cambridge Scholars Publishing},
  address={Newcastle upon Tyne, UK}, year={2016},
  chapter={1.1}, pages={3--50}}

@book{thornton_vsim_book,
  author={Mitchell A. Thornton},
  title={Modeling Digital Switching Circuits with Linear Algebra},
  series={Synthesis Lectures on Digital Circuits and Systems},
  number={44},
  publisher={Morgan \& Claypool Publishers}, year={2014},
  doi={10.2200/S00579ED1V01Y201404DCS044},
  isbn={9781627052337}}

@article{bennett1973,
  author  = {Charles H. Bennett},
  title   = {Logical Reversibility of Computation},
  journal = {IBM Journal of Research and Development},
  volume  = {17},
  number  = {6},
  pages   = {525--532},
  year    = {1973},
  doi     = {10.1147/rd.176.0525}
}

@article{athas1994adiabatic,
  author  = {W. C. Athas and L. J. Svensson and J. G. Koller and
             N. Tzartzanis and E. Y.-C. Chou},
  title   = {Low-Power Digital Systems Based on Adiabatic-Switching
             Principles},
  journal = {IEEE Transactions on Very Large Scale Integration (VLSI)
             Systems},
  volume  = {2},
  number  = {4},
  pages   = {398--407},
  year    = {1994},
  doi     = {10.1109/92.335009}
}

@inproceedings{koller1992adiabatic,
  title={Adiabatic switching, low energy computing, and the physics of storing and erasing information},
  author={Koller, Jeffrey G and Athas, William C},
  booktitle={Workshop on Physics and Computation},
  pages={267--270},
  year={1992},
  organization={IEEE}
}

@article{moon1996ecrl,
  title={An efficient charge recovery logic circuit},
  author={Moon, Yong and Jeong, Deog-Kyoon},
  journal={IEICE transactions on electronics},
  volume={79},
  number={7},
  pages={925--933},
  year={1996},
  publisher={The Institute of Electronics, Information and Communication Engineers}
}

@article{vetuli1996pfal,
  title={Positive feedback in adiabatic logic},
  author={Vetuli, A and Di Pascoli, Stefano and Reyneri, LM and others},
  journal={Electronics Letters},
  volume={32},
  number={20},
  pages={1867--1868},
  year={1996},
  publisher={[Stevenage, etc., Institution of Electrical Engineers]}
}

@inproceedings{maksimovic1995cal,
  title={Clocked CMOS adiabatic logic with single AC power supply},
  author={Maksimovic, Dragan and Oklobdzija, Vojin G},
  booktitle={ESSCIRC'95: Twenty-first European Solid-State Circuits Conference},
  pages={370--373},
  year={1995},
  organization={IEEE}
}

@article{oklobdzija1997pal,
  title={Pass-transistor adiabatic logic using single power-clock supply},
  author={Oklobdzija, Vojin G and Maksimovic, Dragan and Lin, Fengcheng},
  journal={IEEE Transactions on Circuits and Systems II: Analog and Digital Signal Processing},
  volume={44},
  number={10},
  pages={842--846},
  year={1997},
  publisher={IEEE}
}

@inproceedings{aghoram2004twolal,
  title={Driving Fully-Adiabatic Logic Circuits Using Custom High-Q MEMS Resonators.},
  author={Anantharam, Venkiteswaran and He, Maojiao and Natarajan, Krishna and Xie, Huikai and Frank, Michael P},
  booktitle={ESA/VLSI},
  pages={5--11},
  year={2004}
}

@inproceedings{frank2020s2lal,
  title={Reversible computing with fast, fully static, fully adiabatic CMOS},
  author={Frank, Michael P and Brocato, Robert W and Tierney, Brian D and Missert, Nancy A and Hsia, Alexander H},
  booktitle={2020 International Conference on Rebooting Computing (ICRC)},
  pages={1--8},
  year={2020},
  organization={IEEE}
}

@article{kumar2019eespfal,
  author  = {S. Dinesh Kumar and Himanshu Thapliyal and Azhar Mohammad},
  title   = {{EE-SPFAL}: A Novel Energy-Efficient Secure Positive Feedback
             Adiabatic Logic for {DPA} Resistant {RFID} and Smart Card},
  journal = {IEEE Transactions on Emerging Topics in Computing},
  volume  = {7},
  number  = {2},
  pages   = {281--293},
  year    = {2019},
  doi     = {10.1109/TETC.2016.2645128}
}

@inproceedings{rauchenecker2017mixdes,
  title={Exploiting reversible logic design for implementing adiabatic circuits},
  author={Rauchenecker, Andreas and Ostermann, Timm and Wille, Robert},
  booktitle={2017 MIXDES-24th International Conference" Mixed Design of Integrated Circuits and Systems},
  pages={264--270},
  year={2017},
  organization={IEEE}
}

@article{morrison2014tcad,
  author  = {Matthew Morrison and Nagarajan Ranganathan},
  title   = {Synthesis of Dual-Rail Adiabatic Logic for Low Power Security
             Applications},
  journal = {IEEE Transactions on Computer-Aided Design of Integrated Circuits
             and Systems},
  volume  = {33},
  number  = {7},
  pages   = {975--988},
  year    = {2014},
  doi     = {10.1109/TCAD.2014.2313454}
}

@inproceedings{clark2024isvlsi,
  author    = {Jared Clark and Eric Raffel and Himanshu Thapliyal},
  title     = {Automated Generation of Dual Rail Adiabatic Gates from Binary
               Decision Diagrams},
  booktitle = {Proc. IEEE Computer Society Annual Symposium on VLSI (ISVLSI)},
  address   = {Knoxville, TN},
  pages     = {809--811},
  year      = {2024},
  doi       = {10.1109/ISVLSI61997.2024.00160}
}

@inproceedings{dhananjay2019iscas,
  author    = {Kaushik Dhananjay and Emre Salman},
  title     = {Special Session: Adiabatic Circuits for Energy-Efficient and
               Secure {IoT} Systems},
  booktitle = {Proc. IEEE Int. Symposium on Circuits and Systems (ISCAS)},
  year      = {2019}
}

@phdthesis{ye1997thesis,
  title={Design and synthesis of adiabatic circuits for low power},
  author={Ye, Yibin},
  year={1997},
  publisher={Purdue University}
}

@incollection{hanninen2014adiabatic,
  title={Adiabatic CMOS: limits of reversible energy recovery and first steps for design automation},
  author={H{\"a}nninen, Ismo and Snider, Gregory L and Lent, Craig S},
  booktitle={Transactions on Computational Science XXIV: Special Issue on Reversible Computing},
  pages={1--20},
  year={2014},
  publisher={Springer}
}

@inproceedings{bairamkulov2024date,
  title={Technology-aware logic synthesis for superconducting electronics},
  author={Bairamkulov, Rassul and Lee, Siang-Yun and Calvino, Alessandro Tempia and Marakkalage, Dewmini Sudara and Yu, Mingfei and De Micheli, Giovanni},
  booktitle={2024 Design, Automation \& Test in Europe Conference \& Exhibition (DATE)},
  pages={1--6},
  year={2024},
  organization={IEEE}
}

@inproceedings{tsui1993dac,
  author    = {Chi-Ying Tsui and Massoud Pedram and Alvin M. Despain},
  title     = {Technology Decomposition and Mapping Targeting Low Power
               Dissipation},
  booktitle = {Proc. 30th Design Automation Conference (DAC)},
  pages     = {68--73},
  year      = {1993},
  doi       = {10.1145/157485.164577}
}

@inproceedings{tiwari1993dac,
  author    = {Vivek Tiwari and Pranav Ashar and Sharad Malik},
  title     = {Technology Mapping for Low Power},
  booktitle = {Proc. 30th Design Automation Conference (DAC)},
  pages     = {74--79},
  year      = {1993},
  doi       = {10.1145/157485.164581}
}

@article{najm1994survey,
  title={A survey of power estimation techniques in VLSI circuits},
  author={Najm, Farid N},
  journal={IEEE transactions on very large scale integration (VLSI) systems},
  volume={2},
  number={4},
  pages={446--455},
  year={1994},
  publisher={IEEE}
}

@misc{mishchenko2011power,
  title={A power optimization toolbox for logic synthesis and mapping},
  author={Chung, Stephen Jang Kevin and Brayton, Alan Mishchenko Robert},
  year={2009}
}

@article{brent1982adder,
  title={A regular layout for parallel adders},
  author={Brent and Kung},
  journal={IEEE transactions on Computers},
  volume={100},
  number={3},
  pages={260--264},
  year={1982},
  publisher={IEEE}
}

@article{kogge1973recurrence,
  title={A parallel algorithm for the efficient solution of a general class of recurrence equations},
  author={Kogge, Peter M and Stone, Harold S},
  journal={IEEE transactions on computers},
  volume={100},
  number={8},
  pages={786--793},
  year={1973},
  publisher={IEEE}
}

@article{bryant1986bdd,
  author  = {Randal E. Bryant},
  title   = {Graph-Based Algorithms for {B}oolean Function Manipulation},
  journal = {IEEE Transactions on Computers},
  volume  = {C-35},
  number  = {8},
  pages   = {677--691},
  year    = {1986},
  doi     = {10.1109/TC.1986.1676819}
}

@inproceedings{brglez1985iscas85,
  author    = {Franc Brglez and Hideo Fujiwara},
  title     = {A Neutral Netlist of 10 Combinational Benchmark Circuits and a
               Target Translator in {F}ortran},
  booktitle = {Proc. IEEE Int. Symposium on Circuits and Systems (ISCAS)},
  year      = {1985}
}

@phdthesis{maslov2003,
  author={Dmitri Maslov},
  title={Reversible Logic Synthesis},
  school={University of New Brunswick},
  address={Fredericton, New Brunswick, Canada},
  year={2003}, month={September},
  note={Minimum garbage: Theorem~1, Sec.~3.1, p.~28}}

@book{thornton2001spectral,
  author={Mitchell A. Thornton and Rolf Drechsler and D. Michael Miller},
  title={Spectral Techniques in {VLSI} {CAD}},
  publisher={Kluwer Academic Publishers}, address={Boston, MA}, year={2001}}

@article{thornton2015tc,
  author={Mitchell A. Thornton},
  title={Simulation and Implication Using a Transfer Function Model for
         Switching Logic},
  journal={IEEE Trans. Computers},
  volume={64}, number={12}, pages={3580--3590}, year={2015}}

@inproceedings{avizienis1977nversion,
  author    = {Avi{\v z}ienis, Algirdas and Chen, Liming},
  title     = {On the Implementation of {N}-Version Programming for
               Software Fault Tolerance During Execution},
  booktitle = {Proc. IEEE COMPSAC},
  year      = {1977},
  pages     = {149--155}
}

@article{avizienis1985nversion,
  author  = {Avi{\v z}ienis, Algirdas},
  title   = {The {N}-Version Approach to Fault-Tolerant Software},
  journal = {IEEE Transactions on Software Engineering},
  volume  = {SE-11},
  number  = {12},
  pages   = {1491--1501},
  year    = {1985}
}

@article{knight1986independence,
  author  = {Knight, John C. and Leveson, Nancy G.},
  title   = {An Experimental Evaluation of the Assumption of Independence
             in Multiversion Programming},
  journal = {IEEE Transactions on Software Engineering},
  volume  = {SE-12},
  number  = {1},
  pages   = {96--109},
  year    = {1986}
}

@article{mckeeman1998differential,
  author  = {McKeeman, William M.},
  title   = {Differential Testing for Software},
  journal = {Digital Technical Journal},
  volume  = {10},
  number  = {1},
  pages   = {100--107},
  year    = {1998}
}

@book{boehm1981cocomo,
  author    = {Boehm, Barry W.},
  title     = {Software Engineering Economics},
  publisher = {Prentice-Hall},
  address   = {Englewood Cliffs, NJ},
  year      = {1981}
}

@article{cong1994flowmap,
  title={FlowMap: An optimal technology mapping algorithm for delay optimization in lookup-table based FPGA designs},
  author={Cong, Jason and Ding, Yuzheng},
  journal={IEEE Transactions on Computer-Aided Design of Integrated Circuits and Systems},
  volume={13},
  number={1},
  pages={1--12},
  year={1994},
  publisher={IEEE}
}

@misc{vaire2025whitepaper,
  author       = {{Vaire Computing}},
  title        = {Software Whitepaper},
  howpublished = {\url{https://vaire.co/uploads/Software-Whitepaper.pdf}},
  year         = {2025}
  }

@article{soeken2022epfl,
  author  = {Soeken, Mathias and Riener, Heinz and Haaswijk, Winston and
             Testa, Eleonora and Schmitt, Bruno and Meuli, Giulia and
             Mozafari, Fereshte and De Micheli, Giovanni},
  title   = {The {EPFL} Logic Synthesis Libraries},
  journal = {arXiv preprint arXiv:1805.05121},
  year    = {2022}
}

@inproceedings{meuli2019pebbling,
  author    = {Meuli, Giulia and Soeken, Mathias and Roetteler, Martin and
               Bj{\o}rner, Nikolaj and De Micheli, Giovanni},
  title     = {Reversible Pebbling Game for Quantum Memory Management},
  booktitle = {Design, Automation and Test in Europe Conference (DATE)},
  pages     = {288--291},
  year      = {2019},
  doi       = {10.23919/DATE.2019.8715092}
 }

@inproceedings{soeken2012revkit,
  author    = {Soeken, Mathias and Frehse, Stefan and Wille, Robert and
               Drechsler, Rolf},
  title     = {{RevKit}: An Open Source Toolkit for the Design of Reversible
               Circuits},
  booktitle = {Reversible Computation (RC 2011)},
  series    = {Lecture Notes in Computer Science},
  volume    = {7165},
  pages     = {64--76},
  publisher = {Springer},
  address   = {Berlin, Heidelberg},
  year      = {2012},
  doi       = {10.1007/978-3-642-29517-1_6}
 }

@inproceedings{schmitt2022tweedledum,
  author    = {Schmitt, Bruno and De Micheli, Giovanni},
  title     = {{tweedledum}: A Compiler Companion for Quantum Computing},
  booktitle = {Design, Automation and Test in Europe Conference (DATE)},
  pages     = {7--12},
  year      = {2022}
 }

@article{adarsh2022syrec,
  author  = {Adarsh, Smaran and Burgholzer, Lukas and Manjunath, Tanmay and
             Wille, Robert},
  title   = {{SyReC} Synthesizer: An {MQT} Tool for Synthesis of Reversible
             Circuits},
  journal = {Software Impacts},
  volume  = {14},
  pages   = {100451},
  year    = {2022},
  doi     = {10.1016/j.simpa.2022.100451}
}

@inproceedings{wille2010syrec,
  title={SyReC: A programming language for synthesis of reversible circuits},
  author={Wille, Robert and Offermann, Sebastian and Drechsler, Rolf},
  booktitle={2010 Forum on Specification \& Design Languages (FDL 2010)},
  pages={184--189},
  year={2010},
  organization={IET}
}

@inproceedings{amy2017reverc,
  author    = {Amy, Matthew and Roetteler, Martin and Svore, Krysta M.},
  title     = {Verified Compilation of Space-Efficient Reversible Circuits},
  booktitle = {Computer Aided Verification (CAV 2017), Part {II}},
  series    = {Lecture Notes in Computer Science},
  volume    = {10427},
  pages     = {3--21},
  publisher = {Springer},
  year      = {2017},
  doi       = {10.1007/978-3-319-63390-9_1}
}

@misc{nangate45,
  author       = {{Nangate Inc.}},
  title        = {Nangate 45\,nm Open Cell Library},
  howpublished = {Library \texttt{NangateOpenCellLibrary}, revision 1.0,
                  characterized 10 February 2011 with NGLibraryCharacterizer
                  v2011.01},
  year         = {2011},
  note         = {Typical corner, \SI{25}{\celsius}, \SI{1.1}{\volt}.
                  Generated with Nangate Library Creator from the FreePDK45
                  base kit of North Carolina State
                  University~\cite{stine2007freepdk}.  Now distributed by
                  Silvaco through Si2~\cite{si2opencell}.}
}
